\pgfplotsset{compat=1.14}
\newtheorem{theorem}{Theorem}
\newtheorem{lemma}[theorem]{Lemma}
\begin{document}

%
\title{Robust Model Order Selection in Large Dimensional Elliptically Symmetric Noise}


\author{\IEEEauthorblockN{Eug\'enie Terreaux\IEEEauthorrefmark{1}, Jean-Philippe Ovarlez\IEEEauthorrefmark{1}\IEEEauthorrefmark{2}, and Fr\'ed\'eric Pascal\IEEEauthorrefmark{3}} \\


\IEEEauthorblockA{\IEEEauthorrefmark{1}CentraleSup\'elec-SONDRA, 3 rue Joliot-Curie, 91190 Gif-sur-Yvette, France \\ (e-mail: eugenie.terreaux@centralesupelec.fr)} \\

\IEEEauthorblockA{\IEEEauthorrefmark{2}ONERA, DEMR/TSI, Chemin de la Huni\`ere, 91120 Palaiseau, France } \\

\IEEEauthorblockA{\IEEEauthorrefmark{1}L2S/CentraleSup\'elec-Universit\'e Paris-Sud, 3 rue Joliot-Curie, 91190 Gif-sur-Yvette, France \\ }}

\maketitle

\begin{abstract}
This paper deals with model order selection in context of correlated noise. More precisely, one considers sources embedded in an additive Complex Elliptically Symmetric (CES) noise, with unknown parameters. The main difficultly for estimating the model order lies into the noise correlation, namely the scatter matrix of the corresponding CES distribution. In this work, to tackle that problem, one adopts a two-step approach: first, we develop two different methods based on a Toeplitz-structured model for estimating this unknown scatter matrix and for whitening the correlated noise. Then, we apply Maronna's $M$-estimators on the whitened signal to estimate the covariance matrix of the ``decorrelated'' signal in order to estimate the model order. The proposed methodology is based both on robust estimation theory as well as large Random Matrix Theory, and original results are derived, proving the efficiency of this methodology. Indeed, the main theoretical contribution is to derive consistent robust estimators for the covariance matrix of the signal-plus-correlated noise in a large dimensional regime and to propose efficient methodology to estimate the rank of signal subspace. Finally, as shown in the analysis, these results show a great improvement compared to the state-of-the-art, on both simulated and real hyperspectral images. 
\end{abstract}

\begin{IEEEkeywords}
Model order selection, RMT, correlated noise, CES distribution, robust estimation.
\end{IEEEkeywords}

\IEEEpeerreviewmaketitle

\section{Introduction}

\IEEEPARstart{M}{odel} order selection is a challenging issue in signal processing for example in wireless communication \cite{Julia13}, array processing \cite{Nadler10}, or other related problems \cite{Ottersten92}, \cite{Nadler11}.
Classically, for a white noise, statistical methods such as the one based on the application of the information theoretic criteria for model order selection, allow to estimate the model order thanks to eigenvalues and eigenvectors of the covariance matrix of the signal. This is the case of the Akaike Information Criterion (AIC) \cite{Akaike74} or the Minimum Description Length (MDL) \cite{Rissanen78, schwarz78}. Other examples are the problem of source localization \cite{schmidt86}, where the estimation of the signal subspace is done by the estimation of the eigenvalues of the covariance matrix, channel identification \cite{Meraim97}, waveform estimation \cite{Liu96} and many other parametric estimation problems. Though, all these methods are no more relevant for large dimensional and correlated data. Even if particular cases have been studied for correlated signals as in \cite{Bai98} or \cite{Silverstein95}, these methods can not be generalized for all kind of signals and a whitening step, when possible, can not be systematically set up \cite{Cawse11}. 
Moreover, the commonly used statistical model for this problem has not the same matrix properties when the data are large and when they are not: the covariance matrix is not correctly apprehended and the methods fail to estimate the model order, for example in \cite{Combernoux14}, in \cite{Nadler9} or in \cite{Farsi15}. In the field of model order selection for large dimensional regime, that is when the number of snapshots $N$ and the dimension of the signal $m$ tend to infinity with a constant positive ratio, and for white or whitened noise, the Random Matrix Theory (RMT) proposes methods to estimate the model order selection relying on the study of the largest eigenvalues distribution of the covariance matrix \cite{Couillet13}. The RMT introduces new methodologies which correctly handle the statistical properties of large matrices thanks to a statistical and probability approach: see \cite{Couillet11} for a review of this theory, \cite{Kritchman9} for a general detection algorithm, \cite{Hachem13} for an adapted MUSIC detection algorithm, \cite{Pascal16} for applications to radar detection and \cite{Cawse10} for an application on hyperspectral imaging. When the noise is spatially correlated, it is still possible to estimate the model order for example by evaluating the distance between the eigenvalues of the covariance matrix \cite{Vinogradova13}. Nevertheless, these methods require a threshold that has no explicit expression and can be fastidious to obtain \cite{Terreaux15}. 
In addition to the problem of the large dimension and the correlation, another recurrent problem in signal processing is the non-Gaussianity of the noise. To be less dependent of the noise statistic, that is for the model order selection not to be degraded with a noise more or less sightly different than targeted, robust methods for model order selection have been developed \cite{Breloy16} in hyperspectral imaging \cite{Halimi16}. Nevertheless, these methods depend on unknown parameters \cite{Julia13} or are not adapted for large data. Recent results in RMT enable to correctly estimate the covariance matrix for textured signals \cite{Couillet15b}. But the correlation matrix is assume to be known and the signal is whitened before processed. \\
In this works, one considers a Complex Elliptically Symmetric (CES) noise. The CES distributions modelling is often exploited in signal processing, because of its flexibility, that is the ability to model a large panel of random signals.  The signal can be split in two parts: a texture and a speckle.  They are rather often used in various fields, as in \cite{Ovarlez11} for hyperspectral imaging, or \cite{Gini97} for radar clutter echoes modelling. This article deals with large dimansional non-Gaussian data, and proposes a robust method to estimate the model order. The robustness of our method comes from the robust estimation of the covariance matrix, with a Maronna $M$-estimator \cite{Maronna76} which assigns different weights according to the Mahalanobis distance between the signals received by the different sensors.  It is a generalization of \cite{Couillet15b} and \cite{Vinogradova14} to the case of left hand side correlation (with an unknown covariance matrix). Moreover, this article proposes a new algorithm to estimate the model order.  \\ 
In a first part, an estimator for the correlation matrix is presented: the toeplitzified Sample Covariance Matrix (SCM), that is, the SCM enforced to be of Toeplitz form \cite{Vallet14}. Indeed, as the covariance matrix is supposed to be Toeplitz, the SCM is toeplitzified as in \cite{Couillet15c} to enhance the estimation.  The data are then whitened with this Toeplitz  matrix and a robust Maronna $M$-estimator of the covariance matrix is then used after the data whitening. This robust estimation is studied and a threshold on its eigenvalues can be derived to select the model order. A second part presents the same procedure for the toeplitzified Fixed-Point (FP) estimator \cite{Tyler87} and \cite{Pascal8}. The third part presents some simulations on both simulated and real hyperspectral images. Proofs of the main results are postponed in the appendices. \\
\textit{Notations}: Matrices are in bold and capital, vectors in bold. Let $\mathbf{X}$ be a square matrix of size $s\times s$, $(\lambda)_{i}(\mathbf{X})$, $i \in \rrbracket 1,...,s \llbracket$, are the eigenvalues of $\mathbf{X}$. $Tr(\mathbf{X})$ is the trace of the matrix $\mathbf{X}$. $\left\Vert \mathbf{X} \right\Vert$ stands for the spectral norm. Let $\mathbf{A}$ be a matrix, $\mathbf{A}^T$ is the transpose of $\mathbf{A}$ and $\mathbf{A}^H$ the Hermitian transpose of $\mathbf{A}$. $\mathbf{I}_n $ is the $n \times n$ identity matrix. For any $m-$vector $\mathbf{x}$, $\mathcal{L} : \, \mathbf{x} \mapsto  \mathcal{L}(\mathbf{x})$ is the $m \times m$ matrix defined as the Toeplitz operator : $\left(  [\mathcal{L} (\mathbf{x})]_{i,j} \right) _{i\leq j} = x_{i-j}$ and $\left( [\mathcal{L} (\mathbf{x})]_{i,j} \right)_{i>j} = x_{i-j}^\ast$. For any matrix $\mathbf{A}$ of size $m \times m$, $\mathcal{T}(\mathbf{A})$ represents the matrix $\mathcal{L}(\check{\mathbf{a}})$ where $\check{\mathbf{a}}$ is a vector for which each component  $\check{\mathbf{a}}_{i ,  \, 0<i<m-1}$ contains the sum  of the $i-$th diagonal of $\mathbf{A}$ divided by $m$. For $x \in \mathds{R}$, $\delta_x$ is the Dirac measure at $x$. For any complex $z$, $z^{\star}$ is the conjugate of $z$. The notation \textit{dist} stands for the distance associated to the $L_1$ norm. $\mathrm{supp}$ is the support of a set. Eventually, $\mathcal{R}e$ and $\mathcal{I}m$ stand respectively for the real and the imaginary part for a complex number. The notation $\overset{a.s.}{\longrightarrow}$ means "tends to almost surely".

\section{Model and Assumptions}
\label{sec::2}
This section introduces the model as well as the general assumptions needed to derive the results. Let us consider the following general sources-plus-noise model. Let $\mathbf{Y} = [\mathbf{y}_0, \ldots, \mathbf{y}_{N-1}] $ be a matrix of size $m$ $\times$ $N$, containing $N$ observations $\left\{\mathbf{y}_i\right\}_{i\in \llbracket 0,N-1\rrbracket}$ of size $m$, constituted of $p$ mixed sources corrupted with an additive noise:  
\begin{equation}
\mathbf{y}_i = \displaystyle \sum_{j=1}^p s_{i,j} \, \mathbf{m}_j + \sqrt{\tau_i} \,\mathbf{C}^{1/2} \, \mathbf{x}_i\, ,  \, \hspace{0.3cm} i\in \llbracket 0,N-1 \rrbracket \, ,
\label{modele}
\end{equation}
which can be rewritten as
\begin{equation}
\mathbf{Y} = \mathbf{MS} + \mathbf{C}^{1/2}\,\mathbf{ X}\,\mathbf{ T}^{1/2} \, ,
\label{modele2}
\end{equation}

where the $\left\{\tau_i\right\}_{i\in \llbracket 0, N-1 \rrbracket}$ are positive random variables, and  $\mathbf{T}$ is the $N \times N$-diagonal matrix containing the $\left\{\tau_i\right\}_{i\in \llbracket 0, N-1 \rrbracket}$.  Moreover, the $m \times p$ matrix $\mathbf{M}$ with elements $M_{i,j} = (\mathbf{M})_{i,j} = (\mathbf{m}_j)_i $ is referred to as the mixing matrix and contains the $p$ vectors of the sources.  In this work, the additive noise is modelled thanks to the general family of Complex Elliptically Symmetric (CES) distributions \cite{Kelker70, Yao73} (see also \cite{Ollila12} for more details on CES as well as their use in signal processing). Thus, each component of the noise is characterized by a random vector $\mathbf{x}_i$ uniformly distributed on a sphere times an independent positive random scalar $\tau_i$ with unspecified  probability distribution function. The left hand side spectral correlation is handled by the scatter matrix $\mathbf C$.   

Each element  $s_{i,j}$ of the $p \times N$ matrix $\mathbf{S}$ corresponds to the power variation of each source in the received vector. This matrix can be written  $\mathbf{S} = \boldsymbol{\delta}^H \, \mathbf{\Gamma}^{1/2}$ where $\boldsymbol{\delta}$ is a $N \times p$ random matrix, independent of $\mathbf{X}$, whose elements are normally distributed with zero-mean and unit variance. $\mathbf{\Gamma}$ is a $N \times N$ Hermitian covariance matrix. 
Eventually, $\mathbf{C} = \mathcal{L} \left( [c_{0},\ldots,c_{m-1}]^T\right)$ is a $m \times m$ Hermitian nonnegative definite Toeplitz matrix:
\vspace{-0.5em}
\begin{equation*}
\mathbf{C}=
\begin{pmatrix} 
c_0 & c_1 & ... & c_{m-1} \\
c_{1}^{\star} & c_0 &... & c_{m-2} \\
 ...\\
c_{m-1}^{\star} & c_{m-2}^{\star} & ... & c_{0} 
\end{pmatrix} \, . 
\end{equation*} 
\vspace{-0.5em}

In the sequel, we will consider the following assumptions: \\
{\bf Assumption 1}: One assumes the usual random matrix regime, \textit{i.e.}:  $N \rightarrow \infty$, $m \rightarrow \infty $ and $c_N = \displaystyle \frac{m}{N} \rightarrow c > 0$, \\
\noindent {\bf Assumption 2}: for matrices $\mathbf{C}$, $\mathbf{T}$ and $\mathbf{X}$ of equation \eqref{modele2}, one has: 
\begin{itemize}
\item[$\bullet$] $\mathrm{dist}(\lambda_i(\mathbf{C}),\mathrm{supp}(\nu)) \rightarrow 0$ with $\nu$ the limit  of $ \frac{1}{N} \,\sum \delta_{\lambda_i(\mathbf{C})}$ when $N \rightarrow \infty$,
\item[$\bullet$] $\left\{c_k\right\}_{k\in \llbracket 0, m-1 \rrbracket}$ are absolutely summable coefficients, such that $c_0 \neq 0$, 
\item[$\bullet$] The random measure $\mu_N = \displaystyle \frac{1}{N} \displaystyle \sum_{i=1}^{N} \delta_{\tau_i} $ satisfies $\displaystyle\int \tau \, \mu_N(d\tau) \overset{a.s.}{\longrightarrow} 1$,
\item[$\bullet$] $\mathbf{X}$ is a white noise, with independent and  identically distributed entries with zero-mean and with unit variance, 
\item[$\bullet$] $\left\Vert \mathbf{\Gamma} \right\Vert < \infty $ and $\left\Vert \mathbf{M} \right\Vert < \infty$. 
\end{itemize}
\noindent {\bf Assumption 3}: \\
$\bullet$ In each column of $\mathbf{M}$, the coefficients are absolutely summable that is, for all fixed $j$, $\displaystyle \sum_{i=1}^m \vert M_{i,j} \vert < \infty$. This is a common assumption in several applications and especially in hyperspectral imaging.\\   
$\bullet$ $\mathbf{\Gamma}$ has coefficients absolutely summable. \\  
\noindent {\bf Assumption 4}: Let $[\mathbf{Y}]_{i,j} = [\mathbf{y}_i]_j$, then the coefficients $[\mathbf{y}_i]_j$ are absolutely summable, that is, for a fixed $i$, $\displaystyle \sum_{j} \left\vert [\mathbf{y}_i]_j \right\vert$ exists.
 \vspace{-0.5em}

\section{Model Order Selection: a Gaussian Approach}
\label{sec::3}
 
In this section, the consistency of the SCM is used to whiten the signal and to estimate the model order thanks to a Maronna $M$-estimator. The step which consists in directly evaluating  the model order with a  Maronna $M$-estimator  has been already studied in \cite{Couillet15b} for the special case of {\it spiked} model with CES white noise. In this work, one considers the more challenging problem of correlated CES  noise.    

\subsection{Whitening Step}

The noise being correlated, one proposes here to whiten it using the Toeplitz structure of the noise covariance matrix. As a reminder, the model under consideration is the following: 
\begin{equation}
\mathbf{Y} = \mathbf{M} \,  \boldsymbol{\delta}^H \, \mathbf{\Gamma}^{1/2}  + \mathbf{C}^{1/2}\,\mathbf{ X}\,\mathbf{ T}^{1/2}\, .
\label{modele3}
\end{equation}

Let $\mathbf{Y}$ be written as $\mathbf{Y} = [\mathbf{y}_0, ... , \mathbf{y}_{N-1}]$ where $\mathbf{y}_j = \left(y_{0,j}, y_{1,j}, \ldots,y_{m-1,j}\right)^T$, $j\in \llbracket 0, N-1 \rrbracket$  and let $\widetilde{\mathbf{C}}_{SCM}$ be a biased Toeplitz estimation of the covariance matrix $\mathbf{C}$ such that : 
\begin{equation}
\left[ \tilde{\mathbf{C}}_{SCM} \right]_{i,j} = \left[ \mathcal{L} (\tilde{\mathbf{c}}_{SCM})\right]_{i,j}  
\label{eq1}
\end{equation} 

with $  \tilde{c}_{SCM,k} = \frac{1}{m\,N} \, \displaystyle \sum_{i=0}^{m-1} \sum_{j=0}^{N-1} y_{i,j} \, y_{i+k,j}^\star \, \mathds{1}_{0 \leq i+k <  m}$

It can be equivalently stated as $\widehat{\mathbf{C}}_{SCM} = \displaystyle\frac{1}{N} \mathbf{Y} \, \mathbf{Y}^H$ and $\widetilde{\mathbf{C}}_{SCM} = \mathcal{T}(\widehat{\mathbf{C}}_{SCM})$. The following theorem establishes the consistency of $\widetilde{\mathbf{C}}_{SCM}$.

\begin{theorem}[Consistency of $\widetilde{\mathbf{C}}_{SCM}$] \
Under above assumptions, one has the following result:
\begin{equation}
\left\Vert   \widetilde{\mathbf{C}}_{SCM}  -  \mathds{E}[\tau] \,\mathbf{C} \right\Vert \overset{a.s.}{\longrightarrow} 0 \, . 
\label{eqth1}
\end{equation}
The covariance matrix defined by $\check{\mathbf{C}}_{SCM} = \displaystyle \frac{1}{\mathds{E}(\tau)} \, \widetilde{\mathbf{C}}_{SCM}$ characterizes the biased Toeplitz estimation of $\mathbf{C}$. 
\end{theorem}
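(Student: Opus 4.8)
The plan is to expand the sample covariance matrix and exploit the linearity of $\mathcal{T}$. From \eqref{modele3}, $\widehat{\mathbf{C}}_{SCM}=\frac{1}{N}\mathbf{Y}\mathbf{Y}^H=\mathbf{A}_N+\mathbf{B}_N+\mathbf{B}_N^H+\mathbf{D}_N$, where $\mathbf{A}_N=\frac{1}{N}\mathbf{M}\boldsymbol{\delta}^H\mathbf{\Gamma}\boldsymbol{\delta}\mathbf{M}^H$ is the signal block, $\mathbf{B}_N=\frac{1}{N}\mathbf{M}\boldsymbol{\delta}^H\mathbf{\Gamma}^{1/2}\mathbf{T}^{1/2}\mathbf{X}^H\mathbf{C}^{1/2}$ a cross block, and $\mathbf{D}_N=\frac{1}{N}\mathbf{C}^{1/2}\mathbf{X}\mathbf{T}\mathbf{X}^H\mathbf{C}^{1/2}$ the noise block. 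Since $\mathbf{C}$ is Toeplitz, $\mathcal{T}(\mathbf{C})=\mathbf{C}$, so by the triangle inequality it suffices to show $\|\mathcal{T}(\mathbf{A}_N)\|,\|\mathcal{T}(\mathbf{B}_N)\|,\|\mathcal{T}(\mathbf{B}_N^H)\|\overset{a.s.}{\longrightarrow}0$ and $\|\mathcal{T}(\mathbf{D}_N)-\mathds{E}[\tau]\,\mathbf{C}\|\overset{a.s.}{\longrightarrow}0$. Throughout, the only elementary ingredient is that for any Hermitian Toeplitz matrix $\mathcal{L}(\mathbf{v})$ one has $\|\mathcal{L}(\mathbf{v})\|\le|v_0|+2\sum_{k\ge1}|v_k|$ (by Cauchy--Schwarz on the shifted sub-vectors of a unit vector in $\mathbf{u}^H\mathcal{L}(\mathbf{v})\mathbf{u}$); equivalently, $\|\mathcal{T}(\mathbf{E})\|$ is bounded by the $\ell^1$-norm of the sequence of diagonal averages of $\mathbf{E}$.

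For the signal block, write $\mathbf{A}_N=\mathbf{M}\,\mathbf{\Theta}_N\,\mathbf{M}^H$ with $\mathbf{\Theta}_N=\frac{1}{N}\boldsymbol{\delta}^H\mathbf{\Gamma}\boldsymbol{\delta}$; since $p$ is fixed, $\frac{1}{N}\boldsymbol{\delta}^H\boldsymbol{\delta}\overset{a.s.}{\longrightarrow}\mathbf{I}_p$ and $\|\mathbf{\Gamma}\|<\infty$, so $\|\mathbf{\Theta}_N\|$ is a.s. bounded. The $k$-th diagonal average of $\mathbf{A}_N$ is $\frac{1}{m}\sum_i\sum_{a,b}M_{i,a}(\mathbf{\Theta}_N)_{a,b}M_{i+k,b}^\star$, and with $\sum_k\big|\sum_i M_{i,a}M_{i+k,b}^\star\big|\le\big(\sum_i|M_{i,a}|\big)\big(\sum_j|M_{j,b}|\big)<\infty$ (Assumption 3) the $\ell^1$-norm of these averages is $O(1/m)$; hence $\|\mathcal{T}(\mathbf{A}_N)\|\overset{a.s.}{\longrightarrow}0$. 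For the cross block, $\mathbf{B}_N=\mathbf{M}\,\mathbf{W}_N$ with $\mathbf{W}_N=\frac{1}{N}\boldsymbol{\delta}^H\mathbf{\Gamma}^{1/2}\mathbf{T}^{1/2}\mathbf{X}^H\mathbf{C}^{1/2}$ of size $p\times m$; the same manipulation bounds the $\ell^1$-norm of its diagonal averages by $\frac{1}{m}\sum_{a=1}^p\big(\sum_i|M_{i,a}|\big)\sum_j|(\mathbf{W}_N)_{a,j}|$, and a second-moment estimate of the entries of $\mathbf{W}_N$ (using $[\mathbf{C}]_{jj}=c_0$ and standard bounds on $\|\mathbf{\Gamma}\|$, $\|\mathbf{T}\|$ and $\frac{1}{N}\boldsymbol{\delta}^H\boldsymbol{\delta}$) gives $\sum_j|(\mathbf{W}_N)_{a,j}|=O(m/\sqrt{N})$, so $\|\mathcal{T}(\mathbf{B}_N)\|=\|\mathcal{T}(\mathbf{B}_N^H)\|=O(1/\sqrt{N})\overset{a.s.}{\longrightarrow}0$.

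The noise block is the heart of the proof. Conditioning on $\mathbf{T}$ and using $\mathds{E}[\mathbf{X}\mathbf{T}\mathbf{X}^H\mid\mathbf{T}]=(\mathrm{Tr}\,\mathbf{T})\mathbf{I}_m$, write $\mathbf{D}_N=\frac{\mathrm{Tr}\,\mathbf{T}}{N}\mathbf{C}+\mathbf{R}_N$ with $\mathbf{R}_N:=\frac{1}{N}\mathbf{C}^{1/2}\mathbf{X}\mathbf{T}\mathbf{X}^H\mathbf{C}^{1/2}-\frac{\mathrm{Tr}\,\mathbf{T}}{N}\mathbf{C}$ of zero conditional mean. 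Since $\frac{1}{N}\mathrm{Tr}\,\mathbf{T}=\int\tau\,\mu_N(d\tau)\overset{a.s.}{\longrightarrow}\mathds{E}[\tau]$ (Assumption 2) and $\|\mathbf{C}\|\le\sum_k|c_k|<\infty$, one has $\big\|\frac{\mathrm{Tr}\,\mathbf{T}}{N}\mathbf{C}-\mathds{E}[\tau]\mathbf{C}\big\|\overset{a.s.}{\longrightarrow}0$, so the whole statement reduces to $\|\mathcal{T}(\mathbf{R}_N)\|\overset{a.s.}{\longrightarrow}0$. Writing $\check r_k=\frac{1}{m}\sum_i(\mathbf{R}_N)_{i,i+k}$, so that $\mathcal{T}(\mathbf{R}_N)=\mathcal{L}(\check{\mathbf{r}})$, a direct second-moment computation --- conditional on $\mathbf{T}$, expanding the quartic in $\mathbf{X}$ and using $\mathbf{C}^{1/2}(\mathbf{C}^{1/2})^H=\mathbf{C}$ and $\sum_d|c_d|^2<\infty$ --- yields $\mathds{E}|\check r_k|^2=O(1/(mN))$ uniformly in $k$, hence $\sum_{|k|\le m-1}\mathds{E}|\check r_k|^2=O(1/N)$. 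Because this total variance is $o(1)$ one expects $\|\mathcal{L}(\check{\mathbf{r}})\|=O(\sqrt{(\log m)/N})$; once this is established (see below), Borel--Cantelli along $m=m_n$ (using a higher moment of the entries of $\mathbf{X}$, plus monotonicity for the intermediate sizes) gives $\|\mathcal{T}(\mathbf{R}_N)\|\overset{a.s.}{\longrightarrow}0$. Collecting the four estimates proves \eqref{eqth1}, and the statement on $\check{\mathbf{C}}_{SCM}$ follows by dividing by $\mathds{E}[\tau]$.

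The crux --- and the main obstacle --- is precisely this last step, i.e. passing from $\sum_k\mathds{E}|\check r_k|^2=O(1/N)$ to $\|\mathcal{L}(\check{\mathbf{r}})\|=o(1)$. The naive bound $\|\mathcal{L}(\check{\mathbf{r}})\|\le\sum_k|\check r_k|$ is useless here: summing the $O(1/\sqrt{mN})$ individual terms over the $\sim2m$ diagonals only gives $O(1)$; and $\|\mathbf{R}_N\|$ itself does not help, since (like $\frac{1}{N}\mathbf{X}\mathbf{X}^H$) it retains an order-one spread of eigenvalues. The convergence must therefore come entirely from the smoothing effect of $\mathcal{T}$ and genuine cancellation among diagonals, which I would capture either by a moment method --- controlling $\mathds{E}\,\mathrm{Tr}\big[\mathcal{L}(\check{\mathbf{r}})^{2p}\big]$ with $p=p(m)\to\infty$ slowly through a partition/graph count, the $\sqrt{\log m}$ being the price of the $2p$-th root --- or, in the Gaussian case, by a Hanson--Wright/Gaussian-chaos concentration inequality for the quadratic form $\mathbf{u}^H\mathcal{L}(\check{\mathbf{r}})\mathbf{u}$ combined with an $\epsilon$-net over the unit sphere and a union bound. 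In both routes the absolute summability of $\{c_k\}$ (Assumption 2) is essential: it makes $\sum_d|c_d|^2<\infty$ available in the variance estimate and confines the effective support of $(\check r_k)_k$ to $O(1)$ diagonals up to an arbitrarily small tail. A further mild technical point is that $\mathbf{C}^{1/2}$ is not itself Toeplitz, so one must check that its entries inherit the required off-diagonal decay from $\sum_k|c_k|<\infty$.
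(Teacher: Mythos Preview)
Your decomposition into signal, cross and noise blocks is exactly the one the paper uses, and your treatment of the signal block is close in spirit to theirs. The genuine gap is the noise block: you correctly flag that the $\ell^1$ bound $\|\mathcal{L}(\check{\mathbf r})\|\le\sum_k|\check r_k|$ is useless here, but you then reach for tools (high-moment trace expansions, Hanson--Wright plus an $\epsilon$-net on the unit sphere of $\mathbb{C}^m$) that are much heavier than necessary and that you do not actually carry out. An $\epsilon$-net on $S^{2m-1}$ has cardinality exponential in $m$, so for your union bound to close you would need concentration at rate $e^{-cm}$ for each fixed unit vector, and it is not clear your quadratic form has that.

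The idea you are missing is that the crude bound $\|\mathcal{L}(\mathbf v)\|\le\sum_k|v_k|$ is not the right form of Gray's lemma. The sharper statement is
\[
\|\mathcal{L}(\mathbf v)\|\;\le\;\sup_{\lambda\in[0,2\pi)}\Big|\sum_k v_k\,e^{ik\lambda}\Big|
\;=\;\sup_{\lambda}\big|\mathbf d_m(\lambda)^H\,\mathbf E\,\mathbf d_m(\lambda)\big|,
\]
where $\mathbf E$ is the matrix whose diagonal averages are the $v_k$ and $\mathbf d_m(\lambda)=m^{-1/2}(1,e^{-i\lambda},\ldots,e^{-i(m-1)\lambda})^T$. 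So controlling $\|\mathcal T(\mathbf R_N)\|$ reduces to a sup of the quadratic form over the \emph{one-parameter} family $\{\mathbf d_m(\lambda):\lambda\in[0,2\pi)\}$, not over the whole sphere. The paper (following Vinogradova et al.) then discretises $\lambda$ on a grid of size $N^\beta$, proves exponential concentration of $\mathbf d_m(\lambda_i)^H\mathbf R_N\,\mathbf d_m(\lambda_i)$ at each grid point (this is a single quadratic form in the entries of $\mathbf X$, so standard sub-exponential tails apply directly), bounds the Lipschitz variation of $\lambda\mapsto\hat\gamma_m(\lambda)$ between grid points, and takes a union bound over the $N^\beta$ points. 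This is exactly your $\epsilon$-net intuition, but on a one-dimensional circle, so the net has polynomial rather than exponential cardinality and the union bound closes easily. The paper treats the cross and signal terms by the same Fourier-domain device rather than by your $\ell^1$ diagonal-average bounds; in particular your estimate $\sum_j|(\mathbf W_N)_{a,j}|=O(m/\sqrt N)$ is only an $L^1$ statement and would need further work to be made almost sure.
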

\begin{proof}
The complete proof is in Appendix A. 
\end{proof}


This estimator is then used to whiten the samples: 
\begin{equation}
\check{\mathbf{Y}}_{wSCM} = \check{\mathbf{C}}_{SCM}^{-1/2} \,\mathbf{M} \, \boldsymbol{\delta}^H \mathbf{\Gamma}^{1/2} + \check{\mathbf{C}}_{SCM}^{-1/2}  \, \mathbf{C}^{1/2}\,\mathbf{X} \,\mathbf{ T}^{1/2}\, .
\label{scm_blanc}
\end{equation}

In practice, $\mathds{E}(\tau)$ can be empirically estimated or is supposed to be equal to $1$.


\subsection{Estimation of the covariance matrix}

Once the signal $\mathbf{Y}$ has been whitened, a robust estimation of the (unobservable) covariance matrix $\mathds{E}\left[ \mathbf{X}\,\mathbf{X}^H \right]$ can be performed through the samples $\check{\mathbf{Y}}_{wSCM}$. This estimation is said to be robust in the sense that it can annihilate the high values of the texture $\tau$, which can alter the structure quality of the estimated covariance matrix. The chosen estimator is a Maronna's $M$-estimator \cite{Maronna76}, which gives good performances for CES signals.  This robust estimation of the scatter matrix is therefore a fixed-point estimator noted $\check{\mathbf{\Sigma}}_{SCM}$ and defined through $\check{\mathbf{Y}}_{wSCM} = [\check{\mathbf{y}}_{wS0}, ...,\check{\mathbf{y}}_{wS \, N-1} ]$ as the unique solution of the following equation: 
\begin{equation}
\mathbf{\Sigma} = \frac{1}{N} \displaystyle \sum_{i=0}^{N-1} u\left(\frac{1}{m} \, \check{\mathbf{y}}_{wSi}^H \,\mathbf{\Sigma}^{-1} \,\check{\mathbf{y}}_{wSi} \right)\, \check{\mathbf{y}}_{wSi} \check{\mathbf{y}}_{wSi}^H \, ,
\label{eqsigma1}
\end{equation}
under \begin{itemize}
\item[(i)] $u$: [0, $+\infty$) $\mapsto$ (0, $+\infty$) nonnegative, continuous et non-increasing function derived thanks to the probability distribution function of the CES (for the complete calculus, see \cite{Mahot12}), 
\item[(ii)] $\phi: x \mapsto x \, u(x)$ increasing and bounded, with $ \lim\limits_{\substack{x \to \infty}} \phi(x) = \Phi_{\infty} > 1$, 
\item[(iii)] $\underset{N \longrightarrow \infty}{\text{lim}} \, c_N < \,\Phi_{\infty}^{-1}$.
\end{itemize}

Next step consists in evaluating the rank of the signal subspace from this matrix.

\subsection{Model order selection}

The mean idea  is to study the eigenvalues distribution of this Maronna $M$-estimator to find the model order or the number of sources. Indeed, in a non-RMT regime, that is if Assumption 1 is not satisfied, and in the case of a white Gaussian noise, it is possible to set a threshold such that no eigenvalues of the noise can be found upon. If eigenvalues are found beyond this threshold, they are due to sources. Here, under Assumption 1 and thanks to \cite{Bai98} in the case of a white Gaussian noise plus an additive signal, no eigenvalues outside the support of the Marchenko-Pastur law can belong to the noise. However, due to the presence of the texture matrix $\mathbf{T}$, some eigenvalues could exist upon the right edge of the Marchenko-Pastur distribution support. A more precise threshold can then be derived to ensure that no eigenvalue found upon are due to the noise. However, it does not ensure that all the sources eigenvalues will be located beyond this threshold. Indeed, this depends of the sources Signal to Noise Ratio (SNR).  \\

 The proposed estimator $\check{\mathbf{\Sigma}}_{SCM}$ has so to be analysed for CES distribution. However, some characteristics such as its eigenvalues distribution can not be easily and theoretically studied when both $m$ and $N \rightarrow \infty $ as the term $ u\left(\displaystyle \frac{1}{m} \, \mathbf{\check{y}}_{wSi}^H\, \check{\mathbf{\Sigma}}_{SCM}^{-1} \,\mathbf{\check{y}}_{wSi} \right)$ is not independent on $\mathbf{\check{y}}_{wSi}$. To fill this gap, the following white model \cite{Couillet15b} is considered : 
\begin{equation}
\mathbf{Y}_w = [\mathbf{y}_{w0},  \ldots, \mathbf{y}_{w\,N-1} ] = \mathbf{C}^{-1/2} \,\mathbf{M} \, \boldsymbol{\delta}^H \mathbf{\Gamma}^{1/2} +  \mathbf{X}\, \mathbf{T}^{1/2} \, .
\label{scm_normal}
\end{equation} 

Notice that the difference between models \eqref{scm_normal} and \eqref{scm_blanc} lies in the empirical whitening. Then, 

\begin{equation}
\hat{\mathbf{S}} \overset{\triangle}{=} \frac{1}{N} \displaystyle \sum_{i=0}^{N-1}  v\left( \tau_i \, \gamma \right) \, \mathbf{y}_{wi} \, \mathbf{y}_{wi}^{H}  \, ,
\label{matrixS}
\end{equation}

\noindent which can be rewritten as $\hat{\mathbf{S}} = \mathbf{Y}_w \,  \mathbf{D}_{\nu}  \, \mathbf{Y}_w^H$ where  $\mathbf{D}_{\nu}$ a diagonal matrix containing the $\left\{v(\tau_i \,\gamma)\right\}_i$, where:
\begin{itemize} 
\item[(i)] $g: x \mapsto \displaystyle\frac{x}{1-c \, \phi(x)}$,
\item[(ii)] $v: x \mapsto u \circ g^{-1}(x)$, $\psi: x \mapsto x\, v(x)$, with $\displaystyle\lim_{x\rightarrow \infty} \psi(x) = \displaystyle \frac{\Phi_{\infty}}{1 - c \, \Phi_{\infty}}$,
\item[(iii)] $\gamma$ is the unique solution, if defined, of the equation in $\gamma$: $1 = \displaystyle \frac{1}{N}  \, \sum_{i=1}^N \frac{\psi\left( \tau_i \, \gamma \right)}{1+ c \, \psi( \tau_i \,\gamma)}$.
\end{itemize}
Moreover, it is proved in \cite{Couillet15b} that: 
\begin{equation}
\left\Vert \hat{\mathbf{\Sigma}} - \hat{\mathbf{S}} \right\Vert \overset{a.s.}{\longrightarrow} 0\, .
\label{eqcouillet2}
\end{equation}
 
\vspace{-1em}
 
where $\hat{\mathbf{\Sigma}}$ is the unique solution (if it exists) of:
\begin{equation*}
\mathbf{\Sigma} = \frac{1}{N} \displaystyle \sum_{i=0}^{N-1}  u\left(\frac{1}{m} \,  \mathbf{y}_{wi}^H \,\mathbf{\Sigma}^{-1} \,\mathbf{y}_{wi}\right)\, \mathbf{y}_{wi} \, \mathbf{y}_{wi}^H\, .
\end{equation*}
 
\vspace{-1em} 
 
The distribution of the eigenvalues of $\hat{\mathbf{S}}$ can hence be more efficiently studied, the terms $\left[ v\left( \tau_i \, \gamma\right) \right]_{i\in \llbracket 0, \, N-1 \rrbracket}$ being independent of the  $\left\{\mathbf{x}_i\right\}_i$. The goal being to study  $\check{\mathbf{\Sigma}}_{SCM}$ which is the unique solution of \eqref{eqsigma1}, the following theorem enables to establish the relationship between $\check{\mathbf{\Sigma}}_{SCM}$ and  $\hat{\mathbf{S}}$ thanks to \eqref{eqcouillet2}.

\begin{theorem}[Convergence of $\check{\mathbf{\Sigma}}_{SCM}$] \

With previous definitions, one has the following convergence:
\begin{equation}
\left\Vert \check{\mathbf{\Sigma}}_{SCM} - \hat{\mathbf{S}} \right\Vert \overset{a.s.}{\longrightarrow} 0\, .
\label{eqcouilletmod} 
\end{equation}

\end{theorem}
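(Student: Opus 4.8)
The plan is to reduce Theorem~2 to Theorem~1, to the affine equivariance of Maronna's $M$-functional, and to the convergence \eqref{eqcouillet2} borrowed from \cite{Couillet15b}. Since
\[
\left\Vert \check{\mathbf{\Sigma}}_{SCM} - \hat{\mathbf{S}} \right\Vert \;\leq\; \left\Vert \check{\mathbf{\Sigma}}_{SCM} - \hat{\mathbf{\Sigma}} \right\Vert + \left\Vert \hat{\mathbf{\Sigma}} - \hat{\mathbf{S}} \right\Vert ,
\]
and the last term vanishes almost surely by \eqref{eqcouillet2}, it is enough to prove $\left\Vert \check{\mathbf{\Sigma}}_{SCM} - \hat{\mathbf{\Sigma}} \right\Vert \overset{a.s.}{\longrightarrow} 0$, where $\hat{\mathbf{\Sigma}}$ is the $M$-estimator attached to the ideally whitened samples $\mathbf{y}_{wi}$ of \eqref{scm_normal}. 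The two estimators $\check{\mathbf{\Sigma}}_{SCM}$ and $\hat{\mathbf{\Sigma}}$ solve the \emph{same} fixed-point equation, only for data differing through the whitening matrix, so the whole point is to control that perturbation.

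First, by Theorem~1 and $\check{\mathbf{C}}_{SCM} = \mathds{E}(\tau)^{-1}\,\widetilde{\mathbf{C}}_{SCM}$, one has $\left\Vert \check{\mathbf{C}}_{SCM} - \mathbf{C} \right\Vert \overset{a.s.}{\longrightarrow} 0$. Under Assumption~2 the Hermitian matrix $\mathbf{C}$ has its spectrum in a fixed compact subset of $(0,+\infty)$ (bounded above by absolute summability of the $c_k$, bounded away from zero since $0\notin\mathrm{supp}(\nu)$), so for all $N$ large enough $\check{\mathbf{C}}_{SCM}$ is almost surely positive definite, the matrix $\mathbf{B}_N \overset{\triangle}{=} \check{\mathbf{C}}_{SCM}^{-1/2}\,\mathbf{C}^{1/2}$ is well defined, and by uniform continuity of $\mathbf{A}\mapsto\mathbf{A}^{1/2}$ and $\mathbf{A}\mapsto\mathbf{A}^{-1}$ on that spectral window, $\left\Vert \mathbf{B}_N - \mathbf{I}_m \right\Vert \overset{a.s.}{\longrightarrow} 0$; in particular $\left\Vert \mathbf{B}_N \right\Vert$ is almost surely eventually bounded.

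Next comes the key exact identity. Comparing \eqref{scm_blanc} with \eqref{scm_normal} gives $\check{\mathbf{Y}}_{wSCM} = \mathbf{B}_N\,\mathbf{Y}_w$, i.e. $\check{\mathbf{y}}_{wSi} = \mathbf{B}_N\,\mathbf{y}_{wi}$ for every $i$. Equation \eqref{eqsigma1} is equivariant under invertible linear transformations of the data: because $\tfrac{1}{m}\,(\mathbf{B}_N\mathbf{y}_{wi})^H (\mathbf{B}_N\mathbf{\Sigma}\mathbf{B}_N^H)^{-1}(\mathbf{B}_N\mathbf{y}_{wi}) = \tfrac{1}{m}\,\mathbf{y}_{wi}^H\mathbf{\Sigma}^{-1}\mathbf{y}_{wi}$, if $\mathbf{\Sigma}$ solves \eqref{eqsigma1} for the data $\{\mathbf{y}_{wi}\}_i$ then $\mathbf{B}_N\mathbf{\Sigma}\mathbf{B}_N^H$ solves it for $\{\mathbf{B}_N\mathbf{y}_{wi}\}_i = \{\check{\mathbf{y}}_{wSi}\}_i$. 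Conditions (i)--(iii) with Assumption~1 and $\lim c_N < \Phi_\infty^{-1}$ guarantee existence and uniqueness of both solutions (Maronna / \cite{Pascal8,Couillet15b}), hence $\check{\mathbf{\Sigma}}_{SCM} = \mathbf{B}_N\,\hat{\mathbf{\Sigma}}\,\mathbf{B}_N^H$ for all large $N$. Therefore
\[
\left\Vert \check{\mathbf{\Sigma}}_{SCM} - \hat{\mathbf{\Sigma}} \right\Vert = \left\Vert \mathbf{B}_N\hat{\mathbf{\Sigma}}\mathbf{B}_N^H - \hat{\mathbf{\Sigma}} \right\Vert \leq \big( \left\Vert \mathbf{B}_N \right\Vert + 1 \big)\, \left\Vert \mathbf{B}_N - \mathbf{I}_m \right\Vert\, \left\Vert \hat{\mathbf{\Sigma}} \right\Vert ,
\]
and $\left\Vert \hat{\mathbf{\Sigma}} \right\Vert$ is almost surely bounded under conditions (i)--(iii) (equivalently, use \eqref{eqcouillet2} together with the a.s. boundedness of $\left\Vert \hat{\mathbf{S}} \right\Vert = \left\Vert \mathbf{Y}_w\mathbf{D}_\nu\mathbf{Y}_w^H \right\Vert$ established in the spectral analysis of \cite{Couillet15b}). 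With $\left\Vert \mathbf{B}_N - \mathbf{I}_m \right\Vert \overset{a.s.}{\longrightarrow} 0$ and $\left\Vert \mathbf{B}_N \right\Vert$ eventually bounded, the right-hand side goes to $0$ almost surely, and the first display yields the theorem.

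I expect the main obstacle to be the careful justification that $\left\Vert \hat{\mathbf{\Sigma}} \right\Vert$ (equivalently $\left\Vert \hat{\mathbf{S}} \right\Vert$) stays almost surely bounded uniformly in $N$, and, in the same vein, that $\mathbf{A}\mapsto\mathbf{A}^{1/2}$ and $\mathbf{A}\mapsto\mathbf{A}^{-1}$ are Lipschitz on the relevant spectral window — both rest on $0\notin\mathrm{supp}(\nu)$ and on $\check{\mathbf{C}}_{SCM}$ being eventually positive definite, which follow from Theorem~1 but deserve to be spelled out. By contrast, the equivariance identity $\check{\mathbf{\Sigma}}_{SCM}=\mathbf{B}_N\hat{\mathbf{\Sigma}}\mathbf{B}_N^H$ is elementary and is really what makes the argument short.
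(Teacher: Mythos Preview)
Your proof is correct and follows essentially the same route as the paper's Appendix~B: both use the triangle inequality $\Vert\check{\mathbf{\Sigma}}_{SCM}-\hat{\mathbf{S}}\Vert\le\Vert\check{\mathbf{\Sigma}}_{SCM}-\hat{\mathbf{\Sigma}}\Vert+\Vert\hat{\mathbf{\Sigma}}-\hat{\mathbf{S}}\Vert$, both establish the exact relation $\check{\mathbf{\Sigma}}_{SCM}=\mathbf{B}_N\hat{\mathbf{\Sigma}}\mathbf{B}_N^H$ with $\mathbf{B}_N=\check{\mathbf{C}}_{SCM}^{-1/2}\mathbf{C}^{1/2}$, and both conclude via the same factorization bound combined with Theorem~1. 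Your framing of the key identity as affine equivariance of the Maronna functional is cleaner than the paper's direct substitution, and you are more explicit than the paper about why $\Vert\mathbf{B}_N-\mathbf{I}_m\Vert\to 0$ and why $\Vert\hat{\mathbf{\Sigma}}\Vert$ stays bounded---points the paper leaves implicit.
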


\begin{proof}
The proof is provided in Appendix B. 
\end{proof}

As the eigenvalues distribution of  $\hat{\mathbf{S}}$ can be theoretically analysed when $N$, $m \rightarrow \infty$, it can characterize also those  of $\check{\mathbf{\Sigma}}_{SCM}$ thanks to \eqref{eqcouilletmod}. Under the hypothesis that there is no source present in the signal, it is possible to set a threshold similarly to \cite{Couillet15b}.  Indeed, in this case: 
\small
\begin{eqnarray}
\left\Vert \hat{\mathbf{S}} \right\Vert & =  & \left\Vert \frac{1}{N} \, \displaystyle \sum_{i=0}^{N-1} \tau_i \, v(\tau_i \, \gamma) \, \mathbf{x}_i \, \mathbf{x}_i^H \right\Vert = \left\Vert \frac{1}{N}  \displaystyle \sum_{i=0}^{N-1} \frac{1}{\gamma} \, \psi{(\tau_i \,\gamma)} \, \mathbf{x}_i \, \mathbf{x}_i^H \right\Vert \nonumber\, , \\ 
& \leq & \frac{\Phi_{\infty}}{\gamma \, \left(1-c \,\Phi_{\infty}\right)} \,  \left \Vert \, \frac{1}{N} \,  \displaystyle \sum_{i=0}^{N-1}  \mathbf{x}_i \, \mathbf{x}_i^H \right\Vert \nonumber \, .
\end{eqnarray}
\normalsize

Thanks to \cite{Bai98}, and the bounds of the Marchenko Pastur distribution support, this inequality becomes
\begin{equation}
\left\Vert  \hat{\mathbf{S}} \right\Vert  \leq t\, ,
\label{Seuil}
\end{equation} 
\noindent where the threshold $t$ is defined for the covariance matrix $\check{\mathbf{\Sigma}}_{SCM}$ by:
\begin{equation}
t = \frac{\Phi_{\infty} \, \left(1+\sqrt{c}\right)^2}{\gamma \, \left(1-c \, \Phi_{\infty}\right)}\, .
\label{eqseuil}
\end{equation}
Then, if the signal contains sources of sufficiently high SNR, eigenvalues might be found upon this threshold $t$ and all these eigenvalues correspond to sources. 
Let $\left\{ \lambda_i(\check{\mathbf{\Sigma}}_{SCM})\right\}_{i\in\llbracket 1,N\rrbracket}$ be the sorted eigenvalues of $\check{\mathbf{\Sigma}}_{SCM}$ when sources are present in the samples. As all sources are assumed to be independent,  the estimated number of sources $\hat{p}$ which corresponds to the rank of the signal subspace is then given by $\hat{p} = \displaystyle\min_k (\lambda_k > t)$, if $p << \min{(N,m)}$.

\subsection{Results}
This section is devoted to the presentation of some simulations relative to the estimation of the covariance matrix. Samples are considered here sources-free. The parameters are set to $c = 0.45$, $m=900$ and $N=2000$. Thus,  $\mathbf{Y} = \mathbf{C}^{1/2}\,\mathbf{ X}\,\mathbf{ T}^{1/2}$ with  $\mathbf{C} = \mathcal{L} \, \left(\left(\rho^0, \rho^1,\ldots ,\rho^{m-1}\right)^T\right)$ where $\rho = 0.7$ and $\mathbf{X}$ is a zero-mean complex Gaussian noise with identity covariance matrix. The texture matrix $\mathbf{T}$ is a diagonal $N\times N$-matrix containing the $\left\{\tau_i\right\}_{i\in \llbracket 0,N-1\rrbracket}$ on its diagonal where $\left\{\tau_i\right\}_i$ are i.i.d. inverse gamma distributed with mean equal to $1$ and with shape parameter equal to $10$. The function $u$ is here defined as $u : x \mapsto \displaystyle \frac{1+\alpha}{x+\alpha} $ where $\alpha$ is a fixed parameter equal to $0.1$. 

Figure \ref{fig2} shows the eigenvalues of the estimated covariance matrix $\check{\mathbf{\Sigma}}_{SCM}$ when samples $\mathbf{Y}$ have been whitened by $\check{\mathbf{C}}_{SCM}^{-1/2}$. On the figure \ref{fig3}, the signal $\mathbf{Y}$ has not been whitened. The green histogram corresponds to the eigenvalues distribution of $\hat{\mathbf{S}}$ whose histogram is expected to coincide with the distribution of the eigenvalues of $\check{\mathbf{\Sigma}}_{SCM}$ as the equation \eqref{eqcouilletmod} indicates. Moreover, the threshold  $t = \displaystyle \frac{(1+\alpha) \, (1+\sqrt{c})^2}{\gamma \, (1-c\, (1+\alpha))}$ given by \eqref{eqseuil} has been estimated and drawn in red, in order to confirm that the eigenvalues  are all smallest than the threshold.

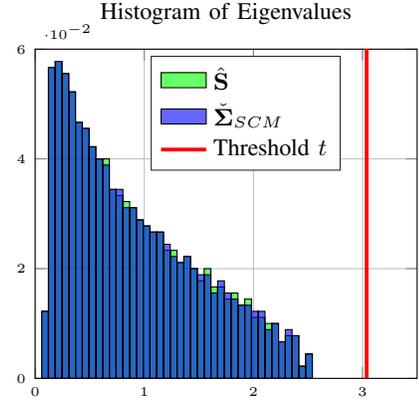
\begin{figure}
\centering 
%
%
\begin{tikzpicture}

\begin{axis}[%
width=0.28\textwidth,
at={(0.758in,0.481in)},
scale only axis,
ticklabel style = {font=\tiny},
xmin=0,
xmax=3.5,
ymin=0,
ymax=0.06,
axis background/.style={fill=white},
title={\small Histogram of Eigenvalues},
xmajorgrids,
ymajorgrids,
legend style={at={(0.80,0.98)},legend cell align=left, align=left, draw=white!15!black}
]
\addplot[ybar interval, fill=green, fill opacity=0.6, draw=black, area legend] table[row sep=crcr] {%
x	y\\
0.06	0.0122222222222222\\
0.122	0.0566666666666667\\
0.184	0.0577777777777778\\
0.246	0.0555555555555556\\
0.308	0.0522222222222222\\
0.37	0.0466666666666667\\
0.432	0.0455555555555556\\
0.494	0.0422222222222222\\
0.556	0.04\\
0.618	0.04\\
0.68	0.0344444444444444\\
0.742	0.0333333333333333\\
0.804	0.0322222222222222\\
0.866	0.0311111111111111\\
0.928	0.0288888888888889\\
0.99	0.0277777777777778\\
1.052	0.0266666666666667\\
1.114	0.0266666666666667\\
1.176	0.0233333333333333\\
1.238	0.0233333333333333\\
1.3	0.0211111111111111\\
1.362	0.0222222222222222\\
1.424	0.02\\
1.486	0.0177777777777778\\
1.548	0.02\\
1.61	0.0166666666666667\\
1.672	0.0166666666666667\\
1.734	0.0144444444444444\\
1.796	0.0155555555555556\\
1.858	0.0133333333333333\\
1.92	0.0144444444444444\\
1.982	0.0111111111111111\\
2.044	0.0111111111111111\\
2.106	0.01\\
2.168	0.01\\
2.23	0.00666666666666667\\
2.292	0.00777777777777778\\
2.354	0.00777777777777778\\
2.416	0.00222222222222222\\
2.478	0.00444444444444444\\
2.54	0.00444444444444444\\
};
\addlegendentry{\small $\hat{\mathbf{S}}$}

\addplot[ybar interval, fill=blue, fill opacity=0.6, draw=black, area legend] table[row sep=crcr] {%
x	y\\
0.06	0.0122222222222222\\
0.122	0.0566666666666667\\
0.184	0.0577777777777778\\
0.246	0.0555555555555556\\
0.308	0.0522222222222222\\
0.37	0.0466666666666667\\
0.432	0.0455555555555556\\
0.494	0.0422222222222222\\
0.556	0.04\\
0.618	0.0388888888888889\\
0.68	0.0344444444444444\\
0.742	0.0344444444444444\\
0.804	0.0311111111111111\\
0.866	0.0311111111111111\\
0.928	0.0288888888888889\\
0.99	0.0277777777777778\\
1.052	0.0266666666666667\\
1.114	0.0266666666666667\\
1.176	0.0244444444444444\\
1.238	0.0222222222222222\\
1.3	0.0211111111111111\\
1.362	0.0222222222222222\\
1.424	0.02\\
1.486	0.0188888888888889\\
1.548	0.0188888888888889\\
1.61	0.0155555555555556\\
1.672	0.0177777777777778\\
1.734	0.0155555555555556\\
1.796	0.0144444444444444\\
1.858	0.0133333333333333\\
1.92	0.0133333333333333\\
1.982	0.0122222222222222\\
2.044	0.0122222222222222\\
2.106	0.00888888888888889\\
2.168	0.01\\
2.23	0.00666666666666667\\
2.292	0.00888888888888889\\
2.354	0.00777777777777778\\
2.416	0.00222222222222222\\
2.478	0.00444444444444444\\
2.54	0.00444444444444444\\
};
\addlegendentry{\small $\check{\mathbf{\Sigma}}_{SCM}$}

\addplot [color=red, line width=0.5mm]
  table[row sep=crcr]{%
3.03743394491305	0\\
3.03743394491305	0.2\\
};
\addlegendentry{\small Threshold $t$}

\end{axis}
\end{tikzpicture}%
	\caption{Eigenvalues of the covariance matrices $\check{\mathbf{\Sigma}}_{SCM}$ and $\hat{\mathbf{S}}$ when the signal $\mathbf{Y}$ has been whitened by  $\check{\mathbf{C}}_{SCM}$ and 		the corresponding threshold $t$ ($\rho = 0.7$, $m=900$, $N=2000$, $\tau = $ inverse gamma, $\alpha = 0.1$).}
	\label{fig2}
\end{figure}

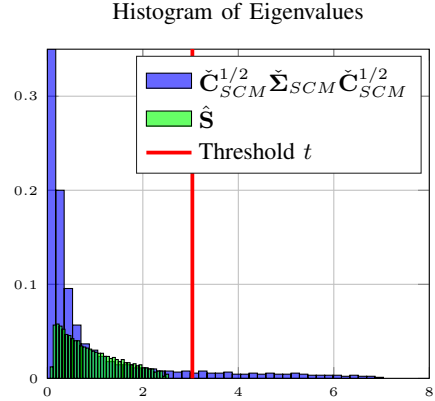
\begin{figure}
\centering 
%
%
\begin{tikzpicture}

\begin{axis}[%
width=0.28\textwidth,
at={(0.758in,0.481in)},
scale only axis,
ticklabel style = {font=\tiny},
xmin=0,
xmax=8,
ymin=0,
ymax=0.35,
axis background/.style={fill=white},
title={\small Histogram of Eigenvalues},
xmajorgrids,
ymajorgrids,
legend style={at={(0.98,0.98)}, legend cell align=left, align=left, draw=white!15!black}
]
\addplot[ybar interval, fill=blue, fill opacity=0.6, draw=black, area legend] table[row sep=crcr] {%
x	y\\
0	0.35\\
0.176	0.2\\
0.352	0.0955555555555556\\
0.528	0.0566666666666667\\
0.704	0.0366666666666667\\
0.88	0.03\\
1.056	0.0233333333333333\\
1.232	0.0177777777777778\\
1.408	0.0144444444444444\\
1.584	0.0144444444444444\\
1.76	0.0111111111111111\\
1.936	0.0111111111111111\\
2.112	0.01\\
2.288	0.00777777777777778\\
2.464	0.00777777777777778\\
2.64	0.00666666666666667\\
2.816	0.00777777777777778\\
2.992	0.00555555555555556\\
3.168	0.00777777777777778\\
3.344	0.00555555555555556\\
3.52	0.00555555555555556\\
3.696	0.00666666666666667\\
3.872	0.00444444444444444\\
4.048	0.00444444444444444\\
4.224	0.00555555555555556\\
4.4	0.00444444444444444\\
4.576	0.00555555555555556\\
4.752	0.00444444444444444\\
4.928	0.00444444444444444\\
5.104	0.00555555555555556\\
5.28	0.00444444444444444\\
5.456	0.00333333333333333\\
5.632	0.00333333333333333\\
5.808	0.00333333333333333\\
5.984	0.00333333333333333\\
6.16	0.00222222222222222\\
6.336	0.00333333333333333\\
6.512	0.00222222222222222\\
6.688	0.00222222222222222\\
6.864	0.00111111111111111\\
7.04	0.00111111111111111\\
};
\addlegendentry{\small  $\check{\mathbf{C}}_{SCM}^{1/2} \check{\mathbf{\Sigma}}_{SCM} \check{\mathbf{C}}_{SCM}^{1/2}$}

\addplot[ybar interval, fill=green, fill opacity=0.6, draw=black, area legend] table[row sep=crcr] {%
x	y\\
0.06	0.0122222222222222\\
0.122	0.0566666666666667\\
0.184	0.0577777777777778\\
0.246	0.0555555555555556\\
0.308	0.0522222222222222\\
0.37	0.0466666666666667\\
0.432	0.0455555555555556\\
0.494	0.0422222222222222\\
0.556	0.04\\
0.618	0.04\\
0.68	0.0344444444444444\\
0.742	0.0333333333333333\\
0.804	0.0322222222222222\\
0.866	0.0311111111111111\\
0.928	0.0288888888888889\\
0.99	0.0277777777777778\\
1.052	0.0266666666666667\\
1.114	0.0266666666666667\\
1.176	0.0233333333333333\\
1.238	0.0233333333333333\\
1.3	0.0211111111111111\\
1.362	0.0222222222222222\\
1.424	0.02\\
1.486	0.0177777777777778\\
1.548	0.02\\
1.61	0.0166666666666667\\
1.672	0.0166666666666667\\
1.734	0.0144444444444444\\
1.796	0.0155555555555556\\
1.858	0.0133333333333333\\
1.92	0.0144444444444444\\
1.982	0.0111111111111111\\
2.044	0.0111111111111111\\
2.106	0.01\\
2.168	0.01\\
2.23	0.00666666666666667\\
2.292	0.00777777777777778\\
2.354	0.00777777777777778\\
2.416	0.00222222222222222\\
2.478	0.00444444444444444\\
2.54	0.00444444444444444\\
};
\addlegendentry{\small $\hat{\mathbf{S}}$}

\addplot [color=red, line width=0.5mm]
  table[row sep=crcr]{%
3.03743394491305	0\\
3.03743394491305	2\\
};
\addlegendentry{\small Threshold $t$}

\end{axis}
\end{tikzpicture}%
	\caption{Eigenvalues of the covariance matrices $\check{\mathbf{\Sigma}}_{SCM}$ and $\hat{\mathbf{S}}$ when the signal $\mathbf{Y}$ has not been whitened by  $\check{\mathbf{C}}_{SCM}$ 		and the corresponding threshold $t$ ($\rho = 0.7$, $m=900$, $N=2000$, $\tau = $ inverse gamma, $\alpha = 0.1$).}	
	\label{fig3}
\end{figure}

As the eigenvalues distribution of $\check{\mathbf{\Sigma}}_{SCM}$  are closed to those of $\hat{\mathbf{S}}$, the fixed-point estimator correctly annihilates the influence of the textures $\tau_i$'s and the whitening balances the matrix of correlation. On Figure \ref{fig2}, we can observe  that the eigenvalues do not exceed the upper bound $t$. When the signal has not been whitened, this threshold $t$ does not theoretically correspond. Indeed, in Figure \ref{fig3}, the threshold is found to be smaller than the largest eigenvalues of the estimated covariance matrix. These figures illustrate first the results of Theorem 2 and show the importance of the whitening process.  \\

Figure \ref{fig4} presents the eigenvalues distributions of $\hat{\mathbf{S}}$ and $\check{\mathbf{C}}_{SCM}$ for samples distributed according to a different CES distribution. Here, the texture $\mathbf{T}$ is a diagonal matrix containing the $\left\{\tau_i\right\}_{i\in \llbracket 0,, N-1\rrbracket}$ on its diagonal where each $\tau_i$  is independent and identically distributed and follows a distribution equal to $t^2$ where $t$ is a Student-t distributed random variable with parameter $100$ and $\alpha = 0.1$. The eigenvalues are not so close than the eigenvalues of $\hat{\mathbf{S}}$ and are found to get closer to the threshold $t$. If the distribution of $\tau$ is getting away to the one for which the function $u$ has been calculated, the method seems so to be less reliable. To fill this gap, we propose  to enhance the proposed SCM-based method for the whitening through robust $M$-estimators-based method. 

{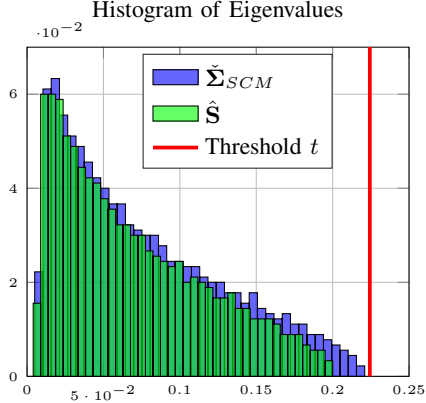
\begin{figure}[H]
\centering 
%
%
\begin{tikzpicture}

\begin{axis}[%
width=0.28\textwidth,
at={(0.758in,0.481in)},
scale only axis,
ticklabel style = {font=\tiny},
xmin=0,
xmax=0.25,
ymin=0,
ymax=0.07,
axis background/.style={fill=white},
title={\small Histogram of Eigenvalues},
xmajorgrids,
ymajorgrids,
legend style={at={(0.80,0.98)}, legend cell align=left, align=left, draw=white!15!black}
]

\addplot[ybar interval, fill=blue, fill opacity=0.6, draw=black, area legend] table[row sep=crcr] {%
x	y\\
0.005	0.0222222222222222\\
0.0104	0.0611111111111111\\
0.0158	0.0633333333333333\\
0.0212	0.0555555555555556\\
0.0266	0.0511111111111111\\
0.032	0.0488888888888889\\
0.0374	0.0455555555555556\\
0.0428	0.0422222222222222\\
0.0482	0.04\\
0.0536	0.0366666666666667\\
0.059	0.0366666666666667\\
0.0644	0.0322222222222222\\
0.0698	0.0311111111111111\\
0.0752	0.03\\
0.0806	0.03\\
0.086	0.0277777777777778\\
0.0914	0.0244444444444444\\
0.0968	0.0244444444444444\\
0.1022	0.0233333333333333\\
0.1076	0.0233333333333333\\
0.113	0.0211111111111111\\
0.1184	0.02\\
0.1238	0.02\\
0.1292	0.0177777777777778\\
0.1346	0.0177777777777778\\
0.14	0.0155555555555556\\
0.1454	0.0177777777777778\\
0.1508	0.0144444444444444\\
0.1562	0.0133333333333333\\
0.1616	0.0122222222222222\\
0.167	0.0133333333333333\\
0.1724	0.0111111111111111\\
0.1778	0.0111111111111111\\
0.1832	0.00888888888888889\\
0.1886	0.00888888888888889\\
0.194	0.00777777777777778\\
0.1994	0.00666666666666667\\
0.2048	0.00555555555555556\\
0.2102	0.00444444444444444\\
0.2156	0.00222222222222222\\
0.221	0.00222222222222222\\
};
\addlegendentry{\small $\check{\mathbf{\Sigma}}_{SCM} $}

\addplot[ybar interval, fill=green, fill opacity=0.6, draw=black, area legend] table[row sep=crcr] {%
x	y\\
0.004	0.0155555555555556\\
0.0089	0.06\\
0.0138	0.06\\
0.0187	0.0588888888888889\\
0.0236	0.0511111111111111\\
0.0285	0.0488888888888889\\
0.0334	0.0444444444444444\\
0.0383	0.0422222222222222\\
0.0432	0.0411111111111111\\
0.0481	0.0377777777777778\\
0.053	0.0355555555555556\\
0.0579	0.0322222222222222\\
0.0628	0.0322222222222222\\
0.0677	0.03\\
0.0726	0.03\\
0.0775	0.0266666666666667\\
0.0824	0.0255555555555556\\
0.0873	0.0244444444444444\\
0.0922	0.0233333333333333\\
0.0971	0.0244444444444444\\
0.102	0.02\\
0.1069	0.0211111111111111\\
0.1118	0.02\\
0.1167	0.0188888888888889\\
0.1216	0.0166666666666667\\
0.1265	0.0166666666666667\\
0.1314	0.0177777777777778\\
0.1363	0.0144444444444444\\
0.1412	0.0144444444444444\\
0.1461	0.0122222222222222\\
0.151	0.0122222222222222\\
0.1559	0.0122222222222222\\
0.1608	0.0111111111111111\\
0.1657	0.00888888888888889\\
0.1706	0.00888888888888889\\
0.1755	0.00888888888888889\\
0.1804	0.00666666666666667\\
0.1853	0.00555555555555556\\
0.1902	0.00555555555555556\\
0.1951	0.00333333333333333\\
0.2	0.00333333333333333\\
};
\addlegendentry{\small $\hat{\mathbf{S}}$}

\addplot [color=red, line width=0.5mm]
  table[row sep=crcr]{%
0.224345142665506	0\\
0.224345142665506   2\\
};
\addlegendentry{\small Threshold $t$}

\end{axis}
\end{tikzpicture}%
	\caption{Eigenvalues of the covariance matrices $\check{\mathbf{\Sigma}}_{SCM}$ and $\hat{\mathbf{S}}$ when the signal is whitened by  $\check{\mathbf{C}}_{SCM}$ and the calculated threshold ($\rho = 0.7$, $m=900$, $N=2000$, $\tau = $ $t^2$, $\alpha = 0.1$).}
	\label{fig4}
\end{figure}}


\section{Model Order Selection: a robust method approach}

This section aims at developing a robust estimator based technique to whiten the signal instead of the previous SCM-based one. This section follows the same steps than in the previous section but by using a $M$-estimator in the whitening process.

\subsection{Whitening Step}

Let $\widetilde{\mathbf{C}}_{FP}$ be an biased estimator of the covariance matrix $\mathbf{C}$ such that $ \widetilde{\mathbf{C}}_{FP}  = \mathcal{T} (\widehat{\mathbf{C}}_{FP})  $  where $\widehat{\mathbf{C}}_{FP}$ is the unique solution to the Maronna's $M$-estimator \cite{Maronna76}:  
\begin{equation*}
\mathbf{Z} = \frac{1}{N} \, \displaystyle \sum_{i=0}^{N-1} u\left(\frac{1}{m} \,\mathbf{y}_i^{H} \, \mathbf{Z}^{-1} \, \mathbf{y}_i \right) \,  \mathbf{y}_i \, \mathbf{y}_i^{H}\, .
\end{equation*} 
\noindent As in the previous section, $u(.)$  is a function derived thanks to the probability distribution function of the CES noise: $u$: [0, $+\infty$) $\mapsto$ (0, $+\infty$) nonnegative, continuous and non-increasing. The following theorem stands for $\widetilde{\mathbf{C}}_{FP}$:

\begin{theorem}[Consistency of $\widetilde{\mathbf{C}}_{FP}$]  Let $\widetilde{\mathbf{C}}_{FP}$ be a fixed-point estimator of the covariance matrix $\mathbf{C}$ as defined above, the following result holds: 
\begin{equation}
\left\Vert \widetilde{\mathbf{C}}_{FP} -  \mathds{E}\left[v(\tau \,\gamma)\,\tau\right] \,\mathbf{C} \right\Vert \overset{a.s.}{\longrightarrow} 0 \, ,
\label{eqth2}
\end{equation}
\noindent where:
\begin{itemize}
\item[i)] $\phi: x \mapsto x \,u(x)$ increasing and bounded, with $ \lim\limits_{\substack{x \to \infty}} \phi(x) = \Phi_{\infty} > 1$,
\item[ii)] $\underset{N \rightarrow \infty}{\text{lim}} \, c_N < \, \Phi_{\infty}^{-1}$, 
\item[iii)] $g: x \mapsto \displaystyle \frac{x}{1-c \,\phi(x)}$,
\item[iv)] $v: x \mapsto u \circ g^{-1}(x)$, $\psi: x \mapsto x\, v(x)$, 
\item[v)] $\gamma$ is the unique solution, if defined, of: $1 = \displaystyle \frac{1}{N}  \, \sum_{i=1}^N \frac{\psi\left( \tau_i \, \gamma \right)}{1+ c \, \psi( \tau_i \,\gamma)}$.
\end{itemize}
The covariance matrix $\check{\mathbf{C}}_{FP} =\displaystyle  \frac{ \tilde{\mathbf{C}}_{FP}  }{\mathds{E} \left[v(\tau \,\gamma) \, \tau\right]}$ characterizes the  estimator of the true covariance matrix $\mathbf{C}$. 
\end{theorem}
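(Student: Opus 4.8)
The plan is to reduce Theorem~3 to Theorem~1 and to the robust deterministic equivalent \eqref{eqcouillet2}, exploiting the affine equivariance of Maronna's $M$-estimator. First I would note that $\mathbf{Y}_w = \mathbf{C}^{-1/2}\,\mathbf{Y}$ (compare \eqref{modele3} with \eqref{scm_normal}), and that if $\widehat{\mathbf{\Sigma}}$ denotes the $M$-estimator built from the columns of $\mathbf{Y}_w$ --- the matrix appearing in \eqref{eqcouillet2} --- then by equivariance $\widehat{\mathbf{C}}_{FP} = \mathbf{C}^{1/2}\,\widehat{\mathbf{\Sigma}}\,\mathbf{C}^{1/2}$: writing $\mathbf{y}_i = \mathbf{C}^{1/2}\mathbf{y}_{wi}$, the Mahalanobis weights satisfy $u\!\left(\tfrac{1}{m}\mathbf{y}_i^H(\mathbf{C}^{1/2}\widehat{\mathbf{\Sigma}}\mathbf{C}^{1/2})^{-1}\mathbf{y}_i\right) = u\!\left(\tfrac{1}{m}\mathbf{y}_{wi}^H\widehat{\mathbf{\Sigma}}^{-1}\mathbf{y}_{wi}\right)$ and $\mathbf{y}_i\mathbf{y}_i^H = \mathbf{C}^{1/2}\mathbf{y}_{wi}\mathbf{y}_{wi}^H\mathbf{C}^{1/2}$, so $\mathbf{C}^{1/2}\widehat{\mathbf{\Sigma}}\mathbf{C}^{1/2}$ solves the fixed-point equation defining $\widehat{\mathbf{C}}_{FP}$ and one concludes by uniqueness. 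Existence and uniqueness of $\widehat{\mathbf{C}}_{FP}$ then follow from those of $\widehat{\mathbf{\Sigma}}$ under $(i)$--$(iii)$, together with the invertibility of $\mathbf{C}$ and the bound $\|\mathbf{C}\| \le \sum_k |c_k| < \infty$ (Assumption~2).

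Multiplying \eqref{eqcouillet2} on both sides by $\mathbf{C}^{1/2}$ gives $\bigl\|\widehat{\mathbf{C}}_{FP} - \mathbf{C}^{1/2}\widehat{\mathbf{S}}\mathbf{C}^{1/2}\bigr\| \le \|\mathbf{C}\|\,\bigl\|\widehat{\mathbf{\Sigma}} - \widehat{\mathbf{S}}\bigr\| \overset{a.s.}{\longrightarrow} 0$, where by \eqref{matrixS} and $\mathbf{y}_i = \mathbf{C}^{1/2}\mathbf{y}_{wi}$ one has $\mathbf{C}^{1/2}\widehat{\mathbf{S}}\mathbf{C}^{1/2} = \mathbf{Y}\,\mathbf{D}_{\nu}\,\mathbf{Y}^H = \tfrac{1}{N}\sum_{i=0}^{N-1} v(\tau_i\gamma)\,\mathbf{y}_i\mathbf{y}_i^H =: \widehat{\mathbf{S}}_{FP}$. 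Since the Toeplitz rectification $\mathcal{T}$ is linear and bounded for the operator norm, $\|\mathcal{T}(\mathbf{A})\| \le K\,\|\mathbf{A}\|$ for a numerical constant $K$ (the bound used in the proof of Theorem~1, see \cite{Vallet14,Couillet15c}), applying $\mathcal{T}$ yields $\bigl\|\widetilde{\mathbf{C}}_{FP} - \mathcal{T}(\widehat{\mathbf{S}}_{FP})\bigr\| = \bigl\|\mathcal{T}(\widehat{\mathbf{C}}_{FP} - \widehat{\mathbf{S}}_{FP})\bigr\| \le K\,\bigl\|\widehat{\mathbf{C}}_{FP} - \widehat{\mathbf{S}}_{FP}\bigr\| \overset{a.s.}{\longrightarrow} 0$. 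It therefore remains only to identify the limit of $\mathcal{T}(\widehat{\mathbf{S}}_{FP})$.

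The key observation is that $\widehat{\mathbf{S}}_{FP}$ has exactly the structure of the matrix $\tfrac{1}{N}\mathbf{Y}\mathbf{Y}^H$ analysed in Theorem~1, but with the texture matrix $\mathbf{T}$ replaced by $\mathbf{D}' := \mathrm{diag}\bigl(v(\tau_i\gamma)\tau_i\bigr)$ and the source covariance $\mathbf{\Gamma}$ replaced by $\mathbf{\Gamma}^{1/2}\,\mathrm{diag}\bigl(v(\tau_i\gamma)\bigr)\,\mathbf{\Gamma}^{1/2}$: expanding $\mathbf{Y} = \mathbf{M}\boldsymbol{\delta}^H\mathbf{\Gamma}^{1/2} + \mathbf{C}^{1/2}\mathbf{X}\mathbf{T}^{1/2}$ in $\mathbf{Y}\mathbf{D}_{\nu}\mathbf{Y}^H$ and using that $\mathbf{T}^{1/2}$ and $\mathbf{D}_{\nu}$ are diagonal, the noise part becomes $\tfrac{1}{N}\mathbf{C}^{1/2}\mathbf{X}\mathbf{D}'\mathbf{X}^H\mathbf{C}^{1/2}$, the signal part the corresponding rank-$p$ Gaussian term, and the cross terms keep the same form. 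Now $\phi$ is bounded by $\Phi_{\infty}$ with $c\,\Phi_{\infty} < 1$ by $(i)$--$(ii)$, hence $\psi : x \mapsto x\,v(x)$ is increasing and bounded (since $\psi(g(y)) = \phi(y)/(1 - c\,\phi(y)) \le \Phi_{\infty}/(1-c\,\Phi_{\infty})$); consequently the entries $v(\tau_i\gamma)\tau_i = \psi(\tau_i\gamma)/\gamma$ of $\mathbf{D}'$ are uniformly bounded, are a measurable function of $\{\tau_i\}$ --- hence independent of $\mathbf{X}$ --- and $\tfrac{1}{N}\mathrm{tr}(\mathbf{D}') = \tfrac{1}{N}\sum_i v(\tau_i\gamma)\tau_i \overset{a.s.}{\longrightarrow} \mathds{E}[v(\tau\gamma)\tau]$ by a uniform law of large numbers (continuity of $v$ and the a.s. convergence $\gamma_N \to \gamma$ supplied by the robust-estimation theory). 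The source and cross terms are killed by $\mathcal{T}$ exactly as in the proof of Theorem~1, Assumption~3 being preserved because $\mathbf{\Gamma}^{1/2}\,\mathrm{diag}(v(\tau_i\gamma))\,\mathbf{\Gamma}^{1/2}$ is still absolutely summable and the columns of $\mathbf{M}$ are unaffected. Running the proof of Theorem~1 verbatim with $\mathbf{D}'$ in place of $\mathbf{T}$ then gives $\bigl\|\mathcal{T}(\widehat{\mathbf{S}}_{FP}) - \mathds{E}[v(\tau\gamma)\tau]\,\mathbf{C}\bigr\| \overset{a.s.}{\longrightarrow} 0$, and combining with the previous paragraph yields \eqref{eqth2}; the rescaling $\check{\mathbf{C}}_{FP} = \widetilde{\mathbf{C}}_{FP}/\mathds{E}[v(\tau\gamma)\tau]$ then consistently estimates $\mathbf{C}$ in spectral norm.

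The main obstacle is the transfer of Theorem~1's machinery to the surrogate texture $\mathbf{D}'$. Uniform boundedness and conditional independence of its entries are immediate, but the convergence $\tfrac{1}{N}\mathrm{tr}(\mathbf{D}') \to \mathds{E}[v(\tau\gamma)\tau]$ rests on the a.s. convergence $\gamma_N \to \gamma$ borrowed from \cite{Couillet15b}, and one must also confirm that the $\mathcal{O}(1)$-rank signal contribution to $\widehat{\mathbf{S}}_{FP}$ is annihilated by $\mathcal{T}$ just like the corresponding term in Theorem~1 --- which is where Assumption~3 enters and where one checks that absolute summability survives the insertion of the bounded diagonal weights $\mathrm{diag}(v(\tau_i\gamma))$. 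The equivariance identity of the first step and the operator-norm bound on $\mathcal{T}$ are, by contrast, routine.
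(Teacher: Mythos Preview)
Your proposal is correct and follows essentially the same route as the paper: split $\|\widetilde{\mathbf{C}}_{FP} - \mathds{E}[v(\tau\gamma)\tau]\,\mathbf{C}\|$ into $\|\mathcal{T}(\hat{\mathbf{C}}_{FP} - \hat{\mathbf{S}}_{FP})\|$ (controlled via the operator-norm contraction $\|\mathcal{T}(\cdot)\|\le\|\cdot\|$ together with the deterministic equivalent) plus $\|\mathcal{T}(\hat{\mathbf{S}}_{FP}) - \mathds{E}[v(\tau\gamma)\tau]\,\mathbf{C}\|$ (handled by rerunning the spectral-density argument of Theorem~1 with the surrogate texture $\mathbf{D}'=\mathrm{diag}(v(\tau_i\gamma)\tau_i)$). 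The only cosmetic difference is that you derive the equivariance identity $\hat{\mathbf{C}}_{FP}=\mathbf{C}^{1/2}\hat{\mathbf{\Sigma}}\,\mathbf{C}^{1/2}$ explicitly to connect to \eqref{eqcouillet2}, whereas the paper invokes the analogous convergence $\|\hat{\mathbf{C}}_{FP}-\hat{\mathbf{S}}_{FP}\|\to 0$ directly from the literature.
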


\begin{proof}
The proof, inspired by \cite{Vinogradova14} and \cite{Loubaton16}, is provided in Appendix B. 
\end{proof}

\textit{Remark}: 
\cite{Couillet13} proves that  $\tilde{\mathbf{C}}_{SCM} = \phi^{-1}(1) \,\tilde{\mathbf{C}}_{FP}$. When the function $u$ is well chosen, it is possible to have $\phi^{-1}(1) = 1$ and $\tilde{\mathbf{C}}_{SCM} = \tilde{\mathbf{C}}_{FP} $, as it will be the case for the $u$ chosen in the following sections. But even in this case, $\check{\mathbf{C}}_{SCM}$ and $\check{\mathbf{C}}_{FP} $ differ up to a scale factor as $\check{\mathbf{C}}_{SCM} = \displaystyle \frac{\mathds{E} \left[v(\tau \gamma) \,\tau\right]}{\mathds{E} (\tau)} \, \check{\mathbf{C}}_{FP} $. \\




 As in the previous section,  the samples $\mathbf{Y}$ can then be whitened thanks to $ \check{\mathbf{C}}_{FP}^{-1/2}$.  Let $\check{\mathbf{Y}}_{wFP} = [\check{\mathbf{y}}_{wF0} ,... \check{\mathbf{y}}_{wF \, N-1}]$ be the whitened samples: 
\begin{equation*}
\check{\mathbf{Y}}_{wFP} = \check{\mathbf{C}}_{FP}^{-1/2} \,\mathbf{M} \, \boldsymbol{\delta}^H \, \mathbf{\Gamma}^{1/2} + \check{\mathbf{C}}_{FP}^{-1/2}  \,  \mathbf{C}^{1/2}\, \mathbf{X}\, \mathbf{T}^{1/2}\, .
\end{equation*}

The parameter $\mathds{E}[\tau]$ can be in practice evaluated with the empirical estimator of the mean, or, as in the previous section, be considered as equal to one. The quantity $\mathds{E}[v(\tau \,\gamma)\,\tau]$ can be also evaluated through  an estimate $\hat{\gamma}$ of $\gamma$ as explained in the Results section.

\subsection{Robust estimation of the covariance matrix and model order selection}

The robust estimation of the covariance matrix and the model order selection are done as previously.
The robust estimator of the scatter matrix of the whitened signal $\check{\mathbf{Y}}_{wFP}$ is a fixed-point estimator denoted by $\check{\mathbf{\Sigma}}_{FP}$ and defined as the unique solution of the equation: 
\begin{equation}
\mathbf{\Sigma} = \frac{1}{N} \displaystyle \sum_{i=0}^{N-1} u\left(\frac{1}{m} \, \check{\mathbf{y}}_{wFi}^H \,\mathbf{\Sigma}^{-1} \,\check{\mathbf{y}}_{wFi} \right)\, \check{\mathbf{y}}_{wFi} \, \check{\mathbf{y}}_{wFi}^H\, .
\label{eqsigma2}
\end{equation}

\noindent Thus, $\check{\mathbf{\Sigma}}_{FP}$ is a robust estimator of the covariance matrix of the whitened signal. \\

The equation \eqref{eqcouilletmod} is still effective when replacing $\check{\mathbf{\Sigma}}_{SCM}$ by $\check{\mathbf{\Sigma}}_{FP}$. Indeed, Theorem 2 can be adapted as follows: 
\begin{theorem} The following convergence holds:
\label{theorem2}
\begin{equation}
\left\Vert \check{\mathbf{\Sigma}}_{FP} - \hat{\mathbf{S}}  \right\Vert \overset{a.s.}{\longrightarrow} 0 \, .
\label{eqth2mod}
\end{equation} 
\end{theorem}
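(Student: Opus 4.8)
The plan is to mirror the proof of Theorem~2 (Appendix~B), replacing the SCM-based whitening matrix $\check{\mathbf{C}}_{SCM}$ by the $M$-estimator-based one $\check{\mathbf{C}}_{FP}$, and then to reduce everything to the already established convergence \eqref{eqcouillet2}, namely $\|\hat{\mathbf{\Sigma}} - \hat{\mathbf{S}}\| \overset{a.s.}{\longrightarrow} 0$. The only genuinely new input is Theorem~3, which after normalization by the positive constant $\mathds{E}[v(\tau\gamma)\tau]$ yields $\|\check{\mathbf{C}}_{FP} - \mathbf{C}\| \overset{a.s.}{\longrightarrow} 0$; this plays the role that Theorem~1 played in the proof of Theorem~2.

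First I would record that, under Assumption~2, the spectrum of $\mathbf{C}$ lies in a fixed compact interval $[\varepsilon, M] \subset (0,\infty)$ for $N$ large enough: absolute summability of the $\{c_k\}$ gives $\|\mathbf{C}\| \le \sum_k |c_k| < \infty$, and the eigenvalues of $\mathbf{C}$ asymptotically stick to $\mathrm{supp}(\nu)$, which is bounded away from $0$, providing the lower bound $\varepsilon$. Since $t \mapsto \sqrt{t}$ and $t\mapsto t^{-1/2}$ are (operator-)Lipschitz on Hermitian matrices with spectrum in $[\varepsilon,M]$, Theorem~3 gives $\|\check{\mathbf{C}}_{FP}^{1/2} - \mathbf{C}^{1/2}\| \overset{a.s.}{\longrightarrow} 0$; writing $\mathbf{A}_N := \check{\mathbf{C}}_{FP}^{-1/2}\,\mathbf{C}^{1/2} = \mathbf{I} + \check{\mathbf{C}}_{FP}^{-1/2}\big(\mathbf{C}^{1/2} - \check{\mathbf{C}}_{FP}^{1/2}\big)$ and using $\|\check{\mathbf{C}}_{FP}^{-1/2}\| \to \|\mathbf{C}^{-1/2}\| < \infty$, I obtain $\|\mathbf{A}_N - \mathbf{I}\| \overset{a.s.}{\longrightarrow} 0$ (and likewise $\|\mathbf{A}_N^{-1} - \mathbf{I}\| \to 0$).

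Next I would exploit the algebraic identity $\check{\mathbf{Y}}_{wFP} = \mathbf{A}_N\,\mathbf{Y}_w$, obtained by factoring $\check{\mathbf{C}}_{FP}^{-1/2}\mathbf{C}^{1/2}$ out of both terms of $\check{\mathbf{Y}}_{wFP}$ (using $\mathbf{Y}_w = \mathbf{C}^{-1/2}\mathbf{M}\boldsymbol{\delta}^H\mathbf{\Gamma}^{1/2} + \mathbf{X}\mathbf{T}^{1/2}$). Substituting $\check{\mathbf{y}}_{wFi} = \mathbf{A}_N \mathbf{y}_{wi}$ into \eqref{eqsigma2} and seeking a solution of the form $\mathbf{\Sigma} = \mathbf{A}_N \mathbf{\Sigma}' \mathbf{A}_N^H$, the quadratic form collapses, $\tfrac{1}{m}\check{\mathbf{y}}_{wFi}^H \mathbf{\Sigma}^{-1}\check{\mathbf{y}}_{wFi} = \tfrac{1}{m}\mathbf{y}_{wi}^H \mathbf{\Sigma}'^{-1}\mathbf{y}_{wi}$, and after cancelling $\mathbf{A}_N$ on the left and $\mathbf{A}_N^H$ on the right the equation for $\mathbf{\Sigma}'$ becomes exactly the defining equation of $\hat{\mathbf{\Sigma}}$ on the white model $\mathbf{Y}_w$. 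By existence and uniqueness of the Maronna $M$-estimator under (i)--(iii) (these transfer to $\check{\mathbf{Y}}_{wFP}$ since $\mathbf{A}_N$ is invertible), this forces $\check{\mathbf{\Sigma}}_{FP} = \mathbf{A}_N\,\hat{\mathbf{\Sigma}}\,\mathbf{A}_N^H$, whence
\begin{equation*}
\|\check{\mathbf{\Sigma}}_{FP} - \hat{\mathbf{\Sigma}}\| \;\le\; \big(\|\mathbf{A}_N-\mathbf{I}\|\,\|\mathbf{A}_N\| + \|\mathbf{A}_N-\mathbf{I}\|\big)\,\|\hat{\mathbf{\Sigma}}\| \;\overset{a.s.}{\longrightarrow}\; 0\, ,
\end{equation*}
where the convergence uses that $\|\hat{\mathbf{\Sigma}}\|$ is a.s.\ bounded (since $\|\hat{\mathbf{\Sigma}} - \hat{\mathbf{S}}\| \to 0$ by \eqref{eqcouillet2} and $\|\hat{\mathbf{S}}\|$ is a.s.\ bounded, cf.\ \eqref{Seuil}). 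A final triangle inequality with \eqref{eqcouillet2} delivers $\|\check{\mathbf{\Sigma}}_{FP} - \hat{\mathbf{S}}\| \overset{a.s.}{\longrightarrow} 0$.

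The main obstacle is the step $\|\mathbf{A}_N - \mathbf{I}\| \to 0$: knowing only $\|\check{\mathbf{C}}_{FP}-\mathbf{C}\| \to 0$ is not enough, one really needs the uniform separation of the spectrum of $\mathbf{C}$ from $0$ (hence the operator-Lipschitz property of the square root and its inverse) so that empirical whitening by $\check{\mathbf{C}}_{FP}^{-1/2}$ is asymptotically indistinguishable from ideal whitening by $\mathbf{C}^{-1/2}$. The remaining point requiring care — well-posedness of $\check{\mathbf{\Sigma}}_{FP}$ and legitimacy of the conjugation $\mathbf{\Sigma} \mapsto \mathbf{A}_N^{-1}\mathbf{\Sigma}\mathbf{A}_N^{-H}$ — is standard once (i)--(iii) hold and is inherited from the analysis of $\check{\mathbf{\Sigma}}_{SCM}$ in Appendix~B.
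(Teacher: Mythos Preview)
Your proposal is correct and follows essentially the same route as the paper's Appendix~B: both derive the conjugation identity $\check{\mathbf{\Sigma}}_{FP} = \check{\mathbf{C}}_{FP}^{-1/2}\mathbf{C}^{1/2}\,\hat{\mathbf{\Sigma}}\,\mathbf{C}^{1/2}\check{\mathbf{C}}_{FP}^{-1/2}$, split via the triangle inequality into $\|\check{\mathbf{\Sigma}}_{FP}-\hat{\mathbf{\Sigma}}\|+\|\hat{\mathbf{\Sigma}}-\hat{\mathbf{S}}\|$, invoke \eqref{eqcouillet2} for the second term, and bound the first by $\|\check{\mathbf{C}}_{FP}^{-1/2}\mathbf{C}^{1/2}-\mathbf{I}\|\,\|\hat{\mathbf{\Sigma}}\|\,(\|\mathbf{C}^{1/2}\check{\mathbf{C}}_{FP}^{-1/2}\|+1)$ using Theorem~3. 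If anything, you are more explicit than the paper about why $\|\mathbf{A}_N-\mathbf{I}\|\to 0$ (operator-Lipschitz square root on a spectrum bounded away from $0$) and about the a.s.\ boundedness of $\|\hat{\mathbf{\Sigma}}\|$.
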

\begin{proof}
The proof is the same as in Theorem 2 and is provided in Appendix B.  
\end{proof}

The same threshold $t$ given by equation \eqref{eqseuil} can be used on the eigenvalues of $\check{\mathbf{\Sigma}}_{FP}$ to estimate $p$. The final corresponding algorithm is presented below. \\

\subsection{Results}

As in the previous section, it seems interesting to analyse the eigenvalues distributions of $\hat{\mathbf{S}}$ and $\check{\mathbf{\Sigma}}_{FP}$. For the next simulations, source-free samples are considered and the parameters are set to $c = 0.45$, $m=900$ and $N=2000$. The function $u$ chosen for the FP and Maronna $M$-estimators is the same function as before with $\alpha=0.1$. \\
Figure \ref{fig6} presents the eigenvalues distribution of the covariance matrices $\hat{\mathbf{S}}$ and $\check{\mathbf{\Sigma}}_{FP}$ when the signal has been whitened by $\check{\mathbf{C}}_{FP}$. One can notice that the results are the same as Figure \ref{fig2}:  for $N$ large, the distribution of eigenvalues is almost the same as those of $\mathbf{\hat{S}}$. However, as the rate of convergence of \eqref{eqth2} is faster than in  \eqref{eqth1}, it is more interesting to consider the robust method. Moreover, if a robust estimator is not used after the whitening process, the eigenvalues distribution will not follow those of $\hat{\mathbf{S}}$ and will exceed  the threshold $t$. 

\begin{figure}
\centering 
%
%
\begin{tikzpicture}

\begin{axis}[%
width=0.28\textwidth,
at={(0.758in,0.481in)},
scale only axis,
ticklabel style = {font=\tiny},
xmin=0,
xmax=3.5,
ymin=0,
ymax=0.06,
axis background/.style={fill=white},
title={\small Histogram of eigenvalues},
xmajorgrids,
ymajorgrids,
legend style={at={(0.80,0.98)},legend cell align=left, align=left, draw=white!15!black}
]
\addplot[ybar interval, fill=green, fill opacity=0.6, draw=black, area legend] table[row sep=crcr] {%
x	y\\
0.06	0.0122222222222222\\
0.122	0.0566666666666667\\
0.184	0.0577777777777778\\
0.246	0.0555555555555556\\
0.308	0.0522222222222222\\
0.37	0.0466666666666667\\
0.432	0.0455555555555556\\
0.494	0.0422222222222222\\
0.556	0.04\\
0.618	0.04\\
0.68	0.0344444444444444\\
0.742	0.0333333333333333\\
0.804	0.0322222222222222\\
0.866	0.0311111111111111\\
0.928	0.0288888888888889\\
0.99	0.0277777777777778\\
1.052	0.0266666666666667\\
1.114	0.0266666666666667\\
1.176	0.0233333333333333\\
1.238	0.0233333333333333\\
1.3	0.0211111111111111\\
1.362	0.0222222222222222\\
1.424	0.02\\
1.486	0.0177777777777778\\
1.548	0.02\\
1.61	0.0166666666666667\\
1.672	0.0166666666666667\\
1.734	0.0144444444444444\\
1.796	0.0155555555555556\\
1.858	0.0133333333333333\\
1.92	0.0144444444444444\\
1.982	0.0111111111111111\\
2.044	0.0111111111111111\\
2.106	0.01\\
2.168	0.01\\
2.23	0.00666666666666667\\
2.292	0.00777777777777778\\
2.354	0.00777777777777778\\
2.416	0.00222222222222222\\
2.478	0.00444444444444444\\
2.54	0.00444444444444444\\
};
\addlegendentry{\small $\hat{\mathbf{S}}$}

\addplot[ybar interval, fill=blue, fill opacity=0.6, draw=black, area legend] table[row sep=crcr] {%
x	y\\
0.06	0.0122222222222222\\
0.122	0.0566666666666667\\
0.184	0.0577777777777778\\
0.246	0.0555555555555556\\
0.308	0.0522222222222222\\
0.37	0.0477777777777778\\
0.432	0.0455555555555556\\
0.494	0.0422222222222222\\
0.556	0.04\\
0.618	0.0388888888888889\\
0.68	0.0355555555555556\\
0.742	0.0333333333333333\\
0.804	0.0322222222222222\\
0.866	0.03\\
0.928	0.03\\
0.99	0.0277777777777778\\
1.052	0.0255555555555556\\
1.114	0.0266666666666667\\
1.176	0.0244444444444444\\
1.238	0.0233333333333333\\
1.3	0.0211111111111111\\
1.362	0.0222222222222222\\
1.424	0.0188888888888889\\
1.486	0.0188888888888889\\
1.548	0.0188888888888889\\
1.61	0.0166666666666667\\
1.672	0.0177777777777778\\
1.734	0.0144444444444444\\
1.796	0.0144444444444444\\
1.858	0.0133333333333333\\
1.92	0.0144444444444444\\
1.982	0.0122222222222222\\
2.044	0.0111111111111111\\
2.106	0.00888888888888889\\
2.168	0.01\\
2.23	0.00777777777777778\\
2.292	0.00777777777777778\\
2.354	0.00666666666666667\\
2.416	0.00333333333333333\\
2.478	0.00333333333333333\\
2.54	0.00333333333333333\\
};
\addlegendentry{\small $\check{\mathbf{\Sigma}}_{FP}$}

\addplot [color=red, line width=0.5mm]
  table[row sep=crcr]{%
3.03743394491305	0\\
3.03743394491305	0.2\\
};
\addlegendentry{\small Threshold $t$}

\end{axis}
\end{tikzpicture}%
	\caption{Eigenvalues of the covariance matrices $\hat{\mathbf{S}}$ and $\check{\mathbf{\Sigma}}_{FP}$ when the signal is whitened through $\check{\mathbf{C}}_{FP}$ and the corresponding threshold $t$ ($\rho = 0.7$, $m=900$, $N=2000$, $\tau \sim $ inverse gamma).}
	\label{fig6}
\end{figure}
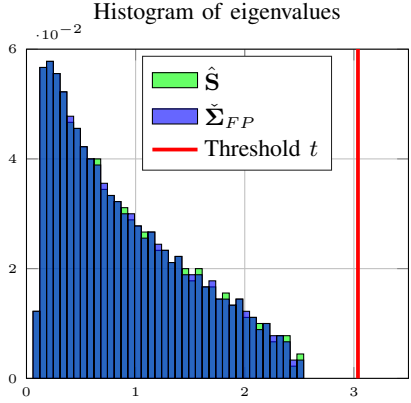
	

For robustness analysis (not the same texture distribution for the $u$ function and the observed samples), Figure \ref{fig8} shows quite good results when $\mathbf{T}$ is a diagonal matrix containing the $\{\tau_i\}_{i\in \llbracket0,N-1\rrbracket}$ on its diagonal where $\{\tau_i\}_{i\in \llbracket 0, N-1\rrbracket}$ are i.i.d. and follow a distribution equal to $t^2$ with $t$ a Student-t random variable with parameter $100$ and $\alpha = 0.1$. 

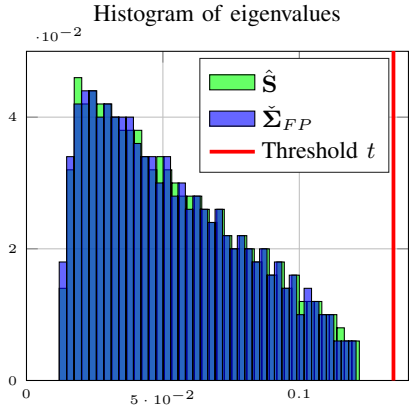
\begin{figure}
\centering 
%
%
\begin{tikzpicture}

\begin{axis}[%
width=0.28\textwidth,
at={(0.758in,0.481in)},
scale only axis,
ticklabel style = {font=\tiny},
xmin=0,
xmax=0.14,
ymin=0,
ymax=0.05,
axis background/.style={fill=white},
title={\small Histogram of eigenvalues},
xmajorgrids,
ymajorgrids,
legend style={at={(0.95,0.98)},legend cell align=left, align=left, draw=white!15!black}
]
\addplot[ybar interval, fill=green, fill opacity=0.6, draw=black, area legend] table[row sep=crcr] {%
x	y\\
0.012	0.014\\
0.01475	0.032\\
0.0175	0.046\\
0.02025	0.042\\
0.023	0.044\\
0.02575	0.042\\
0.0285	0.042\\
0.03125	0.04\\
0.034	0.038\\
0.03675	0.038\\
0.0395	0.038\\
0.04225	0.034\\
0.045	0.032\\
0.04775	0.034\\
0.0505	0.032\\
0.05325	0.03\\
0.056	0.028\\
0.05875	0.028\\
0.0615	0.028\\
0.06425	0.026\\
0.067	0.024\\
0.06975	0.026\\
0.0725	0.022\\
0.07525	0.02\\
0.078	0.022\\
0.08075	0.02\\
0.0835	0.018\\
0.08625	0.02\\
0.089	0.016\\
0.09175	0.018\\
0.0945	0.014\\
0.09725	0.016\\
0.1	0.012\\
0.10275	0.012\\
0.1055	0.012\\
0.10825	0.01\\
0.111	0.01\\
0.11375	0.008\\
0.1165	0.006\\
0.11925	0.006\\
0.122	0.006\\
};
\addlegendentry{\small $\hat{\mathbf{S}}$}

\addplot[ybar interval, fill=blue, fill opacity=0.6, draw=black, area legend] table[row sep=crcr] {%
x	y\\
0.012	0.018\\
0.01472	0.034\\
0.01744	0.042\\
0.02016	0.044\\
0.02288	0.044\\
0.0256	0.04\\
0.02832	0.042\\
0.03104	0.04\\
0.03376	0.04\\
0.03648	0.04\\
0.0392	0.036\\
0.04192	0.034\\
0.04464	0.034\\
0.04736	0.03\\
0.05008	0.034\\
0.0528	0.028\\
0.05552	0.03\\
0.05824	0.026\\
0.06096	0.028\\
0.06368	0.026\\
0.0664	0.024\\
0.06912	0.026\\
0.07184	0.022\\
0.07456	0.02\\
0.07728	0.022\\
0.08	0.02\\
0.08272	0.018\\
0.08544	0.02\\
0.08816	0.016\\
0.09088	0.018\\
0.0936	0.014\\
0.09632	0.016\\
0.09904	0.01\\
0.10176	0.014\\
0.10448	0.012\\
0.1072	0.01\\
0.10992	0.01\\
0.11264	0.006\\
0.11536	0.006\\
0.11808	0.006\\
0.1208	0.006\\
};
\addlegendentry{\small $\check{\mathbf{\Sigma}}_{FP}$}

\addplot [color=red, line width=0.5mm]
  table[row sep=crcr]{%
0.13450992280155	0\\
0.13450992280155	0.2\\
};
\addlegendentry{\small Threshold $t$}

\end{axis}
\end{tikzpicture}%
	\caption{Eigenvalues of the covariance matrices $\hat{\mathbf{S}}$ and $\check{\mathbf{\Sigma}}_{FP}$  when the signal is whitened through  $\check{\mathbf{C}}_{FP}$ and the corresponding threshold ($\rho = 0.7$, $m=900$, $N=2000$, $\tau = t^2$, $t \sim $ student)}
	\label{fig8}
\end{figure}

%

Figure \ref{fig9_bis} presents the same histograms as in Figure \ref{fig6} for a single source of SNR equal to $10~$dB present in the samples. One can observe that only single eigenvalue exceeds the threshold and that the noise eigenvalues distribution of $\check{\mathbf{\Sigma}}_{FP}$ fits well those of $\hat{\mathbf{S}}$. 

\begin{figure}[htbp]
	\centering 
%
%
\begin{tikzpicture}

\begin{axis}[%
width=0.28\textwidth,
at={(0.758in,0.481in)},
scale only axis,
xmin=0,
xmax=3.5,
ymin=0,
ymax=0.1,
axis background/.style={fill=white},
title={\small Histogram of eigenvalues},
xmajorgrids,
ymajorgrids,
legend style={at={(0.85,0.98)},legend cell align=left, align=left, draw=white!15!black}
]
\addplot[ybar interval, fill=green, fill opacity=0.6, draw=black, area legend] table[row sep=crcr] {%
x	y\\
0	0.00111111111111111\\
0.1	0.0811111111111111\\
0.2	0.0911111111111111\\
0.3	0.0811111111111111\\
0.4	0.0733333333333333\\
0.5	0.0655555555555556\\
0.6	0.0622222222222222\\
0.7	0.0555555555555556\\
0.8	0.0511111111111111\\
0.9	0.0466666666666667\\
1	0.0455555555555556\\
1.1	0.04\\
1.2	0.0377777777777778\\
1.3	0.0344444444444444\\
1.4	0.0311111111111111\\
1.5	0.0311111111111111\\
1.6	0.0288888888888889\\
1.7	0.0244444444444444\\
1.8	0.0244444444444444\\
1.9	0.0211111111111111\\
2	0.0188888888888889\\
2.1	0.0166666666666667\\
2.2	0.0133333333333333\\
2.3	0.0122222222222222\\
2.4	0.00777777777777778\\
2.5	0.00333333333333333\\
2.6	0.00333333333333333\\
};
\addlegendentry{\small $\hat{\mathbf{S}}$}

\addplot[ybar interval, fill=blue, fill opacity=0.6, draw=black, area legend] table[row sep=crcr] {%
x	y\\
0	0.00111111111111111\\
0.1	0.0822222222222222\\
0.2	0.0911111111111111\\
0.3	0.0822222222222222\\
0.4	0.0733333333333333\\
0.5	0.0666666666666667\\
0.6	0.0622222222222222\\
0.7	0.0555555555555556\\
0.8	0.0511111111111111\\
0.9	0.0477777777777778\\
1	0.0422222222222222\\
1.1	0.0422222222222222\\
1.2	0.0377777777777778\\
1.3	0.0333333333333333\\
1.4	0.0322222222222222\\
1.5	0.0311111111111111\\
1.6	0.0277777777777778\\
1.7	0.0255555555555556\\
1.8	0.0244444444444444\\
1.9	0.0188888888888889\\
2	0.0188888888888889\\
2.1	0.0155555555555556\\
2.2	0.0144444444444444\\
2.3	0.01\\
2.4	0.00888888888888889\\
2.5	0.00222222222222222\\
2.6	0\\
2.7	0\\
2.8	0\\
2.9	0\\
3	0\\
3.1	0.00111111111111111\\
3.2	0.00111111111111111\\
};
\addlegendentry{\small $\check{\mathbf{\Sigma}}_{FP}$}

\addplot [color=red, line width=0.5mm]
  table[row sep=crcr]{%
3.06108446691931	0\\
3.06108446691931	2\\
};
\addlegendentry{\small Threshold $t$}

\end{axis}
\end{tikzpicture}%
	\caption{Eigenvalues of the covariance matrices $\hat{\mathbf{S}}$ and $\check{\mathbf{\Sigma}}_{FP}$ for a single source with $SNR = 10dB$ present in the samples and the calculated threshold $\rho = 0.7$, $m=900$, $N=2000$, $\tau \sim $ inverse gamma}
	\label{fig9_bis}
\end{figure}
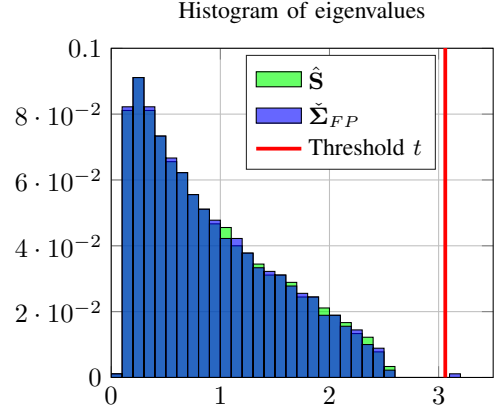

The results are better for this robust method than in the previous section (e.g. figure \ref{fig4}). Indeed, the robust method provides robustness with respect to the distribution of $\tau$: if the distribution of the texture differs to those for which the function $u$ has been computed, the method is still reliable, this can be explained by the robustness of the covariance matrix estimation. As for the non-whitening case, the eigenvalues get over the threshold and no conclusion or model order can be deducted.  These results have so  extended the paper of \cite{Vinogradova14} to the left hand side correlated noise case. The L2-norm of the estimated covariance matrix compared to the SCM tends to zero when $N$ and $m$  tends to infinity with a constant ratio $c$. As a lot of estimation methods for the rank of the signal subspace are based on the estimation of the eigenvalues of the covariance matrix, this new estimator improves the consistency for resolution of this problem.


\section{ Results and Comparisons}

In this section some results of order selection are presented, on both simulated and real hyperspectral images. The simulations are based on $\check{\mathbf{\Sigma}}_{SCM}$  and $\check{\mathbf{\Sigma}}_{FP}$.

\subsection{Estimation of the model order} 

In order to test the proposed method, we simulate hyperspectral images, before dealing with real images. 
As a reminder, we first whiten the received signal thanks to a Toeplitz matrix coming from the SCM or a Fixed-Point estimator. Thus, a $M$-estimator is used to estimate the scatter matrix of the whitened signal. The distribution of its eigenvalues is then studied: a threshold is applied to count how many eigenvalues are higher than this threshold, providing the estimated model order $\hat{p}$. \\

For simulated and correlated ($\rho=0.7$) CES noise,  the $\left\{\tau_i\right\}_{i\in \llbracket 0, N-1 \rrbracket}$ are  inverse gamma distributed with parameter $\nu=0.1$. On Figure \ref{fig12} ($m=400$ and $N=2000$), $p=4$ sources are added in the observations with a SNR varying from $-15$ to $20 $dB. For this figure, the number of sources $\hat{p}$ (average on 4 trials) is estimated  through three methods: AIC, the non-whitened signal and the two proposed methods: when the signal is whitened with the Toeplitz version of the SCM and the one of the FP. The proposed method starts to find sources from a SNR equal to $-5$dB. The FP method seems to better evaluate the number of sources. For a greater SNR, whereas it systematically gives the correct number of sources, the other methods overestimate it. On Figure \ref{fig13}  the same simulation is done for $p=4$ but with the $\left\{\tau_i\right\}_i$ following a distribution equal to $t^2$ where $t$ is a Student-t random variable, as before. On Figure \ref{fig13}, one notice that the proposed estimators still present better performance than the others, and allow to find sources with SNR greater than 0 dB.

\begin{figure}
	\centering 
%
%
\definecolor{mycolor1}{rgb}{1.00000,0.00000,1.00000}%
\definecolor{mycolor2}{rgb}{0.00000,1.00000,1.00000}%
\begin{tikzpicture}

\begin{axis}[%
width = 0.28\textwidth,
at={(0.758in,0.481in)},
scale only axis,
ticklabel style = {font=\tiny},
xmin=-15,
xmax=20,
xlabel style={font=\color{white!15!black}},
xlabel={\small SNR},
ymode=log,
ymin=1,
ymax=1000,
yminorticks=true,
ylabel style={font=\color{white!15!black}},
ylabel={\small number of sources},
axis background/.style={fill=white},
xmajorgrids,
ymajorgrids,
yminorgrids,
legend style= {at={(0.01,0.40)},anchor=west, legend cell align=left}
]
\addplot [color=blue, dashed, mark=asterisk, mark options={solid, blue}]
  table[row sep=crcr]{%
-15	55.5\\
-10	57.5\\
-5	57\\
0	59.5\\
3	60.5\\
5	62\\
7	63.5\\
10	64\\
20	65\\
};
\addlegendentry{\tiny{$\hat{p}$ with $\check{\mathbf{\Sigma}}_{UW}$}}

\addplot [color=red, dashed, mark=asterisk, mark options={solid, red}]
  table[row sep=crcr]{%
-15	0\\
-10	0\\
-5	1\\
0	3\\
3	4\\
5	4\\
7	4\\
10	4\\
20	4\\
};
\addlegendentry{\tiny{$\hat{p}$ with $\check{\mathbf{\Sigma}}_{FP}$}}

\addplot [color=mycolor1, dashed, mark=asterisk, mark options={solid, mycolor1}]
  table[row sep=crcr]{%
-15	0\\
-10	0\\
-5	1.5\\
0	4\\
3	4.5\\
5	4\\
7	4\\
10	5\\
20	4\\
};
\addlegendentry{\tiny{$\hat{p}$ with $\check{\mathbf{\Sigma}}_{SCM}$}}

\addplot [color=mycolor2, dashed, mark=asterisk, mark options={solid, mycolor2}]
  table[row sep=crcr]{%
-15	310\\
-10	314\\
-5	315\\
0	308.5\\
3	315\\
5	315.5\\
7	313.5\\
10	315\\
20	312.5\\
};
\addlegendentry{\tiny{$\hat{p}$ with AIC method}}

\addplot [color=green, dashed, mark=o, mark options={solid, green}]
  table[row sep=crcr]{%
-15	4\\
-10	4\\
-5	4\\
0	4\\
3	4\\
5	4\\
7	4\\
10	4\\
20	4\\
};
\addlegendentry{\tiny{p}}

\end{axis}
\end{tikzpicture}%
	\caption{ Estimation of the number $\hat{p}$ of sources (4 trials) embedded in CES correlated noise ($m=400$, $c=0.2$, $p=4$ source, $\rho=0.7$) versus SNR.}
	\label{fig12}
	\end{figure}
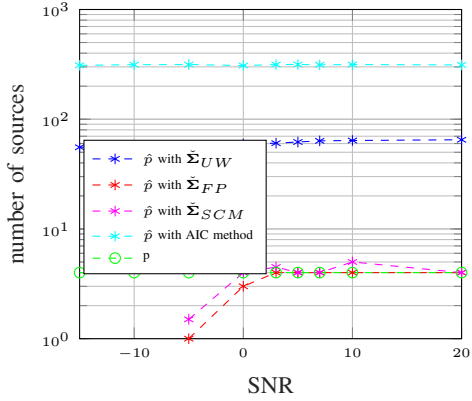

\begin{figure}
	\centering 
%
%
\definecolor{mycolor1}{rgb}{1.00000,0.00000,1.00000}%
\definecolor{mycolor2}{rgb}{0.00000,1.00000,1.00000}%
\begin{tikzpicture}

\begin{axis}[%
width = 0.28\textwidth,
at={(0.758in,0.481in)},
scale only axis,
ticklabel style = {font=\tiny},
xmin=-10,
xmax=10,
xlabel style={font=\color{white!15!black}},
xlabel={\small SNR},
ymode=log,
ymin=1,
ymax=1000,
yminorticks=true,
ylabel style={font=\color{white!15!black}},
ylabel={\small number of sources},
axis background/.style={fill=white},
xmajorgrids,
ymajorgrids,
yminorgrids,
legend style={at={(0.01,0.40)},anchor=west, legend cell align=left, align=left, draw=white!15!black}
]
\addplot [color=blue, dashed, mark=asterisk, mark options={solid, blue}]
  table[row sep=crcr]{%
-10	95\\
-5	94\\
0	96\\
1	97\\
2	97\\
3	97\\
5	96\\
7	97\\
10	100\\
};
\addlegendentry{\tiny{$\hat{p}$ with $\check{\mathbf{\Sigma}}_{UW}$}}

\addplot [color=red, dashed, mark=asterisk, mark options={solid, red}]
  table[row sep=crcr]{%
-10	0\\
-5	0\\
0	2\\
1	2\\
2	3\\
3	3\\
5	4\\
7	4\\
10	4\\
};
\addlegendentry{\tiny{$\hat{p}$ with $\check{\mathbf{\Sigma}}_{FP}$}}

\addplot [color=mycolor1, dashed, mark=asterisk, mark options={solid, mycolor1}]
  table[row sep=crcr]{%
-10	0\\
-5	0\\
0	2\\
1	3\\
2	3\\
3	3\\
5	4\\
7	4\\
10	4\\
};
\addlegendentry{\tiny{$\hat{p}$ with $\check{\mathbf{\Sigma}}_{SCM}$}}

\addplot [color=mycolor2, dashed, mark=asterisk, mark options={solid, mycolor2}]
  table[row sep=crcr]{%
-10	792\\
-5	778\\
0	789\\
1	773\\
2	781\\
3	797\\
5	790\\
7	789\\
10	785\\
};
\addlegendentry{\tiny{$\hat{p}$ with AIC method}}

\addplot [color=green, dashed, mark=o, mark options={solid, green}]
  table[row sep=crcr]{%
-10	4\\
-5	4\\
0	4\\
1	4\\
2	4\\
3	4\\
5	4\\
7	4\\
10	4\\
};
\addlegendentry{\tiny{p}}

\end{axis}
\end{tikzpicture}%
	\caption{Estimation of the number $p$ of sources (4 trials) embedded in CES correlated noise ($m=400$, $c=0.2$, $p=4$ sources, $\rho=0.7$) versus SNR.}
	\label{fig13}
\end{figure}
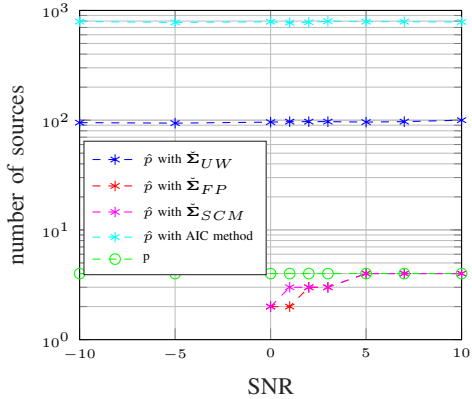

Now, we compare the results obtained with three different methods on several real hyperspectral images found in public access: {\it Indian Pines}, {\it SalinasA} from AVIRIS database and {\it PaviaU} from ROSIS database \cite{siteinternet}. Let $M1$ be the proposed method with a whitening made with the SCM estimator, $M2$ be the proposed method with a whitening made with a Fixed-Point estimator, $M3$ be the method consisting in thresholding the eigenvalues of the Fixed-Point estimator without the whitening step, and the usual AIC method. 
For the function $u(.)$ corresponding to the Student-t distribution, we choose  $\nu=0.1$ for the whitening process if it is done by a fixed-point estimator, and zero for the estimation process. As we do not have any access to the true distribution of the noise, an empirical estimator of $\gamma$ is used, $\hat{\gamma} = \displaystyle \frac{1}{N} \, \sum_{i=1}^{N} \frac{1}{m} \, \mathbf{y}_i^{H} \, \check{\mathbf{\Sigma}}^{-1}_{(i)} \, \mathbf{y}_i$, 
where $\check{\mathbf{\Sigma}}_{(i)} = \check{\mathbf{\Sigma}} - \displaystyle\frac{1}{N} \, u\left(\frac{1}{m} \,\mathbf{y}_i^{H} \, \check{\mathbf{\Sigma}}^{-1} \, \mathbf{y}_i\right) \, \mathbf{y}_i \, \mathbf{y}_i^H$. Then \cite{Couillet15b} shows that $\gamma -\hat{\gamma} \overset{a.s.}{\longrightarrow} 0$. Moreover, as the distribution of  $\tau$ is unknown, we choose to consider that $\mathds{E} \left[\tau\right] $ and $\mathds{E}\left[v(\tau \, \gamma) \, \tau \right]$ are equal to 1. Further works can be carried out to estimate correctly these unknown quantities. However, we can reasonably  assume than $\mathds{E}\left[v(\tau \, \gamma) \, \tau \right]$ and $\mathds{E} \left[\tau\right] $ are not to large and that the estimation error will not impact the results a lot.  The results are summarized in table \ref{table-Tableau1}.  On each image, the result tends to be better than those of classical methods.    

\begin{center}
\begin{table}[H]
\caption {Estimated $p$ for different hyperspectral images.}
\begin{center}
\renewcommand{\arraystretch}{0.7}
{\setlength{\tabcolsep}{0.25cm} 
\begin{tabular}{c|cccc}
\hline
Images & Indian Pines &  SalinasA &  PaviaU &  Cars \\
\hline
 $p$ & 16 & 9 & 9 & 6  \\
 $\hat{p}$ M1 & 11 & 9 & 1 & 3 \\
 $\hat{p}$ M2 & 12 & 9 & 1 & 3 \\
 $\hat{p}$ M3 &220 & 204 & 103 & 1 \\
 $\hat{p}$ AIC& 219 & 203 & 102& 143 \\
\hline
\end{tabular}}
\end{center}
\label{table-Tableau1}
\end{table}
\end{center}

\section{Conclusion}

The model order selection for large dimensional data and for sources embedded in correlated CES noise is tackled in this article. Two Toeplitz-based covariance matrix estimators are 
first introduced, and their consistency has been proved. As for the CES texture, it is handled with any $M$-estimator, which can then be used to estimate the correct structure of the scatter matrix 
built on whitened observations. The Random Matrix Theory provides tools to correctly estimate the model order. Results obtained on real and simulated hyperspectral images are 
promising. Moreover, the proposed method can be applied on a lot of other kind of model order selection problems such as radar clutter rank estimation, sources localization or any 
hyperspectral problems such as anomaly detection or  linear or non-linear unmixing techniques.

\begin{appendices}

\section{Proofs of Theorem 1 and Theorem 3}
\label{sec::4}
The proofs of Theorem 1 and Theorem 3 are inspired by \cite{Vinogradova14}. 
For these theorems, we will use the lemma 4.1 in \cite{Gray6}, that is, for $\mathbf{T} = \mathcal{L}\left(\left(t_0, \ldots, t_{m-1}\right)^T\right)$ a Toeplitz Hermitian $m \times m$-matrix with $\left\{t_{k}\right\}_{k\in \llbracket0,m-1\rrbracket}$ absolutely summable ($t_{-k} = t_k^\ast$), we can define the function $f(.)$ such that for any $\lambda \in[0,2\,\pi]$, $f(\lambda) = \displaystyle \sum_{k=1-m}^{m-1} t_k \, e^{i \lambda k} $ and $M_f$ characterizes its essential supremum:
\begin{equation}
\Vert \mathbf{T} \Vert \leq M_f = \underset{\lambda \in \left[  0,2 \pi \right)  }{\sup} \left\vert \displaystyle \sum_{k=1-m}^{m-1} t_k \, e^{i \lambda k} \right\vert \, .
\label{eqGray}
\end{equation}

\subsection{Proof of Theorem 1}
As in the main body of this article, let $\mathbf{Y} = \mathbf{M} \, \boldsymbol{\delta}^H \, \mathbf{\Gamma}^{1/2}  + \mathbf{C}^{1/2}\,\mathbf{ X}\,\mathbf{ T}^{1/2}$ and let $\mathcal{T}$ be the Toeplitz operator as defined in the introduction and for any $m$-vector $\mathbf{x}$, $\left( \left[\mathcal{L} (\mathbf{x})\right]_{i,j} \right) _{i\leq j} = x_{i-j}$ and $\left( \left[\mathcal{L} (\mathbf{x})\right]_{i,j} \right)_{i>j} = x_{i-j}^\ast$, of size $m \times m$. Under Assumption 1,  Assumption 2, Assumption 3 and as $\mathcal{T} \left(\displaystyle \frac{1}{N} \, \mathbf{Y}\,  \mathbf{Y}^H\right) $ and $\mathds{E}[\tau] \, \mathbf{C}$ are Toepltiz matrices, one can write, thanks to \eqref{eqGray}: 
\begin{equation}
\left\Vert \mathcal{T} \left(\frac{1}{N} \, \mathbf{Y}\,  \mathbf{Y}^H\right) -  \mathds{E}[\tau]\,  \mathbf{C} \right\Vert \leq \underset{\lambda \in \left[  0,2 \pi \right)  }{\sup} \left\vert  \hat{\gamma}_m (\lambda) - \mathds{E}[\tau] \,\gamma_m (\lambda)   \right\vert \, ,
\label{eqproof1}
\end{equation}
where $\mathbf{\gamma}_m (\lambda) =  \displaystyle \sum_{k =1-m}^{m-1} c_{k,m} \, e^{i\,k \,\lambda}$ with $c_{-k} = c_k^{\star}$ and $\hat{\mathbf{\gamma}}_m (\lambda) =  \displaystyle \sum_{k =1-m}^{m-1} \check{c}_{k,m} \, e^{i\,k \,\lambda}$ with $\check{c}_{-k} = \check{c}_k^{\star}$. \\

The following lemma is essential for the development of the proof:
 \begin{lemma}
 \label{lemma1}
 The quantity $\hat{\mathbf{\gamma}}_m (\lambda)$ can be rewritten as:
\begin{equation}
\hat{\mathbf{\gamma}}_m (\lambda) = \mathbf{d}_m^H (\lambda) \, \frac{\mathbf{Y}\,  \mathbf{Y}^H}{N} \, \mathbf{d}_ m (\lambda) \, ,
\label{eq1th1lemme1}
\end{equation} 
with   $\mathbf{d}_m (\lambda) =\displaystyle \frac{1}{\sqrt{m}} \, \left(1, e^{-i \,\lambda}, \ldots, e^{-i\,(m-1) \,\lambda} \right)^T$. 
\end{lemma}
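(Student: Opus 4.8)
The plan is to treat this as a bookkeeping identity: evaluate the Hermitian quadratic form $\mathbf{d}_m^H(\lambda)\,\widehat{\mathbf{C}}_{SCM}\,\mathbf{d}_m(\lambda)$ in coordinates and recognise the result as the trigonometric polynomial whose coefficients are the averaged diagonals of $\widehat{\mathbf{C}}_{SCM}=\frac1N\mathbf{Y}\mathbf{Y}^H$, which is exactly $\hat{\gamma}_m(\lambda)$ by construction of the operator $\mathcal{T}$.

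First I would use $[\mathbf{d}_m(\lambda)]_p=\frac1{\sqrt m}e^{-ip\lambda}$ for $p\in\llbracket 0,m-1\rrbracket$ and expand
\begin{equation*}
\mathbf{d}_m^H(\lambda)\,\frac{\mathbf{Y}\mathbf{Y}^H}{N}\,\mathbf{d}_m(\lambda)=\frac1m\sum_{p=0}^{m-1}\sum_{q=0}^{m-1}\left[\frac{\mathbf{Y}\mathbf{Y}^H}{N}\right]_{p,q}e^{i(p-q)\lambda}\,.
\end{equation*}
Then I would reorganise the double sum along diagonals by setting $k=p-q\in\llbracket 1-m,m-1\rrbracket$, so that the coefficient of $e^{ik\lambda}$ becomes $\frac1m$ times the sum of the entries of $\mathbf{Y}\mathbf{Y}^H/N$ lying on the $k$-th diagonal (there are $m-|k|$ such entries, which matches the indicator $\mathds{1}_{0\le i+k<m}$ appearing in the definition of $\check c_{SCM,k}$, and the $1/m$ is the normalisation prescribed by $\mathcal{T}$). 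By the very definition of $\mathcal{T}$, this coefficient equals $\check c_{k,m}$ for $k\ge 0$ and its conjugate for $k<0$, so the right-hand side is $\sum_{k=1-m}^{m-1}\check c_{k,m}e^{ik\lambda}=\hat\gamma_m(\lambda)$, i.e. \eqref{eq1th1lemme1}.

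There is no genuine obstacle here; the only delicate point is the conjugation convention, since the direct reindexing places $\check c_{k,m}^{\star}$ (rather than $\check c_{k,m}$) on the exponential $e^{ik\lambda}$ for $k\ge 0$, and vice versa on $e^{-ik\lambda}$. This discrepancy is only apparent: $\widehat{\mathbf{C}}_{SCM}$ is Hermitian, so its Toeplitz symbol $\hat\gamma_m$ is real-valued, whence $\overline{\hat\gamma_m(\lambda)}=\hat\gamma_m(\lambda)$ — which is precisely the symmetry $\check c_{-k}=\check c_k^{\star}$ used to define $\hat\gamma_m$ and which reconciles the two orderings of the coefficients. Once \eqref{eq1th1lemme1} is established, it will be used to transfer concentration estimates for the quadratic forms $\mathbf{d}_m^H(\lambda)\,(\cdot)\,\mathbf{d}_m(\lambda)$ into a uniform-in-$\lambda$ control of $|\hat\gamma_m(\lambda)-\mathds{E}[\tau]\,\gamma_m(\lambda)|$, which via \eqref{eqproof1} yields Theorem 1.
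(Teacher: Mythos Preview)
Your proposal is correct and follows essentially the same computation as the paper: expand the quadratic form in coordinates, reindex the double sum along diagonals $k=p-q$, and identify the resulting coefficients with the $\check c_{k,m}$. The paper uses the opposite sign convention $k=l'-l$ and arrives at $\sum_k \check c_k\,e^{-ik\lambda}$ without commenting on the conjugation, whereas you explicitly note and resolve the $\check c_k$ versus $\check c_k^\star$ placement via the Hermiticity of $\widehat{\mathbf C}_{SCM}$; this extra care is fine but the underlying argument is identical.
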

 
\begin{proof}
The proof draws his inspiration from the one of Appendix A1 in \cite{Vinogradova14}. Equation \eqref{eq1th1lemme1} can be rewritten as:
\begin{eqnarray}
 & \mathbf{d}_m^H (\lambda) & \,\displaystyle\frac{\mathbf{Y} \,\mathbf{Y}^H}{N} \,\mathbf{d}_m (\lambda)  =  \frac{1}{m\,N} \displaystyle \sum_{l,l' =0}^{m-1} e^{-i\,(l'-l)\,\lambda} \, \left[\mathbf{Y} \,\mathbf{Y}^H \right]_{l,l'} \, ,\nonumber \\
&= &\displaystyle \sum_{k= 1-m}^{m-1} e^{-i\,k \,\lambda} \,\frac{1}{m \,N} \displaystyle \sum_{i=0}^{m-1} \displaystyle \sum_{j=0}^{N-1} y_{i,j}  \, y_{i+k,j}^\star \mathds{1}_{0 \leq i+k\leq  m } \, ,\nonumber \\
& = & \displaystyle \sum_{k= 1-m}^{m-1} \check{c}_k \, e^{-i\,k \,\lambda} \nonumber \, .
\end{eqnarray}
\end{proof}

\vspace{-1em}
Thereby, we have:
\vspace{-1em}

\begin{eqnarray}
& \hat{\mathbf{\gamma}}&_m (\lambda)  =  \mathbf{d}_m^H (\lambda) \, \frac{\mathbf{Y}\,  \mathbf{Y}^H}{N} \, \mathbf{d}_m (\lambda)\, , \nonumber  \\ 
 &=&  \mathbf{d}_m^H (\lambda)\, \frac{\mathbf{M} \boldsymbol{\delta}^H  \, \mathbf{\Gamma} \, \boldsymbol{\delta} \, \mathbf{M}^H}{N} \, \mathbf{d}_m (\lambda) \nonumber \\ 
 &+&  \mathbf{d}_m^H (\lambda) \, \frac{\mathbf{C}^{1/2}\,  \mathbf{X} \, \mathbf{T}^{1/2}  \, \mathbf{\Gamma}^{1/2} \, \boldsymbol{\delta}  \, \mathbf{M}^H}{N}  \, \mathbf{d}_m (\lambda) \nonumber \\ 
  &+&  \mathbf{d}_m^H (\lambda) \, \frac{\mathbf{M}  \, \boldsymbol{\delta}^H  \mathbf{\Gamma}^{1/2} \, \mathbf{T}^{1/2} \, \mathbf{X}^H \, \mathbf{C}^{1/2}}{N}  \, \mathbf{d}_m (\lambda)   \nonumber \\ 
 &+&  \mathbf{d}_m^H (\lambda) \, \frac{\mathbf{C}^{1/2}  \, \mathbf{X} \,  \mathbf{T} \, \mathbf{X}^H   \, \mathbf{C}^{1/2}}{N} \, \mathbf{d}_m (\lambda) \, \label{eqcomp} . 
\end{eqnarray}

And we note : $ \hat{\mathbf{\gamma}}_m^{sign} (\lambda) = \mathbf{d}_m^H (\lambda)\, \frac{\mathbf{M} \boldsymbol{\delta}^H  \, \mathbf{\Gamma} \, \boldsymbol{\delta} \, \mathbf{M}^H}{N} \, \mathbf{d}_m (\lambda)$, \\

$\hat{\mathbf{\gamma}}_m^{cross} (\lambda) =$ \\ 
\indent $\mathbf{d}_m^H (\lambda) \, \frac{\mathbf{C}^{1/2}\,  \mathbf{X} \, \mathbf{T}^{1/2}  \, \mathbf{\Gamma}^{1/2} \, \boldsymbol{\delta}  \, \mathbf{M}^H + \mathbf{M}  \, \boldsymbol{\delta}^H  \mathbf{\Gamma}^{1/2} \, \mathbf{T}^{1/2} \, \mathbf{X}^H \, \mathbf{C}^{1/2}}{N}  \, \mathbf{d}_m (\lambda)$   \\

$\hat{\mathbf{\gamma}}_m^{noise} (\lambda) =  \mathbf{d}_m^H (\lambda) \, \frac{\mathbf{C}^{1/2}  \, \mathbf{X} \,  \mathbf{T} \, \mathbf{X}^H   \, \mathbf{C}^{1/2}}{N} \, \mathbf{d}_m (\lambda) $ .

And the equation \eqref{eqproof1} becomes: 
\begin{eqnarray}
 && \left\Vert  \mathcal{T} \left(\frac{1}{N} \, \mathbf{Y}\,  \mathbf{Y}^H\right) -  \mathds{E}[\tau]\,  \mathbf{C} \right\Vert \leq  \underset{\lambda \in \left[  0,2 \pi \right)  }{\sup} \vert  \hat{\gamma}_m^{noise} (\lambda) \nonumber \\
 && +  \hat{\gamma}_m^{sign} (\lambda) + \hat{\gamma}_m^{cross} (\lambda) - \mathds{E}[\tau] \,\gamma_m (\lambda)   \vert \, . 
\label{eqproof3}
\end{eqnarray}

This leads to: 
\begin{eqnarray}
&& \left\Vert \mathcal{T} \left(\frac{1}{N} \mathbf{Y} \, \mathbf{Y}^H\right) -  \mathds{E}[\tau]\,  \mathbf{C} \right\Vert \nonumber \\
&& \leq  \underset{\lambda \in \left[  0,2 \pi \right)  }{\sup} \left\vert  \hat{\gamma}_m^{noise} (\lambda)  - \mathds{E} \left[\hat{\gamma}_m^{noise} (\lambda) \right] \right\vert \nonumber  \\
&&  + \underset{\lambda \in \left[  0,2 \pi \right)  }{\sup} \left\vert  \mathds{E} \left[\hat{\gamma}_m^{noise} (\lambda)\right] - \mathds{E}(\tau) \, \gamma_m (\lambda)  \right\vert  \nonumber \\ 
&& +  \underset{\lambda \in \left[  0,2 \pi \right)  }{\sup} \left\vert \hat{\gamma}_m^{sign} (\lambda)\right\vert + \underset{\lambda \in \left[  0,2 \pi \right)  }{\sup} \left\vert \hat{\gamma}_m^{cross} (\lambda) \right\vert \, .
\label{eqproof5}
\end{eqnarray}

We will now analyse each term of \eqref{eqproof5}. \\ 

 \subsubsection{Analysis of $\underset{\lambda \in \left[  0,2 \pi \right)  }{\sup} \left\vert  \mathds{E} \left[\hat{\gamma}_m^{noise} (\lambda)\right] - \mathds{E}(\tau) \, \gamma_m (\lambda)  \right\vert$ }

We first need the following lemma:

\begin{lemma}
\label{lemma2}
\begin{equation}
\mathds{E} \left[\hat{\gamma}_m^{noise} (\lambda) \right] = \mathds{E}[\tau] \, \mathbf{d}_m^H (\lambda) \, \mathbf{C} \, \mathbf{d}_m (\lambda) = \mathds{E}[\tau] \, \gamma_m (\lambda)\, .
\label{eq2th1lemme1}
\end{equation}
\end{lemma}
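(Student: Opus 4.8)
The plan is a direct first-moment computation, exploiting the independence structure of the model and the Toeplitz form of $\mathbf{C}$. Recall that $\hat{\gamma}_m^{noise}(\lambda) = \frac{1}{N}\,\mathbf{d}_m^H(\lambda)\,\mathbf{C}^{1/2}\,\mathbf{X}\,\mathbf{T}\,\mathbf{X}^H\,\mathbf{C}^{1/2}\,\mathbf{d}_m(\lambda)$. Since $\mathbf{d}_m(\lambda)$ and $\mathbf{C}^{1/2}$ are deterministic, the whole problem reduces to computing $\mathds{E}\!\left[\mathbf{X}\,\mathbf{T}\,\mathbf{X}^H\right]$. I would condition on $\mathbf{T}$ (legitimate because $\mathbf{X}$ and $\mathbf{T}$ are independent and all entries have finite second moments): as $\mathbf{T}$ is diagonal with entries $\{\tau_i\}$ and $\mathbf{X}$ has i.i.d.\ entries with zero mean and unit variance, the $(k,\ell)$ entry of $\mathds{E}\!\left[\mathbf{X}\,\mathbf{T}\,\mathbf{X}^H\mid\mathbf{T}\right]$ equals $\sum_{i=0}^{N-1}\tau_i\,\mathds{E}\!\left[X_{k,i}\,X_{\ell,i}^\star\right] = \delta_{k,\ell}\,\mathrm{Tr}(\mathbf{T})$, so $\mathds{E}\!\left[\mathbf{X}\,\mathbf{T}\,\mathbf{X}^H\mid\mathbf{T}\right] = \mathrm{Tr}(\mathbf{T})\,\mathbf{I}_m$. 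Taking expectation over $\mathbf{T}$ and using that the $\tau_i$ are identically distributed with common mean $\mathds{E}[\tau]$ yields $\mathds{E}\!\left[\mathbf{X}\,\mathbf{T}\,\mathbf{X}^H\right] = N\,\mathds{E}[\tau]\,\mathbf{I}_m$, hence $\mathds{E}\!\left[\mathbf{C}^{1/2}\,\mathbf{X}\,\mathbf{T}\,\mathbf{X}^H\,\mathbf{C}^{1/2}\right] = N\,\mathds{E}[\tau]\,\mathbf{C}$.

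Plugging this back gives $\mathds{E}\!\left[\hat{\gamma}_m^{noise}(\lambda)\right] = \mathds{E}[\tau]\,\mathbf{d}_m^H(\lambda)\,\mathbf{C}\,\mathbf{d}_m(\lambda)$, which is the first equality of the lemma. For the second equality I would simply repeat the diagonal-collecting computation already carried out in the proof of Lemma~\ref{lemma1}, now with $\mathbf{C}$ in place of $N^{-1}\mathbf{Y}\mathbf{Y}^H$: writing $\mathbf{d}_m(\lambda) = m^{-1/2}\bigl(1,e^{-i\lambda},\dots,e^{-i(m-1)\lambda}\bigr)^T$ and using that $[\mathbf{C}]_{\ell,\ell'}$ depends only on $\ell'-\ell$, one obtains $\mathbf{d}_m^H(\lambda)\,\mathbf{C}\,\mathbf{d}_m(\lambda) = \frac{1}{m}\sum_{\ell,\ell'=0}^{m-1} e^{i(\ell-\ell')\lambda}\,[\mathbf{C}]_{\ell,\ell'}$, and grouping the $m-|k|$ entries lying on the $k$-th diagonal of $\mathbf{C}$ identifies this with $\gamma_m(\lambda) = \sum_{k=1-m}^{m-1} c_{k,m}\,e^{ik\lambda}$ by the very definition of the (windowed, diagonal-averaged) coefficients $c_{k,m}$, with $c_{-k,m}=c_{k,m}^\star$ following from $c_{-k}=c_k^\star$.

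I do not expect a genuine obstacle here: the statement is essentially a moment identity. The only points that require a little care are (i) the exchange of expectation with the quadratic form and the conditioning step, justified by the independence of $\mathbf{X}$ and $\mathbf{T}$ together with finite second moments, and (ii) the conjugation convention in the Toeplitz structure of $\mathbf{C}$ (upper versus lower triangular entries), which must be tracked when collecting diagonals so that the sign of the exponent and the coefficients match the definition of $\gamma_m(\lambda)$ given just above the lemma.
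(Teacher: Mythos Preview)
Your proposal is correct and essentially mirrors the paper's own argument: the paper also pulls the expectation inside the quadratic form and computes the $(i,j)$ entry of $\mathds{E}\!\left[\mathbf{C}^{1/2}\mathbf{X}\,\mathbf{T}\,\mathbf{X}^H\mathbf{C}^{1/2}\right]$ to obtain $N\,\mathds{E}[\tau]\,c_{j-i}$, i.e.\ $N\,\mathds{E}[\tau]\,\mathbf{C}$, and then identifies $\mathbf{d}_m^H(\lambda)\,\mathbf{C}\,\mathbf{d}_m(\lambda)$ with $\gamma_m(\lambda)$. The only cosmetic difference is that the paper sets $\mathbf{V}=\mathbf{C}^{1/2}\mathbf{X}$ and works with $\mathds{E}[v_{p,n}v_{p+k,n}^\star]=c_k$ directly, whereas you first reduce $\mathds{E}[\mathbf{X}\mathbf{T}\mathbf{X}^H\mid\mathbf{T}]$ to $\mathrm{Tr}(\mathbf{T})\,\mathbf{I}_m$ and then sandwich by $\mathbf{C}^{1/2}$; the content is identical.
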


\begin{proof}
The equation \eqref{eqcomp} gives
$\mathds{E} \left[\hat{\gamma}_m^{noise} (\lambda)\right] = \mathbf{d}_m^H(\lambda) \, \mathds{E}  \left[  \displaystyle \frac{\mathbf{C}^{1/2} \mathbf{X} \, \mathbf{T} \, \mathbf{X}^H \mathbf{C}^{1/2}}{N} \right]\,  \mathbf{d}_m (\lambda)$. Let $\mathbf{V} = \mathbf{C}^{1/2} \mathbf{X}$ and $\left(\mathbf{V}\right)_{i,j} = v_{i,j}$. We obtain 
$\mathds{E} \left[\hat{\gamma}_m^{noise} (\lambda)\right] = \mathbf{d}_m^H(\lambda) \, \mathds{E}  \left[  \displaystyle \frac{ \mathbf{V} \, \mathbf{T} \, \mathbf{V}^H }{N} \right]\,  \mathbf{d}_m (\lambda)$. As $ \left( \mathds{E} \left[\mathbf{V} \,\mathbf{T} \,\mathbf{V}^H \right]\right)_{i j} =  \displaystyle \sum_{k=1}^N  \mathds{E}[\tau]\,   \mathds{E} \left[v_{j,k}^{\ast} \, v_{ik}\right]$ and  $c_{k'} = \mathds{E} \left[v_{p, n} \,v_{p+k',n}^{\ast}\right]$, we have $\left( \mathds{E} \left[\mathbf{V} \, \mathbf{T} \, \mathbf{V}^H\right] \right)_{i j}  =  \displaystyle \sum_{k=1}^N \mathds{E}[\tau]\,  c_{j-i} = N \,\mathds{E}[\tau] \, c_{j-i}$. This leads to
\begin{equation*}
\mathbf{\mathds{E}} \left[\hat{\gamma}_m^{noise} (\lambda)\right] = \frac{N\, \mathds{E}(\tau) }{N}  \,\mathbf{d}_m^H (\lambda)  \, \mathbf{C}  \, \mathbf{d}_m (\lambda) = \mathds{E}[\tau] \,  \gamma_m (\lambda) \, .
\end{equation*}
\end{proof}


Thereby, the second term of \eqref{eqproof5} leads to: 
\begin{align*}
& \underset{\lambda \in \left[  0,2 \pi \right)  }{\sup} \left\vert  \mathds{E} \left[ \hat{\gamma}_m^{noise} (\lambda) \right] - \mathds{E}(\tau)  \,\gamma_m(\lambda)  \right \vert \\
& =  \underset{\lambda \in \left[  0,2 \pi \right)  }{\sup} \mid  \mathds{E}(\tau)   \,\gamma_m(\lambda) -  \mathds{E}(\tau) \, \gamma_m (\lambda)  \mid  = 0\, .
\end{align*}
The second term is equal to zero.


 \subsubsection{Analysis of $\underset{\lambda \in \left[  0,2 \pi \right)  }{\sup} \left\vert  \hat{\gamma}_m^{noise} (\lambda)  - \mathds{E} \left[\hat{\gamma}_m^{noise} (\lambda) \right] \right\vert $}
 
As in \cite{Vinogradova14}, the method consists in proving for a $\lambda_i \in \left[  0,2 \pi \right) $ and a real $ x>0 $ that $ \mathds{P} \left[ \left\vert  \hat{\gamma}_m^{noise} (\lambda_i)  - \mathds{E} \left[ \gamma_m^{noise} (\lambda_i)  \right] \right\vert > x \right] \rightarrow 0$. After that, it remains to prove that $ \mathds{P} \left[  \underset{\lambda \in \left[  \lambda_i , \lambda_{i+1} \right)  }{\sup} \left\vert \gamma_m^{noise} (\lambda) - \gamma_m^{noise} (\lambda_i)  \right\vert > x \right] \rightarrow 0 $ and that  $ \mathds{P} \left[  \underset{\lambda \in \left[  \lambda_i , \lambda_{i+1} \right)  }{\sup} \left\vert \mathds{E} \gamma_m^{noise} (\lambda_i) - \mathds{E} \gamma_m^{noise} (\lambda)  \right\vert > x \right] \rightarrow 0$.
 \noindent Let $\lfloor\cdot\rfloor $ be the floor function, choosing a $\beta > 2 $,  $\mathcal{I} = \left[  0, \ldots, \lfloor N^{\beta} \rfloor -1 \right] $,  $\lambda_i = 2\, \pi \,\frac{i}{\lfloor N^{\beta}\rfloor }$, $i \in \mathcal{I}$: 
\begin{eqnarray}
&& \underset{\lambda \in \left[  0,2 \pi \right)  }{\sup} \left\vert  \hat{\gamma}_m^{noise} (\lambda) -  \mathds{E} \left[\hat{\gamma}_m^{noise} (\lambda)\right] \right\vert \nonumber \\
 && \leqslant  \max_{i \in \mathcal{I}} \underset{\lambda \in \left[  \lambda_i \lambda_{i+1} \right] }{\sup} \left\vert  \hat{\gamma}_m^{noise} (\lambda) -  \hat{\gamma}_m^{noise} (\lambda_i) \right\vert  \nonumber \\
& & + \max_{i \in \mathcal{I}} \left\vert  \hat{\gamma}_m^{noise} (\lambda_i) -  \mathds{E} \left[\hat{\gamma}_m^{noise} (\lambda_i)\right] \right\vert \nonumber \\
& & + \max_{i \in \mathcal{I}}  \underset{\lambda \in \left[  \lambda_i \lambda_{i+1} \right] }{\sup} \left\vert \mathds{E} \left[ \hat{\gamma}_m^{noise} (\lambda_i)\right] -  \mathds{E} \left[ \hat{\gamma}_m^{noise} (\lambda)\right] \right\vert  \nonumber \, ,\\
 & \overset{\triangle}{=} & \chi_1 + \chi_2 +\chi_3\, .
 \label{eqgamma}
\end{eqnarray}

\noindent The idea of the proof in \cite{Vinogradova14} is then to provide concentration inequalities for the term $\chi_1$ and $\chi_2$ (random terms) and a bound on $\chi_3$. The only difference with \cite{Vinogradova14} is the presence of the matrix $\mathbf{T}$  in $\hat{\gamma}_m^{noise}(\lambda)$ and the left side correlation of the noise. Let note $ \Vert \boldsymbol{\gamma} \Vert_{\infty}$ the sup norm of the function $\boldsymbol{\gamma}: \lambda \longrightarrow \displaystyle \sum_{k=-\infty}^{\infty}c_k \, e^{-ik\lambda}$ for $\lambda \in [0 \, 2\pi )$. The convergence of the first term $\chi_1$ is proposed in the following lemma.

\begin{lemma}
\label{lemma3}
A constant $A > 0$ can be found
\end{lemma}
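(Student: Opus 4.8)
The plan is to use that, for each realisation of the noise, $\lambda\mapsto\hat{\gamma}_m^{noise}(\lambda)$ is a trigonometric polynomial of degree at most $m-1$, hence Lipschitz in $\lambda$ with a constant governed by the spectral norm of the matrix defining it; since $\chi_1$ in \eqref{eqgamma} is the maximal oscillation of this map over arcs of length $2\pi/\lfloor N^{\beta}\rfloor$, it must be of order $m\,N^{-\beta}$ times that norm, which vanishes because $\beta>2$. Accordingly, I expect the lemma to state that there is $A>0$ with $\chi_1\le \dfrac{A\,m}{N^{\beta}}\,\bigl\|\tfrac1N\mathbf{C}^{1/2}\mathbf{X}\,\mathbf{T}\,\mathbf{X}^{H}\mathbf{C}^{1/2}\bigr\|$, whence $\chi_1\overset{a.s.}{\longrightarrow}0$.

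First I would write, recalling the definition of $\hat{\gamma}_m^{noise}$ just after \eqref{eqcomp} and the vector $\mathbf{d}_m(\lambda)=\tfrac{1}{\sqrt m}(1,e^{-i\lambda},\dots,e^{-i(m-1)\lambda})^{T}$ of Lemma~\ref{lemma1}, that $\hat{\gamma}_m^{noise}(\lambda)=\mathbf{d}_m^{H}(\lambda)\,\mathbf{B}_m\,\mathbf{d}_m(\lambda)$ with $\mathbf{B}_m=\tfrac1N\mathbf{C}^{1/2}\mathbf{X}\,\mathbf{T}\,\mathbf{X}^{H}\mathbf{C}^{1/2}$. Since $\|\mathbf{d}_m(\lambda)\|=1$ for all $\lambda$ and $\tfrac{d}{d\lambda}\mathbf{d}_m(\lambda)=\tfrac{1}{\sqrt m}(0,-ie^{-i\lambda},\dots,-i(m-1)e^{-i(m-1)\lambda})^{T}$ has norm $\bigl(\tfrac1m\sum_{k=0}^{m-1}k^{2}\bigr)^{1/2}\le m$, differentiating the quadratic form and using Cauchy--Schwarz yields $\bigl|\tfrac{d}{d\lambda}\hat{\gamma}_m^{noise}(\lambda)\bigr|\le 2\bigl\|\tfrac{d}{d\lambda}\mathbf{d}_m(\lambda)\bigr\|\,\|\mathbf{B}_m\|\,\|\mathbf{d}_m(\lambda)\|\le 2m\|\mathbf{B}_m\|$. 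By the mean value theorem, for $\lambda\in[\lambda_i,\lambda_{i+1}]$ one gets $|\hat{\gamma}_m^{noise}(\lambda)-\hat{\gamma}_m^{noise}(\lambda_i)|\le 2m\|\mathbf{B}_m\|\,|\lambda-\lambda_i|\le \tfrac{4\pi m}{\lfloor N^{\beta}\rfloor}\|\mathbf{B}_m\|$; taking the supremum over the arc and the maximum over $i\in\mathcal{I}$, and using $\lfloor N^{\beta}\rfloor\ge N^{\beta}/2$ for large $N$, gives the announced bound with, say, $A=8\pi$.

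It then remains to show $\tfrac{m}{N^{\beta}}\|\mathbf{B}_m\|\to0$ almost surely. I would write $\|\mathbf{B}_m\|=\tfrac1N\|\mathbf{C}^{1/2}\mathbf{X}\,\mathbf{T}^{1/2}\|^{2}\le\|\mathbf{C}\|\,\tfrac1N\|\mathbf{X}\|^{2}\,\max_i\tau_i$, note that $\|\mathbf{C}\|=O(1)$ and $\tfrac1N\|\mathbf{X}\|^{2}\to(1+\sqrt c)^{2}$ a.s.\ by Assumption~2 (largest eigenvalue of the Wishart-type matrix $\tfrac1N\mathbf{X}\mathbf{X}^{H}$), while $\max_i\tau_i\le\sum_i\tau_i=N\!\int\tau\,\mu_N(d\tau)\le 2N$ eventually a.s.; hence $\tfrac{m}{N^{\beta}}\|\mathbf{B}_m\|\le C\,\tfrac mN\,N^{2-\beta}\to0$ since $\beta>2$. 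I expect the only delicate point to be this last step: controlling $\|\mathbf{B}_m\|$ amounts to controlling the largest eigenvalue of the texture-weighted, left-$\mathbf{C}$-correlated sample covariance matrix $\tfrac1N\mathbf{X}\,\mathbf{T}\,\mathbf{X}^{H}$, which needs the i.i.d./moment content of Assumption~2 for the Bai--Yin-type bound on $\tfrac1N\mathbf{X}\mathbf{X}^{H}$ combined with the crude tail bound on $\max_i\tau_i$; it is precisely here that $\beta>2$ is used, a weaker Frobenius estimate $\|\mathbf{X}\|^{2}\le\|\mathbf{X}\|_F^{2}\asymp mN$ forcing instead $\beta>3$.
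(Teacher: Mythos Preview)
Your argument is correct and delivers the needed conclusion $\chi_1\overset{a.s.}{\longrightarrow}0$, but your guess at the lemma's content is off: in the paper Lemma~\ref{lemma3} is an exponential concentration inequality
\[
\mathds{P}[\chi_1>x]\le\exp\!\Bigl(-\tfrac{cN^{2}}{\|\mathbf{T}\|_{\infty}}\bigl(\tfrac{xN^{\beta-2}}{A\|\boldsymbol{\gamma}\|_{\infty}}-\log\tfrac{xN^{\beta-2}}{A\|\boldsymbol{\gamma}\|_{\infty}}-1\bigr)\Bigr),
\]
not a deterministic bound of the form $\chi_1\le Am\,N^{-\beta}\|\mathbf{B}_m\|$. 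The first half of your route and the paper's coincide: both control the oscillation of $\lambda\mapsto\hat{\gamma}_m^{noise}(\lambda)$ on each arc $[\lambda_i,\lambda_{i+1}]$ by a Lipschitz estimate $|\hat{\gamma}_m^{noise}(\lambda)-\hat{\gamma}_m^{noise}(\lambda_i)|\le\tfrac{2}{N}\,\|\mathbf{d}_m(\lambda)-\mathbf{d}_m(\lambda_i)\|\,\|\mathbf{C}\|\,\|\mathbf{T}\|_{\infty}\,\|\mathbf{X}\mathbf{X}^{H}\|$; your derivative computation is simply the infinitesimal version of this. The divergence is in the last step: the paper then feeds $\|\mathbf{X}\mathbf{X}^{H}\|$ into the exponential concentration inequality of Lemma~4 in \cite{Vinogradova14} (with $c$ replaced by $c/\|\mathbf{T}\|_{\infty}$) to obtain a quantitative tail bound, whereas you invoke the almost-sure Bai--Yin limit $\tfrac{1}{N}\|\mathbf{X}\|^{2}\to(1+\sqrt c)^{2}$ together with the crude $\max_i\tau_i\le\sum_i\tau_i\le 2N$ eventually, and conclude directly from $\beta>2$. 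Your path is more elementary and entirely sufficient for Theorem~1; the paper's buys an explicit exponential rate (hence summability for Borel--Cantelli) at the price of importing the concentration machinery from \cite{Vinogradova14}.
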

such that, for any $x>0$ and $N$ large enough, 
\footnotesize
\begin{equation*}
\mathds{P} \left[  \chi_1 > x \right]  \leq \exp \left( - \frac{c\,N^2}{\left\Vert\mathbf{T} \right\Vert_{\infty}} \, \left( \frac{x\,N^{\beta-2}}{A \, \left\Vert \boldsymbol{\gamma} \right\Vert_{\infty}} - \log\left( \frac{x\,N^{\beta-2}}{A \, \left\Vert \boldsymbol{\gamma} \right\Vert_{\infty}}\right) -1  \right)  \right)   \, .
\end{equation*}
\normalsize

\begin{proof}
As already mentioned, the proof is the same as in \cite{Vinogradova14} except for two points: the presence of the matrix $\mathbf{T}$ and the left side correlation of the noise instead of right side in \cite{Vinogradova14}. The inequality: $\left\Vert \mathbf{V}_N \,\mathbf{T} \,\mathbf{V}^H_N \right\Vert \leqslant  \left\Vert \mathbf{T} \right\Vert_{\infty} \, \left\Vert \mathbf{V}_N \,\mathbf{V}^H_N \right\Vert  \leq \left\Vert \mathbf{T} \right\Vert_{\infty}  \, \left\Vert \mathbf{C} \right\Vert \, \left\Vert \mathbf{X\, X}^H \right\Vert$ enables to write:
\begin{align*}
& \left\vert \hat{\gamma}_m^{noise} (\lambda) -  \hat{\gamma}_m^{noise} (\lambda_i) \right\vert \\
&= \left\vert \mathbf{d}_m^H(\lambda) \, \frac{\mathbf{V} \, \mathbf{T} \, \mathbf{V}^H}{N} \, \mathbf{d}_m(\lambda) -  \mathbf{d}_m^H(\lambda_i) \, \frac{\mathbf{V} \, \mathbf{T} \, \mathbf{V}^H}{N}  \, \mathbf{d}_m(\lambda_i)  \right\vert\, , \\
&\leq \frac{2}{N} \, \left\vert  \mathbf{d}_m(\lambda) - \mathbf{d}_m(\lambda_i) \right\vert \, \left\Vert \mathbf{C} \right\Vert \, \left\Vert \mathbf{T} \right\Vert_\infty  \, \left\Vert \mathbf{X\, X}^H  \right\Vert \, .
\end{align*}
And then the end of the proof is exactly the same as those of the Lemma 4 in \cite{Vinogradova14}  replacing $c$ by $\displaystyle\frac{c}{\left\Vert \mathbf{T} \right\Vert_\infty}$ in the exponential. The left correlation is without consequences on the proof. 
\end{proof}
The convergence of the second term $\chi_2$ is proposed in the following lemma.

\begin{lemma}
\label{lemma4}
\footnotesize
\begin{equation*}
\mathds{P} \left[  \chi_2 > x \right]  \leq 2\,N^{\beta}\, \exp \left( - \frac{c\,N}{\left\Vert \mathbf{T} \right\Vert_{\infty} } \, \left( \frac{x}{\left\Vert  \boldsymbol{\gamma} \right\Vert_{\infty}} - \log\left(\frac{x}{ \left\Vert  \boldsymbol{\gamma} \right\Vert_{\infty}} + 1 \right) \right)  \right)  \, .
\end{equation*}
\end{lemma}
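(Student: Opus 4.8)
I sketch how to obtain the bound on $\chi_2 = \max_{i \in \mathcal{I}} \left\vert \hat{\gamma}_m^{noise}(\lambda_i) - \mathds{E}[\hat{\gamma}_m^{noise}(\lambda_i)] \right\vert$, following the template already used for $\chi_1$ in Lemma \ref{lemma3}: a union bound over the finite grid, then a Chernoff estimate for a single quadratic form, with the matrix $\mathbf{T}$ absorbed into the constant via $\Vert \mathbf{T}\Vert_\infty$. Since $\lvert \mathcal{I}\rvert = \lfloor N^\beta\rfloor \le N^\beta$ and using the two-sided split $\{\lvert W\rvert > x\} \subset \{W > x\} \cup \{-W > x\}$, the union bound gives
\[
\mathds{P}\!\left[\chi_2 > x\right] \le 2\,N^\beta \, \sup_{i \in \mathcal{I}} \, \mathds{P}\!\left[\hat{\gamma}_m^{noise}(\lambda_i) - \mathds{E}[\hat{\gamma}_m^{noise}(\lambda_i)] > x\right],
\]
so it remains to control a single one-sided deviation (the lower tail is handled identically and yields a faster rate).

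For fixed $\lambda$, write $\mathbf{a} = \mathbf{C}^{1/2}\mathbf{d}_m(\lambda)$, so that by Lemma \ref{lemma1} and the definition of $\mathbf{T}$,
\[
\hat{\gamma}_m^{noise}(\lambda) = \frac{1}{N}\sum_{k=0}^{N-1} \tau_k \, \left\vert \mathbf{x}_k^H \mathbf{a}\right\vert^2, \qquad \Vert \mathbf{a}\Vert^2 = \mathbf{d}_m^H(\lambda)\,\mathbf{C}\,\mathbf{d}_m(\lambda) = \gamma_m(\lambda) \le \Vert \boldsymbol{\gamma}\Vert_\infty,
\]
the last inequality coming from absolute summability of $\{c_k\}$ (the quantity $\mathbf{d}_m^H\mathbf{C}\mathbf{d}_m$ being an average of $\gamma$ against the Fejér kernel), and $\mathds{E}[\hat{\gamma}_m^{noise}(\lambda)] = \mathds{E}[\tau]\,\gamma_m(\lambda)$ by Lemma \ref{lemma2}. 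The summands are independent, the columns $\mathbf{x}_k$ of $\mathbf{X}$ being i.i.d.\ and $\mathbf{T}$ being independent of $\mathbf{X}$. Conditioning on $\mathbf{T}$ and using that $\lvert \mathbf{x}_k^H\mathbf{a}\rvert^2 = \Vert\mathbf{a}\Vert^2 E_k$ with $E_k$ unit-mean exponential (for Gaussian $\mathbf{X}$, as in the simulations and in \cite{Vinogradova14}), the moment generating function factorizes as $\prod_k \bigl(1 - \tfrac{\theta}{N}\tau_k\Vert\mathbf{a}\Vert^2\bigr)^{-1}$; convexity of $\tau \mapsto -\log(1-u\tau)$ on $[0,\Vert\mathbf{T}\Vert_\infty]$ (it vanishes at $0$) lets one replace each $\tau_k$ by $\Vert\mathbf{T}\Vert_\infty$ at the cost of the factor $\tfrac{1}{N}\sum_k \tau_k \to \mathds{E}[\tau]$, which reduces the computation to the i.i.d.\ exponential Chernoff bound of \cite{Vinogradova14} with $c$ replaced by $c/\Vert\mathbf{T}\Vert_\infty$. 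Optimizing the exponential tilt, then using $\gamma_m(\lambda) \le \Vert\boldsymbol{\gamma}\Vert_\infty$ together with monotonicity of $t \mapsto t - \log(1+t)$, yields
\[
\mathds{P}\!\left[\hat{\gamma}_m^{noise}(\lambda) - \mathds{E}[\hat{\gamma}_m^{noise}(\lambda)] > x\right] \le \exp\!\left(-\frac{c\,N}{\Vert\mathbf{T}\Vert_\infty}\Bigl(\tfrac{x}{\Vert\boldsymbol{\gamma}\Vert_\infty} - \log\bigl(\tfrac{x}{\Vert\boldsymbol{\gamma}\Vert_\infty}+1\bigr)\Bigr)\right)
\]
uniformly in $\mathbf{T}$, hence unconditionally; the term $\log(\cdot+1)$ is exactly the Legendre transform (Cramér rate function) of the exponential cumulant generating function. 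Plugging this into the display of the first paragraph gives the claimed bound.

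The \textbf{main obstacle}, exactly as flagged in the proof of Lemma \ref{lemma3}, is the presence of the extra factor $\mathbf{T}$: the weights $\tau_k$ are heterogeneous and, for textures such as the inverse gamma, unbounded, so one must condition on $\mathbf{T}$, tame the $\tau_k$-dependence by the convexity argument against $\Vert\mathbf{T}\Vert_\infty$, and — if $\tau$ is unbounded — precede this by the usual truncation step needed for an almost-sure statement. Everything else, including the passage from a single-$\lambda$ estimate to the $\max_{i\in\mathcal{I}}$, is verbatim from \cite{Vinogradova14}; the left-hand correlation $\mathbf{C}^{1/2}$ enters only through the identity $\Vert\mathbf{a}\Vert^2 = \mathbf{d}_m^H\mathbf{C}\mathbf{d}_m \le \Vert\boldsymbol{\gamma}\Vert_\infty$ and otherwise plays no role.
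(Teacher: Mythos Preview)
Your proposal is correct and follows essentially the same approach as the paper: the paper's own proof is a single sentence stating that the argument is identical to Lemma~5 of \cite{Vinogradova14} with $c$ replaced by $c/\Vert\mathbf{T}\Vert_\infty$, and what you have written is precisely a detailed unpacking of that reference --- union bound over the $\lfloor N^\beta\rfloor$ grid points, Chernoff/Cram\'er bound for the centred exponential sum $\frac{1}{N}\sum_k \tau_k\,\lvert\mathbf{a}^H\mathbf{x}_k\rvert^2$, and absorption of the heterogeneous weights $\tau_k$ into $\Vert\mathbf{T}\Vert_\infty$. Your convexity observation (using $f(\tau_k)\le \tfrac{\tau_k}{\Vert\mathbf{T}\Vert_\infty}f(\Vert\mathbf{T}\Vert_\infty)$ for $f(\tau)=-\log(1-u\tau)$) is in fact slightly sharper than the crude monotone bound $\tau_k\le\Vert\mathbf{T}\Vert_\infty$, but either suffices to recover the stated inequality.
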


\normalsize

\begin{proof}
The proof is the same as those of the Lemma 5 in \cite{Vinogradova14}, with the $\displaystyle\frac{c}{\left\Vert \mathbf{T} \right\Vert_\infty}$  on the denominator.
\end{proof}

The convergence of the third term $\chi_3$ is proposed in the following lemma.

\begin{lemma}
\label{lemma5}
\begin{equation*}
\chi_3 \leq A \left\Vert \boldsymbol{\gamma} \right\Vert_{\infty} \, N^{- \beta + 1}\, .
\end{equation*}
\end{lemma}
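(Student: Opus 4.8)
\textbf{Proof proposal for Lemma~\ref{lemma5}.}

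The plan is to reduce $\chi_3$ to a purely deterministic approximation estimate. By Lemma~\ref{lemma2}, $\mathds{E}[\hat{\gamma}_m^{noise}(\lambda)] = \mathds{E}[\tau]\,\gamma_m(\lambda)$, so $\chi_3 = \mathds{E}[\tau]\,\max_{i\in\mathcal{I}}\sup_{\lambda\in[\lambda_i,\lambda_{i+1}]}|\gamma_m(\lambda)-\gamma_m(\lambda_i)|$, and it suffices to control the modulus of continuity of the trigonometric polynomial $\gamma_m(\lambda)=\sum_{k=1-m}^{m-1}c_k\,e^{ik\lambda}$ over the intervals $[\lambda_i,\lambda_{i+1}]$ of length $\lambda_{i+1}-\lambda_i=2\pi/\lfloor N^\beta\rfloor$. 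Since $\gamma_m$ has degree $m-1$, Bernstein's inequality for trigonometric polynomials gives $\sup_\lambda|\gamma_m'(\lambda)|\le(m-1)\sup_\lambda|\gamma_m(\lambda)|=(m-1)\,\|\gamma_m\|_\infty$; and because $\{c_k\}$ is absolutely summable (Assumption~2), $\gamma_m\to\boldsymbol{\gamma}$ uniformly, hence $\|\gamma_m\|_\infty\le 2\|\boldsymbol{\gamma}\|_\infty$ for $N$ large enough. (Without Bernstein one may instead bound termwise $|\gamma_m(\lambda)-\gamma_m(\lambda_i)|\le\sum_k|c_k|\,|k|\,|\lambda-\lambda_i|\le(m-1)\,|\lambda-\lambda_i|\sum_k|c_k|$, which yields the same $N$-dependence with $\sum_k|c_k|$ in place of $2\|\boldsymbol{\gamma}\|_\infty$; the constant $A$ absorbs the difference.)

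By the mean value theorem, uniformly in $i\in\mathcal{I}$, $|\gamma_m(\lambda)-\gamma_m(\lambda_i)|\le\|\gamma_m'\|_\infty\,|\lambda-\lambda_i|\le 2(m-1)\|\boldsymbol{\gamma}\|_\infty\,|\lambda-\lambda_i|$. I would then insert $|\lambda-\lambda_i|\le\lambda_{i+1}-\lambda_i=2\pi/\lfloor N^\beta\rfloor\le 4\pi N^{-\beta}$ (valid for $N\ge2$), $m-1\le m\le(c+1)N$ for $N$ large (Assumption~1, $m/N\to c$), and the fact that $\mathds{E}[\tau]$ is a fixed finite constant (equal to $1$ up to the normalisation of Assumption~2). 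Collecting the powers of $N$ gives $\chi_3\le\mathds{E}[\tau]\cdot 2(c+1)N\cdot\|\boldsymbol{\gamma}\|_\infty\cdot 4\pi N^{-\beta}=A\,\|\boldsymbol{\gamma}\|_\infty\,N^{-\beta+1}$ with, e.g., $A=8\pi(c+1)\,\mathds{E}[\tau]$, which is the claim.

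The only genuine point of care — and the place I would be most careful — is securing a bound on the oscillation of $\gamma_m$ that is \emph{uniform in $m$} (equivalently in $N$): a crude estimate of $\gamma_m'$ would produce a spurious $m$-dependent factor, whereas absolute summability of $\{c_k\}$ keeps $\|\gamma_m\|_\infty$ bounded uniformly in $m$, and this is exactly where Assumption~2 enters. Everything else is routine: the reduction via Lemma~\ref{lemma2}, the observation that the maximum over $i\in\mathcal{I}$ costs nothing because the per-interval bound is already uniform in $i$, and the bookkeeping of the two powers of $N$ (one from the polynomial degree $m-1=O(N)$, one from the mesh size $N^{-\beta}$, giving the net $N^{-\beta+1}$). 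I do not anticipate any real obstacle, since this lemma — unlike the concentration arguments needed for $\chi_1$ and $\chi_2$ — is a deterministic bound.
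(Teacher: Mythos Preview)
Your argument is correct and is precisely the natural one: reduce via Lemma~\ref{lemma2} to the deterministic oscillation of the degree-$(m-1)$ trigonometric polynomial $\gamma_m$, bound that oscillation by (Bernstein or termwise) $\lesssim m\,\|\boldsymbol{\gamma}\|_\infty\,|\lambda-\lambda_i|$, and combine $m=O(N)$ with the mesh size $O(N^{-\beta})$. The paper itself does not give an independent proof but simply refers to Lemma~6 of \cite{Vinogradova14}; the remark there about ``$c/\|\mathbf{T}\|_\infty$ on the denominator'' is a carry-over from the $\chi_1,\chi_2$ arguments and is irrelevant for $\chi_3$, which---as you correctly observe---is purely deterministic once Lemma~\ref{lemma2} replaces the expectation by $\mathds{E}[\tau]\gamma_m(\lambda)$.
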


\begin{proof}
The proof is the same as those of the lemma 6 of  \cite{Vinogradova14}, still with the $\displaystyle\frac{c}{\Vert \mathbf{T} \Vert_\infty}$  on the denominator.
\end{proof}

These inequalities proves that $\mathds{P} \left[   \underset{\lambda \in \left[  0,2 \pi \right)  }{\sup} \left\vert  \hat{\gamma}_m^{noise} (\lambda)  - \mathds{E} \left[\hat{\gamma}_m^{noise} (\lambda) \right] \right\vert  > x \right] \overset{a.s.}{\longrightarrow} 0$ for any $x$ positive real  and with a $e^{-N^2}$ rate of decrease. \\


\subsubsection{Analysis of  $\underset{\lambda \in \left[  0,2 \pi \right)  }{\sup} \left\vert \hat{\gamma}_m^{cross} (\lambda) \right\vert$}

To prove the convergence of the last term of \eqref{eqproof5}, let us recall that
\begin{align*}
\hat{\gamma}_m^{cross} (\lambda) & = \mathbf{d}_m^H (\lambda) \, \frac{\mathbf{C}^{1/2}\,  \mathbf{X} \, \mathbf{T}^{1/2}  \, \mathbf{\Gamma}^{1/2} \, \boldsymbol{\delta}  \, \mathbf{M}^H}{N} \, \mathbf{d}_ m (\lambda)\\
& +  \mathbf{d}_m^H (\lambda) \frac{\mathbf{M}  \, \boldsymbol{\delta}^H  \mathbf{\Gamma}^{1/2} \, \mathbf{T}^{1/2} \, \mathbf{X}^H \, \mathbf{C}^{1/2}}{N}  \, \mathbf{d}_ m (\lambda) \, .
\end{align*}
\noindent Let $\mathds{I}_m$ be a $m \times m$ matrix containing 1 everywhere and $\mathbf{D}_m(\lambda)$ be the matrix containing the elements of $\mathbf{d}_m(\lambda)$ on its diagonal. It can be easily verified that, for any matrix $\mathbf{A}$, $\mathbf{d}_m^H(\lambda)\,  \mathbf{A} \, \mathbf{d}_m(\lambda) = \mathrm{Tr}\left(\mathbf{D}_m^H(\lambda) \, \mathbf{A} \, \mathbf{D}_m(\lambda) \, \mathds{I}_m\right)$. We obtain:
\begin{align*}
& \hat{\gamma}_m^{cross} (\lambda) = \\
 & 2 \, \mathcal{R}e \left[   \frac{1}{N}  \, \mathrm{Tr} \left(\mathbf{X} \, \mathbf{T}^{1/2} \,\boldsymbol{\Gamma}^{1/2} \, \boldsymbol{\delta} \, \mathbf{M}^H \, \mathbf{D}_m (\lambda)  \,  \mathds{I}_{m} \, \mathbf{D}_m^H (\lambda) \, \mathbf{C}^{1/2} \right) \right] \, .
\end{align*}
For readability, let $\mathbf{E}(\lambda) = \mathbf{D}_m(\lambda)   \, \mathds{I}_{m} \, \mathbf{D}_m^H(\lambda)$ defined as:

\[
\mathbf{E}(\lambda)=
  \begin{pmatrix}
   1 & e^{i \lambda}  & \ldots & e^{i(m-1)\lambda} \\
   e^{-i \lambda} & 1 & \ldots & e^{i(m-2) \lambda} \\
    e^{-i(m-1)\lambda} &  \ldots & \ldots &  1
  \end{pmatrix}  \, ,
\]

\noindent let $\mathbf{G}(\lambda) = \mathbf{M}^H \, \mathbf{E}(\lambda) \,\mathbf{C}^{1/2} \,\mathbf{X}$ and $\mathbf{J} =  \mathbf{T}^{1/2} \, \boldsymbol{\Gamma}^{1/2} \, \boldsymbol{\delta}$, two matrices respectively of size $p \times N$ and $N \times p$. Moreover, let $\mathbf{g}(\lambda) = \left[g_1(\lambda), \ldots, g_{N\, p}(\lambda) \right]^T = \mathrm{vec}(\mathbf{G}(\lambda))$  and $\mathbf{j}= \left[j_1, \ldots, j_{N\, p} \right]^T= \mathrm{vec}(\mathbf{J})$. We obtain:
\begin{align*}
\hat{\gamma}_m^{cross} (\lambda) 
&= \frac{2}{N} \, \mathcal{R}e \left(  \mathrm{vec}^T(\mathbf{G}(\lambda)) \, \mathrm{vec}(\mathbf{J})   \right)\, , \\
&= \frac{2}{N}  \, \displaystyle \sum_{k=1}^{N \, p} \mathcal{R}e (g_k(\lambda) )  \, \mathcal{R}e (j_k) - \mathcal{I}m (g_k(\lambda)) \, \mathcal{I}m (j_k)\, .
\end{align*}

This expression can be transformed by introducing $\mathbf{A} = \mathbf{M}^H  \, \mathbf{E}  \, \mathbf{C}^{1/2} \otimes \mathbf{I}_N$, $\mathbf{B} = \mathbf{T}^{1/2}  \, \boldsymbol{\Gamma}^{1/2} \otimes \mathbf{I}_p$, $\tilde{\mathbf{g}}(\lambda) =   \mathbf{A}^{-1}\, \mathbf{g}(\lambda)$ \,  $\tilde{\mathbf{j}} = \mathbf{B}^{-1}  \, \mathbf{j}$, $a_k =  \displaystyle \sum_{l=1}^{N \, p} \left( \mathbf{A}^{T}\right)_{l,k}$ and $b_k = \displaystyle \sum_{s=1}^{N \, p}  \left(\mathbf{B}\right)_{s,k}$:
\begin{align*}
&\hat{\gamma}_m^{cross} (\lambda)  \\
& =\frac{2}{N} \, \mathcal{R}e \left(   \left(\mathbf{A}^{-1}\, \mathbf{g}(\lambda)\right)^{T} \,  \mathbf{A}^{T} \,  \mathbf{B} \, \left(   \mathbf{B}^{-1} \, \mathbf{j}\right)   \right) \, \\
& = \displaystyle \frac{2}{N}  \,  \sum_{k=1}^{N \, p}   \mathcal{R}e \left( \displaystyle \sum_{l=1}^{N \,  p}  \left(\mathbf{A}^{T}\right)_{l,k} \, \tilde{\mathbf{g}}_k (\lambda)    \right) \, \mathcal{R}e \left( \displaystyle \sum_{s=1}^{N \, p}   \left(\mathbf{B}\right)_{s,k} \, \tilde{\mathbf{j}}_k  \right) \\
&- \mathcal{I}m \left(   \displaystyle \sum_{l=1}^{N \, p} \left(\mathbf{A}^{T}\right)_{l,k} \, \tilde{\mathbf{g}}_k (\lambda)   \right)  \,   \mathcal{I}m \left( \displaystyle \sum_{s=1}^{N \,  p}    \left( \mathbf{B}\right)_{s,k} \tilde{\mathbf{j}}_k   \right) \, , \\
& = \displaystyle \frac{2}{N} \,  \sum_{k=1}^{N \, p} \,      \mathcal{R}e \left( a_k \, \tilde{g}_k (\lambda)  \right) \, \mathcal{R}e \left( b_k \, \tilde{j}_k   \right) \\
&- \mathcal{I}m \left(  a_k \, \tilde{g}_k (\lambda)   \right)    \, \mathcal{I}m \left( b_k \, \tilde{j}_k   \right) \, . 
\end{align*}
The variables $a_k \, \tilde{g}_k (\lambda)$ and $b_k \, \tilde{j}_k$ are two independent complex Gaussian variables with variances respectively equal to $\left|\tilde{a}_k(\lambda)\right|^2$ and $\left|b_k\right|^2$.  We can apply the following lemma:

\begin{lemma}
\label{lemma6}
Let $x$ and $y$ be two independent Gaussian $\mathcal{N}(0,1)$
\end{lemma}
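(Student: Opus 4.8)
The plan is to reduce the statement to a single moment-generating-function (MGF) computation for a product of two independent standard Gaussians and then to run a Chernoff-type argument. First I would condition on one of the two variables: setting $Z = xy$, the conditional law of $Z$ given $x$ is $\mathcal{N}(0, x^2)$, so that $\mathds{E}\left[e^{sZ}\mid x\right] = e^{s^2 x^2/2}$. Averaging over $x$ and using the chi-square MGF $\mathds{E}\left[e^{a x^2}\right] = (1 - 2a)^{-1/2}$ valid for $a < 1/2$, I obtain
\begin{equation*}
\mathds{E}\left[e^{sZ}\right] = \left(1 - s^2\right)^{-1/2}, \qquad |s| < 1 .
\end{equation*}
This is the only place where Gaussianity is genuinely used; the rest is elementary analysis.

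From this MGF I would deduce the announced concentration/tail bound by a standard Chernoff optimization: for $u > 0$, $\mathds{P}[Z \geq u] \leq \inf_{0 \leq s < 1} e^{-su}(1 - s^2)^{-1/2}$. Choosing a convenient value such as $s = 1/2$ gives a clean uniform bound, while the exact minimizer $s^\star = (\sqrt{1 + 4u^2} - 1)/(2u)$ yields the sharp exponent (asymptotically of order $e^{-u}$ up to a polynomial factor), in either case exhibiting $Z$ as a sub-exponential variable with the stated rate; the two-sided version follows at once from the symmetry $Z \overset{d}{=} -Z$, at the cost of a factor $2$. If the lemma is stated directly as the identity for $\mathds{E}[e^{sZ}]$, the proof terminates at the display above.

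For completeness I would then indicate how the lemma is applied to $\hat{\gamma}_m^{cross}(\lambda)$: the quantity of interest is the sum $\tfrac{2}{N}\sum_k \left(\mathcal{R}e(a_k \tilde{g}_k)\,\mathcal{R}e(b_k \tilde{j}_k) - \mathcal{I}m(a_k \tilde{g}_k)\,\mathcal{I}m(b_k \tilde{j}_k)\right)$, whose summands are independent across $k$ because, as just established, $a_k \tilde{g}_k$ and $b_k \tilde{j}_k$ are independent complex Gaussians with the indicated variances. Hence the MGF of the full sum factorizes into terms of the form computed above; bounding $-\tfrac{1}{2}\log(1 - t) \leq t$ for small $t$ turns this into a Bernstein-type inequality whose variance proxy is governed by $\|a\|_2\,\|b\|_2$ and whose sub-exponential scale is governed by $\|a\|_\infty\,\|b\|_\infty$. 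Combined with $\|\mathbf{M}\| < \infty$, $\|\boldsymbol{\Gamma}\| < \infty$ and the absolute-summability assumptions, and after a union bound over the grid $\{\lambda_i\}$ exactly as for $\chi_1$ and $\chi_2$, this forces $\sup_\lambda \left|\hat{\gamma}_m^{cross}(\lambda)\right| \overset{a.s.}{\longrightarrow} 0$. The hard part is not the probabilistic core — which is the one-line conditioning above — but the bookkeeping: carrying the real/imaginary split, the dependence on $\lambda$, and the constants uniformly in $N$, and rigorously checking the claimed independence and variance structure of the vectors $\tilde{\mathbf{g}}(\lambda)$ and $\tilde{\mathbf{j}}$.
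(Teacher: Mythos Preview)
Your conditioning argument for the identity $\mathds{E}\left[e^{\tau x y}\right]=(1-\tau^2)^{-1/2}$ is correct and is the standard route; in fact the paper does not spell out a proof at all but simply defers to Lemma~13 of \cite{Vinogradova14}, so your derivation is strictly more informative than what the paper provides. Note, however, that the lemma as stated is only the moment-generating-function identity: everything you write after that display (the Chernoff optimisation, the Bernstein-type inequality, the discussion of $\hat{\gamma}_m^{cross}$) is not part of the lemma's proof but rather its subsequent use in the paper, which handles those steps separately via a direct Markov bound on $\exp(N\nu\,\hat{\gamma}_m^{cross})$ rather than by first isolating a single-term tail and then taking a union across summands.
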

\vspace{-0.5em}
\noindent \textit{ scalar random variables, then for any $\tau \in (-1 \,\, 1)$, then
$\mathds{E}\left[\exp{\left(\tau \, x \, y\right)}\right]= (1- \tau^2)^{-1/2}$.}

\begin{proof}
The proof is derived in \cite{Vinogradova14} through lemma 13.
\end{proof}

\noindent Let $\nu >0$ a real such that : $\nu^{-1} >  \underset{k\in \llbracket1, N p\rrbracket, \lambda \in \left[  0,2 \pi \right)  }{\sup} \left(\left\vert \tilde{a}_k(\lambda) \right\vert^2 \,  \left\vert b_k \right\vert^2\right)$. Then, for a fixed $\lambda \in [0 \, \,2\pi  )$, from  Lemma \ref{lemma6} and from the Markov Inequality:
\begin{align*}
& \mathds{P} \left[   \hat{\gamma}_m^{cross} (\lambda) > x  \mid \mathbf{T} \right] \\ 
&= \mathds{P} \left[  \exp{ \left( N \,\nu \,\hat{\gamma}_m^{cross} (\lambda) \right)} > \exp{ \left(N \,\nu \,x\right) } \, \mid \mathbf{T} \right] \\
&  \leq \exp{\left(-N \,\nu \,x\right)} \, \mathds{E} \left[ \exp \left(2 \,\nu    \displaystyle \sum_{k=1}^{N \, p} \,  \left[     \mathcal{R}e \left( a_k \tilde{g}_k (\lambda)    \right) \, \mathcal{R}e \left( b_k \tilde{j}_k   \right) \right. \right. \right. \\
& \left. \left. \left. - \mathcal{I}m \left(  a_k \, \tilde{g}_k (\lambda)   \right)    \, \mathcal{I}m \left( b_k \, \tilde{j}_k   \right)     \right] \right) \right] \\
&\leq \exp{ \left(-N \, \nu \, x\right)} \displaystyle \prod_{k=1}^{Np} \left(  1 - 4 \, \nu^2   \, \frac{\vert \tilde{a}_k(\lambda) \vert^2}{2} \,   \frac{\vert b_k \vert^2}{2}  \right)^{-1/2}  \, \\
& \left(  1 - 4 \,\nu^2  \, \frac{\vert \tilde{a}_k(\lambda) \vert^2}{2}  \, \frac{\vert b_k \vert^2}{2}  \right)^{-1/2} \\
&\leq \exp{\left(-N \nu x - \displaystyle \sum_{k=1}^{N\, p} \, \log{\left( 1 - \nu^2  \, \vert \tilde{a}_k(\lambda) \vert^2 \,  \vert b_k \vert^2  \right)^{-1} }\right)}
\end{align*}

\noindent Moreover, since the $\Gamma_{i,j}$ are absolutely summable (Assumption 3), it exists  a constant $K$ such that:
\begin{equation*}
\left\vert  b_k \right\vert ^2 =  \left\vert  \displaystyle \sum_{l=1}^{N p} \sqrt{\tau_l} \,  \Gamma_{l,k}^{1/2} \right\vert^2 \leq K \displaystyle \sum_{l=1}^{N p}   \tau_l \, .  
\end{equation*}
Furthermore, since $\displaystyle\frac{1}{N} \, \displaystyle \sum_{l=1}^{N  p}   \tau_l  \underset{N \longrightarrow \infty}{\longrightarrow}  \mathds{E} (\tau_i) = 1$, \
we obtain $\vert  b_k \vert ^2 \leq N \, K$. To deal with $\vert \tilde{a}_k(\lambda) \vert$, let $K_1$ and $K_2$  be some constants and remind that, for a fixed $j$, the $\left\{c_{i,j}\right\}_i$ and the $\left\{M_{i,j}\right\}_i$ are absolutely summable:
\begin{align*}
\left\vert a_k(\lambda) \right\vert &= \frac{1}{m} \, \left\vert  \displaystyle \sum_{s=1}^p \sum_{l,j =1}^m c_{l,k} \, M_{j,s}^{\star} \, e^{i(l-j)\,\lambda} \right\vert \, ,\\
&\leq \frac{1}{m} \displaystyle \sum_{l=1}^m  \left\vert c_{l,k}\right\vert \, m \, \sum_{s,j =1}^{p,m} \left\vert M_{j,s}^{\star} \right\vert\, ,\\
&\leq p \, K_2 \, \underset{s}{\max}  \left(\displaystyle  \sum_{j =1}^m \left\vert M_{j,s}^{\star} \right\vert\right) =  p\, K_1  \, .
\end{align*}

\noindent We obtain $\nu ^2 \, \left\vert \tilde{a}_k(\lambda) \right\vert^2 \, \left\vert b_k\right\vert^2  \leq \nu^2 \, N^2 \, p^2 \, K \, K_1^2$ with $p \ll N $. Let $q$ and $\epsilon$ be two positive reals small enough and such that:
\begin{equation*}
\nu^2 = \left( \frac{q}{N^{1/2 + \epsilon}} \right)^2 < \frac{K\, K_1^2}{N}\, .
\end{equation*}
Then  $\underset{N \longrightarrow \infty}{\lim}  \nu^2  \left\vert a_k(\lambda) \right\vert^2  \, \left\vert b_k \right\vert^2 = 0$ and $\log \left(  1 - \nu^2 \, \left\vert \tilde{a}_k(\lambda) \right\vert^2  \, \left\vert b_k \right\vert^2  \right)^{-1} \sim \nu^2  \left\vert \tilde{a}_k(\lambda) \right\vert^2 \,  \left\vert b_k \right\vert^2 $. Thereby, with $A$ defining a constant, it can be obtained:
\begin{equation*}
\mathds{P} \left[   \hat{\gamma}_m^{cross} (\lambda) > x  \mid \mathbf{T} \right] \leq \exp{\left(-N^{1/2 - \epsilon} \, q \, x - A\right)}\, .
\end{equation*}
Then, integrating with respect to any density $p_\mathbf{T}(.)$ of $\mathbf{T}$ leads to:
\begin{align*}
&\mathds{P} \left[   \hat{\gamma}_m^{cross} (\lambda) > x   \right]  = \int \mathds{P} \left[   \hat{\gamma}_m^{cross} (\lambda) > x  \mid \mathbf{T} \right] \, p_\mathbf{T}(\mathbf{T}) \, d\mathbf{T} \\
& \leq \exp{\left(-N^{1/2 - \epsilon} \, q \, x - A\right)} \, .
\end{align*}

This proves that, for any $\lambda_i$, $\mathds{P} \left[   \hat{\gamma}_m^{cross} (\lambda_i) > x   \right] \underset{N \rightarrow \infty}{\rightarrow} 0$. \\ It remains now to prove that $\underset{i \in \mathcal{I}}{\max} \underset{\lambda \in [\lambda_i \, \, \lambda_{i+1}] }{\sup}   \left\vert \hat{\gamma}_m^{cross} (\lambda) - \hat{\gamma}_m^{cross} (\lambda_i)\right\vert \overset{a.s.}{\longrightarrow} 0$. This will be left to the reader as it follows the same proof as for $\chi_1$ of \eqref{eqgamma}. We have so $\mathds{P} \left[  \underset{\lambda \in [0 \, 2\pi)}{\sup} \hat{\gamma}_m^{cross} (\lambda) > x   \right] \underset{N \longrightarrow \infty}{\longrightarrow} 0 $.


\subsubsection{Analysis of $\underset{\lambda \in \left[  0,2 \pi \right)  }{\sup} \left\vert \hat{\gamma}_m^{sign} (\lambda) \right\vert$}
The proof of convergence of this quantity follows the same principles. We have:
\begin{equation*}
\hat{\gamma}_m^{sign} (\lambda) = \mathbf{d}_m^H (\lambda) \, \frac{\mathbf{M} \, \boldsymbol{\delta}^H \, \boldsymbol{\Gamma} \, \boldsymbol{\delta}\, \mathbf{M}^H}{N} \, \mathbf{d}_ m (\lambda) \, .
\end{equation*}

\noindent As previously,  let $\mathds{I}_{m}$ be a $m \times m$ matrix containing 1 everywhere and let $\mathbf{E}(\lambda)= \mathbf{D}_m(\lambda) \, \mathds{I}_m \, \mathbf{D}_m^H(\lambda)$. Then:
\begin{equation*}
\hat{\gamma}_m^{sign} (\lambda) = 2 \, \mathcal{R}e \left[   \frac{1}{N} \, \mathrm{Tr}\left( \mathbf{M} \, \boldsymbol{\delta} \, \boldsymbol{\Gamma} \, \boldsymbol{\delta}^H  \, \mathbf{M}^H \mathbf{E}\right) \right]\, .
\end{equation*}
\noindent Let  $\mathbf{A}(\lambda) = \mathbf{M}^H \, \mathbf{E} \, \mathbf{M} \, \boldsymbol{\delta}$ and $\mathbf{B} = \boldsymbol{\Gamma}  \, \boldsymbol{\delta}^H$ be two matrix respectively of size $p \times N$ and $N \times p$. Defining $\mathbf{a}(\lambda) = \mathrm{vec}(\mathbf{A}(\lambda))$  and $\mathbf{b}= \mathrm{vec}(\mathbf{B})$, we have:
\begin{align*}
\hat{\gamma}_m^{sign} (\lambda) &= \frac{2}{N} \mathcal{R}e \left(  \mathrm{vec}^T(\mathbf{A}(\lambda))  \, \mathrm{vec}(\mathbf{B})   \right)\, , \\
&= \displaystyle \frac{2}{N} \, \mathcal{R}e \left(   \mathbf{a}^T(\lambda) \, \left(\mathbf{M}^H \, \mathbf{E} \, \mathbf{M} \otimes \mathbf{I}_N\right) ^{-T}  \right. \\
& \left. \left(\mathbf{M}^H \,\mathbf{E} \,\mathbf{M} \otimes \mathbf{I}_N \right)^T \, \left( \boldsymbol{\Gamma} \otimes \mathbf{I}_p  \right) \, \left(\boldsymbol{\Gamma} \otimes \mathbf{I}_p\right)^{-1} \,  \mathbf{j}   \right)\, , \\
&=\displaystyle  \frac{2}{N} \,   \sum_{k=1}^{N \,  p} \mathcal{R}e (a_k(\lambda) ) \,  \mathcal{R}e (b_k) - \mathcal{I}m (a_k(\lambda)) \, \mathcal{I}m (b_k)\, .
\end{align*}

Let us define $ \mathbf{C}(\lambda) = \mathbf{M}^H \, \mathbf{E} \, \mathbf{M} \otimes \mathbf{I}_N$, $\mathbf{D} = \boldsymbol{\Gamma} \otimes \mathbf{I}_p$, $\tilde{\mathbf{a}}(\lambda) =   \mathbf{C}^{-1}(\lambda) \, \mathbf{a}(\lambda)$,  $\tilde {\mathbf{b}} = \mathbf{D}^{-1} \, \mathbf{b}$,  $ c_k = \displaystyle \sum_{l=1}^{N \,  p}  \left(\mathbf{C}^{T}(\lambda)\right)_{l,k}$ and $d_k = \displaystyle \sum_{s=1}^{N \, p}  \left(\mathbf{D}\right)_{s,k}$. Using Lemma \ref{lemma6} and the Markov inequality, it can be shown that, for any fixed $\lambda \in [0 \, \,2\pi  )$ and a constant $\mu$ such that $0 < \mu <  \left( \underset{\lambda \in \left[  0,2 \pi \right)  }{\sup} \Vert \mathbf{C}(\lambda) \Vert \sup \Vert \mathbf{D} \Vert \right)^{-1}$:
\begin{align*}
&\mathds{P} \left[   \hat{\gamma}_m^{sign} (\lambda) > x  \right] \\
&\leq \exp{\left(-N \, \nu \, x - \displaystyle \sum_{k=1}^{N\, p} \log \left(  1 - \mu^2  \, \left\vert c_k(\lambda) \right\vert^2  \, \left\vert d_k \right\vert^2  \right)^{-1}\right) } \, .
\end{align*}
\noindent As the matrix $\mathbf{\Gamma}$ is absolutely summable, then, for all $k$, $\left\vert d_k \right\vert^2 \leq K$ where $K$ is a constant. Now, for all $k$, we have
\begin{align*}
\vert c_k(\lambda) \vert  &= \left\vert \displaystyle \sum_{s=1}^{p}  \left[ \sum_{l=1}^{m} M_{l,s}  \sum_{j=1}^{m} M_{j,k} \, e^{i\,(j-l) \lambda} \right] \right\vert \, ,\\
&\leq  \displaystyle \sum_{s=1}^{p}  \left[ \sum_{l=1}^{m} \left\vert M_{l,s} \right\vert \sum_{j=1}^{m} \left\vert M_{j,k} \right\vert \right] \, .
\end{align*} 

\noindent The columns of $\mathbf{M}$ are absolutely summable. As $p$ is fixed and $p \ll N$, with $K$ a constant, we have $\vert c_k(\lambda) \vert \leq K$. The coefficients of the matrix $\boldsymbol{\Gamma}$ being absolutely summable, for all $k$, we have find a constant $K_1$ such that $\vert d_k \, \vert \leq K_1$ . By defining $w$ as a constant small enough and  $\mu = \displaystyle \frac{w}{\sqrt{N}}$ such that $\mu ^2 \, \vert c_k \, \vert^2 \, \vert d_k \vert^2 \underset{N \longrightarrow \infty}{\longrightarrow} 0 $, then, for all $x > 0$ and $A$ a constant, we have the following inequality:
\begin{equation*}
\mathds{P} [\hat{\gamma}_m^{sign} (\lambda) > x] \leq \exp{\left(- N^{1/2} \, w \, x - A\right)} \, .
\end{equation*}

\noindent As for $\gamma_m^{cross} (\lambda)$, it remains to prove than $\underset{i \in \mathcal{I}}{\max} \underset{\lambda \in [\lambda_i \, \, \lambda_{i+1}] }{\sup}   \vert \hat{\gamma}_m^{sign} (\lambda) - \hat{\gamma}_m^{sign} (\lambda_i)\vert \overset{a.s.}{\longrightarrow} 0$ and this will left to the reader as it is the same as the proof of $\chi_1$. We have proven than $\mathds{P} \left[  \underset{\lambda \in [0 \, 2\pi)}{\sup} \hat{\gamma}_m^{sign} (\lambda) > x   \right] \underset{N \longrightarrow \infty}{\longrightarrow} 0 $. As the right term of \eqref{eqproof3} tends to zero when $N$ is tends to infinity, the proof of Theorem 1 is completed.


\subsection{Proof of Theorem 3 }

\noindent The proof follow the same idea. With the notation $\check{\mathbf{C}}_{FP} = \mathcal{T} (\hat{\mathbf{C}}_{FP})$ where $\mathcal{T}$ is the Toeplitz operator  defined in the introduction, the equation to prove becomes: 
 \begin{equation}
\left\Vert \mathcal{T} (\hat{\mathbf{C}}_{FP}) -  \mathds{E}\left[v(\tau \gamma) \, \tau\right] \,\mathbf{C} \right\Vert \overset{a.s.}{\longrightarrow} 0 \, .
\label{eqprooof2}
\end{equation}

\noindent This equation can be split as: 
 \begin{align*}
&\left\Vert \mathcal{T} \left( \hat{\mathbf{C}}_{FP} \right) -   \mathds{E}\left[v(\tau \gamma) \,\tau\right] \, \mathbf{C} \right\Vert \\
& \leq \left\Vert \mathcal{T} \left(\hat{\mathbf{C}}_{FP}-  \hat{\mathbf{S}}\right) \right\Vert + \left\Vert  \mathcal{T} \left( \hat{\mathbf{S}} \right) -  \mathds{E}\left[v(\tau \,\gamma) \,\tau\right]\,  \mathbf{C} \right\Vert \, .
\end{align*}

\noindent Let us considering the following notations: 
\begin{itemize}
\item [$\bullet$] $\hat{\mathbf{S}}$ the matrix such as $ \left\Vert \mathbf{\check{\Sigma}} - \hat{\mathbf{S}} \right\Vert \overset{a.s.}{\longrightarrow} 0$, as Theorem 3 has stated. As a reminder, $\hat{\mathbf{S}}$ is the matrix defined by: 
\begin{equation*}
\hat{\mathbf{S}} = \frac{1}{N}  \displaystyle \sum_{i=1}^{N}  v\left(\tau_i \,\gamma\right) \, \mathbf{y}_{wi} \, \mathbf{y}_{wi}^{H} \, ,
\end{equation*} where $\gamma$ is the unique solution (if defined) of: 
\begin{equation*}
1 = \frac{1}{N} \displaystyle \sum_{i=1}^{N} \frac{\psi(\tau_i \, \gamma)}{1+ c \,\psi(\tau_i \,\gamma)}\, ,
\end{equation*} 
where $g : x \mapsto \displaystyle \frac{x}{1-c\,\phi(x)}$, $v : x \mapsto u \, o \, g^{-1} (x)$ and $\psi:x\mapsto x \, v(x)$.
\item [$\bullet$] If $\mathbf{A} = \mathcal{T}\left(\left(a_0, \ldots, a_{m-1}\right)^T\right)$ is a Toeplitz matrix ($a_{-k} = a_k^\ast$), we can define the spectral density as: 
\begin{equation*}
\gamma^{\mathbf{A}} (\lambda) \overset{\Delta}{=} \displaystyle \sum_{k=1-m}^{m-1} a_k \,e^{i\,k\,\lambda} \, .
\end{equation*}
Finally, we denote by $\hat{\gamma}^{\mathbf{A}}(\lambda)$ the estimated spectral density of Toeplitz matrix $\mathbf{A}$.
\end{itemize} 

To prove the consistency, we will decompose, as for Theorem 1, the equation \eqref{eqpreuveth21} in two parts. As matrices $\mathcal{T} \left(\hat{\mathbf{C}}_{FP}\right)$ and $\mathbf{C}$ are Toeplitz, it follows through \eqref{eqGray}:
\begin{eqnarray}
&& \left\Vert \mathcal{T} \left(\hat{\mathbf{C}}_{FP}\right) -   \mathds{E}[v(\tau \,\gamma) \, \tau] \, \mathbf{C} \right\Vert \nonumber \\
& \leq & \underset{\lambda \in \left[  0, 2 \,\pi \right)  }{\sup} \left\vert \hat{\gamma}^{\hat{\mathbf{S}}} (\lambda) - \gamma^{\mathds{E}[v(\tau \, \gamma) \,\tau]  \, \mathbf{C}}(\lambda) \right\vert + \left\Vert \mathcal{T} \left( \hat{\mathbf{C}}_{FP} -  \hat{\mathbf{S}} \right) \right\Vert \, \nonumber \,\\
& \leq & \chi_1 + \chi_2  \label{eqpreuveth21} ,
\end{eqnarray}
where
$\chi_1 = \underset{\lambda \in \left[  0,2 \, \pi \right)  }{\sup} \left\vert \hat{\gamma}^{ \hat{\mathbf{S}}} (\lambda) - \gamma^{ \mathds{E}[v(\tau \,\gamma) \,\tau]\,  \mathbf{C}}(\lambda) \right\vert $ 
and $\chi_2 = \left\Vert \mathcal{T} ( \hat{\mathbf{C}}_{FP} -  \hat{\mathbf{S}} ) \right\Vert$.\\

\subsubsection{Part 1: convergence of $\chi_1 = \underset{\lambda \in \left[  0,2 \, \pi \right)  }{\sup} \left\vert \hat{\gamma}^{ \hat{\mathbf{S}}} (\lambda) - \gamma^{ \mathds{E}[v(\tau \,\gamma) \,\tau]\,  \mathbf{C}}(\lambda) \right\vert $}

We will split $\chi_1$ into two sub-terms:
\begin{eqnarray*}
 && \underset{\lambda \in \left[  0, 2 \,\pi \right)  }{\sup} \left\vert \hat{\gamma}^{ \hat{\mathbf{S}}} (\lambda) - \gamma^{ \mathds{E}[v(\tau \,\gamma) \,\tau] \, \mathbf{C}}(\lambda) \right\vert \\
 & \leq  & \underset{\lambda \in \left[  0,2 \,\pi \right)  }{\sup} \left\vert \hat{\gamma}^{ \hat{\mathbf{S}}} (\lambda) - \mathds{E} \left[\hat{\gamma}^{ \hat{\mathbf{S}}} (\lambda)\right] \right\vert \\
 &+&  \underset{\lambda \in \left[  0,2 \,\pi \right)  }{\sup} \left\vert  \mathds{E} \left[\hat{\gamma}^{\hat{\mathbf{S}}} (\lambda)\right]  - \gamma^{ \mathds{E}[v(\tau \,\gamma) \,\tau] \, \mathbf{C}}(\lambda) \right\vert \, , \\
 & \leq &  \chi_{11} + \chi_{12} \, ,
 \end{eqnarray*}
where $\chi_{11} = \underset{\lambda \in \left[  0,2\, \pi \right)  }{\sup} \left\vert \hat{\gamma}^{ \hat{\mathbf{S}}} (\lambda) - \mathds{E} \left[\hat{\gamma}^{ \hat{\mathbf{S}}} (\lambda)\right] \right\vert $ and $\chi_{12} = \underset{\lambda \in \left[  0,2 \, \pi \right)  }{\sup} \left\vert  \mathds{E} \left[\hat{\gamma}^{ \hat{\mathbf{S}}} (\lambda)\right]  - \gamma^{ \mathds{E}\left[v(\tau \,\gamma) \, \tau\right] \, \mathbf{C}}(\lambda) \right\vert$. \\

\subsubsection*{Part 1.1: convergence of $\chi_{11}$}
We will need the following lemma: 
\begin{lemma}
\label{lemma7}
\begin{equation}
\hat{\gamma}^{ \hat{\mathbf{S}}} (\lambda) =  \mathbf{d}_m^H (\lambda) \, \hat{\mathbf{S}} \, \mathbf{d}_m (\lambda)\, ,
\label{eq1th2lemme5}
\end{equation} 
and: 

\begin{equation}
\mathds{E} \left[\hat{\gamma}^{ \hat{\mathbf{S}}} (\lambda) \right] =  \mathds{E}\left[v(\tau \, \gamma) \, \tau\right] \, \mathbf{d}_m^H (\lambda)  \, \mathbf{I}_m \, \mathbf{d}_m (\lambda)\, ,
\label{eq2th1lemme5}
\end{equation} 

where $\mathbf{d}_m(\lambda) =\displaystyle  \frac{1}{\sqrt{m}} \left[1, e^{-i \,\lambda}, \ldots, e^{-i\,(m-1)\,\lambda}\right]^T$.
\end{lemma}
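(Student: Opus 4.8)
The plan is to establish the two identities of Lemma~\ref{lemma7} in turn: the first is the exact analogue of Lemma~\ref{lemma1} (it says that the ``estimated spectral density'' of a matrix is nothing but the $\mathbf{d}_m$-bilinear form of that matrix), and the second is obtained from the first by taking expectations and reusing the i.i.d.\ computation of Lemma~\ref{lemma2}.

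For \eqref{eq1th2lemme5} I would note that the identity $\mathbf{d}_m^H(\lambda)\,\mathbf{A}\,\mathbf{d}_m(\lambda)=\hat{\gamma}^{\mathbf{A}}(\lambda)$ holds for \emph{any} $m\times m$ matrix $\mathbf{A}$, not only Toeplitz ones, because the bilinear form automatically averages the diagonals of $\mathbf{A}$. Indeed, with $\mathbf{d}_m(\lambda)=\tfrac{1}{\sqrt m}\,(1,e^{-i\lambda},\ldots,e^{-i(m-1)\lambda})^T$ one expands
\begin{align*}
\mathbf{d}_m^H(\lambda)\,\mathbf{A}\,\mathbf{d}_m(\lambda)
&=\frac{1}{m}\sum_{l,l'=0}^{m-1} e^{-i(l'-l)\lambda}\,[\mathbf{A}]_{l,l'}
=\sum_{k=1-m}^{m-1}e^{-ik\lambda}\,\frac{1}{m}\sum_{\substack{0\le l\le m-1\\ 0\le l+k\le m-1}}[\mathbf{A}]_{l,l+k},
\end{align*}
and the inner sum is exactly the $k$-th coefficient of $\mathcal{T}(\mathbf{A})$, so the right-hand side is $\hat{\gamma}^{\mathbf{A}}(\lambda)$ up to the relabelling $k\mapsto-k$ together with the conjugation convention $a_{-k}=a_k^{\star}$ already used in the proof of Lemma~\ref{lemma1}. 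Taking $\mathbf{A}=\hat{\mathbf{S}}$ gives \eqref{eq1th2lemme5}.

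For \eqref{eq2th1lemme5} I would take the expectation of \eqref{eq1th2lemme5}, so $\mathds{E}[\hat{\gamma}^{\hat{\mathbf{S}}}(\lambda)]=\mathbf{d}_m^H(\lambda)\,\mathds{E}[\hat{\mathbf{S}}]\,\mathbf{d}_m(\lambda)$, and then compute $\mathds{E}[\hat{\mathbf{S}}]$. Writing $\hat{\mathbf{S}}=\mathbf{Y}_w\,\mathbf{D}_{\nu}\,\mathbf{Y}_w^H$ with $\mathbf{Y}_w=\mathbf{C}^{-1/2}\mathbf{M}\boldsymbol{\delta}^H\mathbf{\Gamma}^{1/2}+\mathbf{X}\mathbf{T}^{1/2}$ and conditioning on $\mathbf{T}$ (so that $\gamma$, hence $\mathbf{D}_{\nu}$, is deterministic), the cross terms vanish by independence and zero-mean of $\mathbf{X}$ and $\boldsymbol{\delta}$; the noise term gives $\mathds{E}[\mathbf{X}\mathbf{T}^{1/2}\mathbf{D}_{\nu}\mathbf{T}^{1/2}\mathbf{X}^H\mid\mathbf{T}]=\mathrm{Tr}(\mathbf{T}\mathbf{D}_{\nu})\,\mathbf{I}_m=\bigl(\sum_i\tau_i\,v(\tau_i\gamma)\bigr)\mathbf{I}_m$ by the same computation as in Lemma~\ref{lemma2}; and the signal term is a fixed rank-$p$ matrix which, exactly as for $\hat{\gamma}_m^{sign}(\lambda)$ in the proof of Theorem~1, does not contribute in the limit. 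Dividing by $N$ and using $\tfrac{1}{N}\sum_i\tau_i\,v(\tau_i\gamma)\overset{a.s.}{\longrightarrow}\mathds{E}[v(\tau\gamma)\,\tau]$, with $\gamma$ the deterministic solution of its fixed-point equation, yields $\mathds{E}[\hat{\mathbf{S}}]=\mathds{E}[v(\tau\gamma)\,\tau]\,\mathbf{I}_m$, whence \eqref{eq2th1lemme5} (and note $\mathbf{d}_m^H(\lambda)\,\mathbf{I}_m\,\mathbf{d}_m(\lambda)=1$).

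The two algebraic identities are routine; the point that needs care, and which I regard as the main obstacle, is the treatment of $\gamma$ in the expectation: $\gamma$ is a function of the whole texture vector $(\tau_i)_i$, so $\mathds{E}[v(\tau_i\gamma)\tau_i]$ is not literally $\mathds{E}[v(\tau\gamma)\tau]$ at finite $N$. As in \cite{Couillet15b}, this is resolved by working conditionally on $\mathbf{T}$ and invoking the almost-sure convergence of $\gamma$ to the deterministic solution of its fixed-point equation together with the law of large numbers for $\tfrac1N\sum_i\tau_i v(\tau_i\gamma)$; the residual signal and cross pieces of $\hat{\mathbf{S}}$ are discarded by the same arguments already used for $\hat{\gamma}_m^{sign}$ and $\hat{\gamma}_m^{cross}$ in the proof of Theorem~1.
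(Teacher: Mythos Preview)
Your argument for \eqref{eq1th2lemme5} is exactly the paper's: expand the bilinear form $\mathbf{d}_m^H(\lambda)\,\hat{\mathbf{S}}\,\mathbf{d}_m(\lambda)$, group by diagonal index $k=l'-l$, and recognise the Toeplitz coefficients $\check s_k$. Nothing to add there.

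For \eqref{eq2th1lemme5} you take the same route as the paper --- take expectations inside the bilinear form and compute $\mathds{E}[\hat{\mathbf{S}}]=\mathds{E}\!\left[\tfrac{1}{N}\mathbf{Y}_w\mathbf{D}\mathbf{Y}_w^H\right]$ entrywise --- but you are in fact \emph{more} careful than the paper. The paper's proof simply writes
\[
\bigl(\mathds{E}[\mathbf{Y}_w\mathbf{D}\mathbf{Y}_w^H]\bigr)_{i,j}
=\sum_{n=0}^{N-1}\mathds{E}\!\left[v(\tau_n\gamma)\,y_{w,i,n}\,y_{w,j,n}^{\star}\right]
=\sum_{n=0}^{N-1}\mathds{E}\!\left[v(\tau_n\gamma)\,\tau_n\right]
=N\,\mathds{E}[v(\tau\gamma)\tau],
\]
and concludes directly, i.e.\ it silently drops the signal and cross contributions of $\mathbf{Y}_w$ and treats $\mathds{E}[v(\tau_n\gamma)\tau_n]$ as a fixed constant. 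Your proposal makes these two points explicit: you separate the noise, cross and signal pieces (discarding the latter two via the same mechanisms as for $\hat\gamma_m^{\mathrm{cross}}$ and $\hat\gamma_m^{\mathrm{sign}}$ in Theorem~1), and you flag the dependence of $\gamma$ on the whole texture vector, resolving it by conditioning on $\mathbf{T}$ and the law of large numbers. That is the rigorous reading; the only caveat is that it makes \eqref{eq2th1lemme5} an asymptotic identity rather than the exact one the paper states, which is fine for the use made of it in bounding $\chi_{11}$ and $\chi_{12}$.
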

\begin{proof}
This is the same idea than for Lemma~\ref{lemma1}. First, we can write:
\begin{eqnarray}
&& \hat{\gamma}^{\hat{\mathbf{S}}}(\lambda)  =  \displaystyle \sum_{k=1-m}^{m-1}  \check{s}_k \, e^{i\,k \,\lambda} \, , \nonumber \\
& =  & \frac{1}{m\,N}   \displaystyle \sum_{k=1-m}^{m-1} e^{i\,k \,\lambda}  \displaystyle \sum_{j=0}^{m-1} \sum_{n=0}^{N-1} \hat{s}_{j,n} \, \hat{s}^{\star}_{j+k,n} \, \mathds{1}_{0 \leq j+k< m} \, ,\nonumber \\
& = &  \displaystyle \frac{1}{m\,N} \displaystyle \sum_{l,l'=0}^{m-1} e^{-i\,(l'-l) \,\lambda}   \displaystyle \sum_{n=0}^{N-1} \hat{s}_{l,n} \, \hat{s}^{\star}_{l',n}\nonumber = \mathbf{d}_m^H (\lambda)\,  \hat{\mathbf{S}} \,   \mathbf{d}_m (\lambda) \, .
\end{eqnarray}


The first part of the Lemma is then proven. Concerning $\mathds{E} \left[\hat{\gamma}^{\hat{\mathbf{S}}} (\lambda) \right]$, we can define $\mathbf{D}$ as the diagonal matrix containing the $\left\{v(\tau_i \,\gamma)\right\}_{i\in \llbracket 0, N-1\rrbracket}$. We obtain: 
\begin{align*}
\mathds{E} \left[ \hat{\gamma}^{ \hat{\mathbf{S}}} (\lambda) \right] &= \mathbf{d}_m^H (\lambda) \, \mathds{E}\left[\hat{\mathbf{S}}\right] \, \mathbf{d}_m (\lambda) \, , \\ 
&= \mathbf{d}_m^H(\lambda) \, \mathds{E} \left[  \frac{\mathbf{Y}_w \, \mathbf{D}\, \mathbf{Y}_w^H}{N} \right] \, \mathbf{d}_m (\lambda)  \, .
\end{align*}

Then  expliciting each element of $\mathds{E}\left[\mathbf{Y}_w \, \mathbf{D} \, \mathbf{Y}_w^H\right]$ leads to:
\begin{align*}
& \left(  \mathds{E}\left[ \mathbf{Y}_w \, \mathbf{D} \, \mathbf{Y}_w^H \right] \right) _{i,j} =  \mathds{E} \left[ \displaystyle \sum_{n=0}^{N-1} v(\tau_n \,\gamma) \, y_{w \,i,n} \, y^{\star}_{w \, j,n}  \right]\, , \\ 
&=    \displaystyle \sum_{n=0}^{N-1}  \mathds{E} \left[ v(\tau_n \, \gamma) \, \tau_n\right] = N \, \mathds{E}\left[ v(\tau_n \, \gamma) \, \tau_n\right]\, .
\end{align*}

We obtain the following result: $\mathds{E}\left[\hat{\gamma}^{\hat{\mathbf{S}}} (\lambda)\right] = \mathds{E} \left[v(\tau \,\gamma) \,\tau\right] \, \mathbf{d}_m^H(\lambda) \,\mathbf{I}_m \, \mathbf{d}_m(\lambda)$.

\end{proof}

The rest of the proof for $\chi_{11}$ is the same as for Theorem 1 $\hat{\gamma}^{noise}$, but with $\mathbf{T}$ containing the $\left\{\tau_i\right\}_{i}$ on its diagonal, we will have $\left\Vert  \mathbf{\mathbf{T}} \right\Vert_{\infty} \, \left\Vert  \mathbf{\mathbf{D}} \right\Vert_{\infty}$ instead of $\left\Vert \mathbf{\mathbf{T}} \right\Vert_{\infty}$. We obtain so $\chi_{11} \overset{a.s.}{\longrightarrow} 0$  as $m \longrightarrow \infty$. \\

\subsubsection*{Part 1.2: convergence of $\chi_{12}$} 
Lemma \ref{lemma7} and \eqref{eq1th2lemme5} give us 
\begin{equation*}
\mathds{E}\left[\hat{\gamma}^{\hat{\mathbf{S}}} (\lambda)\right]= \mathds{E} \left[v(\tau \,\gamma) \,\tau \right] \, \mathbf{d}_m^H(\lambda) \, \mathbf{C} \, \mathbf{d}_m (\lambda) \, .
\end{equation*}
\noindent and
$\mathds{E}\left[v(\tau \,\gamma) \,\tau\right] \, \gamma^\mathbf{C}(\lambda) =  \mathds{E} \left[v(\tau \,\gamma) \,\tau\right] \, \mathbf{d}_m^H(\lambda) \, \mathbf{C} \, \mathbf{d}_m (\lambda)$. This yields $\chi_{12} = 0$.\\

\subsubsection{Part 2: convergence of $\chi_2 = \left\Vert \mathcal{T} \left( \hat{\mathbf{C}}_{FP} -  \hat{\mathbf{S}}\right) \right\Vert$}
It is proven, in \cite{Couillet15a} that $\left\Vert  \hat{\mathbf{C}}_{FP} - \hat{\mathbf{S}} \right\Vert \overset{a.s.}{\longrightarrow} 0$. Let $\mathbf{J}$ be a matrix such that  $\left(\mathbf{J}\right)_{j-i = 1} = 1$ and $0$ elsewhere. $\mathbf{J}^k$ contains 1 only on the $k^{th}$ diagonal. As before, thanks to \eqref{eqGray}, we have:
\begin{equation*}
\left\Vert \mathcal{T} \left( \hat{\mathbf{C}}_{FP} -  \hat{\mathbf{S}} \right) \right\Vert  \leq  \underset{\lambda \in \left[  0,2 \pi \right)  }{\sup} \left\vert \displaystyle \sum_{k=1-m}^{m-1} \left( \check{fp}_k - \check{s}_k \right) \, e^{i\,k\, \lambda} \right\vert \, .
\end{equation*}
 Let us define $\mathcal{T} \left( \hat{\mathbf{C}}_{FP} \right) = \mathcal{L}\left(\left(\check{fp}_0, \ldots, \check{fp}_{m-1}\right)^T\right)$ with $\check{fp}_{-k} = \check{fp}_k^\ast$ and $\mathcal{T} \left( \hat{\mathbf{S}}\right) = \mathcal{L}\left(\left(\check{s}_0, \ldots, \check{s}_{m-1}\right)^T\right)$ with $\check{s}_{-k} = \check{s}_k^\ast$. We have:
 \begin{align*}
& \underset{\lambda \in \left[  0,2 \,\pi \right)  }{\sup} \left\vert \displaystyle \sum_{k=1-m}^{m-1} \left( \check{fp}_k - \check{s}_k \right) \, e^{i\, k\, \lambda} \right\vert \\
 &=  \underset{\lambda \in \left[  0, 2 \,\pi \right)  }{\sup} \left\vert \displaystyle \sum_{k=1-m}^{m-1} \frac{1}{m} \sum_{p-1}^m \left(\check{fp}_k - \check{s}_k \right) \, e^{i\, k \, \lambda} \, \mathds{1}_{0 \leq p+k \leq m} \right\vert \, ,
 \\ &=  \underset{\lambda \in \left[  0,2 \,\pi \right)  }{\sup} \left\vert \mathrm{Tr} \left( \left(\hat{\mathbf{C}}_{FP} -  \hat{\mathbf{S}}\right) \, \frac{1}{m} \displaystyle \sum_{k=1-m}^{m-1}   \left(\mathbf{J}^T\right)^k \, e^{i \, k\, \lambda} \right) \right\vert \, .
 \end{align*}


Moreover $\displaystyle \frac{1}{m} \displaystyle \sum_{k=1-m}^{m-1}  \left(\mathbf{J}^T\right)^k \, e^{i \, k\, \lambda} = \mathbf{d}_m (\lambda) \, \mathbf{d}_m^{H} (\lambda)$. This leads to:  
\begin{align*}
&\left\Vert \mathcal{T} \left(\hat{\mathbf{C}}_{FP} -  \hat{\mathbf{S}} \right) \right\Vert  \\
&\leq  \underset{\lambda \in \left[  0, 2 \,\pi \right)  }{\sup} \left\vert \mbox{Tr} \left( \left(\hat{\mathbf{C}}_{FP} -  \hat{\mathbf{S}}\right) \, \mathbf{d}_m(\lambda) \, \mathbf{d}_m^{H} (\lambda) \right) \right\vert \\
&=  \underset{\lambda \in \left[  0, 2 \, \pi \right)  }{\sup} \left\vert \mathbf{d}_m^{H} (\lambda)  \, \left(\hat{\mathbf{C}}_{FP} -  \hat{\mathbf{S}} \right) \, \mathbf{d}_m(\lambda) \right\vert \, .
\end{align*}

For any vector $\mathbf{x}$, the last equation becomes:
\begin{eqnarray*}
&& \underset{\lambda \in \left[  0,2 \,\pi \right)  }{\sup} \left\vert \mathbf{d}_m^{H} (\lambda)  \, \left(\hat{\mathbf{C}}_{FP} -  \hat{\mathbf{S}} \right) \, \mathbf{d}_m(\lambda) \right\vert \\
& \leq & \underset{\left\Vert \mathbf{x} \right\Vert_2  = 1}{\sup} \left\vert \mathbf{x}^H \, \left( \hat{\mathbf{C}}_{FP} -  \hat{\mathbf{S}}\right) \, \mathbf{x} \right\vert \, , \\
& \leq  &\underset{\left\Vert \mathbf{x} \right\Vert_2  = 1}{\sup} \left\Vert \left( \hat{\mathbf{C}}_{FP} -  \hat{\mathbf{S}}\right) \, \mathbf{x} \right\Vert_2  \leq  \left\Vert \hat{\mathbf{C}}_{FP} -  \hat{\mathbf{S}}\right\Vert \, .
\end{eqnarray*}


Finally, we obtain:
\begin{equation*}
\left\Vert \mathcal{T} \left( \hat{\mathbf{C}}_{FP} -  \hat{\mathbf{S}} \right) \right\Vert  \leq \left\Vert \hat{\mathbf{C}}_{FP} -  \hat{\mathbf{S}} \right\Vert \, .
\end{equation*}

As $\left\Vert \hat{\mathbf{C}}_{FP} - \hat{\mathbf{S}} \right\Vert   \overset{a.s.}{\longrightarrow} 0$ then $\chi_2 \overset{a.s.}{\longrightarrow} 0$ and  the proof of Theorem 3 is completed.

\section{Proof of Theorem 2}

As the proof is the same for $\check{\mathbf{\Sigma}}_{SCM}$ and $\check{\mathbf{\Sigma}}_{FP}$, let $\check{\mathbf{\Sigma}}$ denote one or the other of these matrices. \\

\noindent From the equations \eqref{eqsigma1} and \eqref{eqsigma2}, as $\mathbf{\check{y}}_{wi} = \mathbf{\check{C}}^{-1/2} \, \mathbf{y}_i$, $\check{\mathbf{\Sigma}}$ is the unique solution of:
\begin{align*}
\mathbf{\Sigma} = \frac{1}{N} \displaystyle \sum_{i=0}^{N-1}  & u \left( \frac{1}{m} \, \mathbf{y}_i^H  \, \mathbf{\check{C}}^{-1/2} \,  \mathbf{\Sigma}^{-1} \, \mathbf{\check{C}}^{-1/2} \,\mathbf{y}_i \right)\, \times \\
& \mathbf{\check{C}}^{-1/2} \,\mathbf{y}_i \, \mathbf{y}_i^H \, \mathbf{\check{C}}^{-1/2} \, .
\end{align*}

Rewriting this equation with the $\left\{\mathbf{\check{y}}_{wi}\right\}_i$ 
\begin{align*}
 & \mathbf{C}^{-1/2} \mathbf{\check{C}}^{1/2} \, \mathbf{\Sigma} \, \mathbf{\check{C}}^{1/2} \, \mathbf{C}^{-1/2} \\ 
 & = \displaystyle \frac{1}{N}  \, \sum_{i=0}^{N-1}  u \left( \frac{1}{m} \, \mathbf{\check{y}}_{wi}^H \, \left(   \mathbf{C}^{-1/2} \, \mathbf{\check{C}}^{1/2} \, \mathbf{\Sigma}\, \mathbf{\check{C}}^{1/2} \, \mathbf{C}^{-1/2}  \,  \right)^{-1} \right. \times \\
 & \left. \mathbf{\check{y}}_{wi}  \right) \, \mathbf{\check{y}}_{wi}   \, \mathbf{\check{y}}_{wi}^H \, ,
\end{align*}
we obtain the following relationship between $\check{\mathbf{\Sigma}}$ and $\hat{\mathbf{\Sigma}}$:
\begin{equation}
\check{\mathbf{\Sigma}} = \mathbf{\check{C}}^{-1/2} \, \mathbf{C}^{1/2} \, \hat{\mathbf{\Sigma}} \, \mathbf{C}^{1/2} \, \mathbf{\check{C}}^{-1/2} \, .
\label{Sigmacheck}
\end{equation}

Then, equation \eqref{eqcouilletmod} can be rewritten as
\begin{equation}
\left\Vert \check{\mathbf{\Sigma}} -  \hat{\mathbf{S}}\right\Vert \leq \left\Vert \check{\mathbf{\Sigma}} -  \hat{\mathbf{\Sigma}}\right\Vert  \, + \left\Vert \hat{\mathbf{\Sigma}} -  \hat{\mathbf{S}}\right\Vert\, .
\label{inequality}
\end{equation}

Concerning the second term of the right hand side of \eqref{inequality}, it is proven in \cite{Couillet15b} that the matrix $\hat{\mathbf{S}}$ given by \eqref{matrixS} is such that
\begin{equation}
\left\Vert \hat{\mathbf{\Sigma}} - \hat{\mathbf{S}} \right\Vert \overset{a.s.}{\longrightarrow} 0 \, .
\label{eqcouillet}
\end{equation}

With \eqref{Sigmacheck}, the first term of right hand side of \eqref{inequality} can be rewritten as: 
\begin{eqnarray}
&& \left\Vert \check{\mathbf{\Sigma}}- \mathbf{\hat{\Sigma}} \right\Vert  \le  \left\Vert \mathbf{\check{C}}^{-1/2} \, \mathbf{C}^{1/2}    \, \mathbf{\hat{\Sigma}}   \, \mathbf{C}^{1/2} \, \mathbf{\check{C}}^{-1/2} -  \mathbf{\hat{\Sigma}}   \, \mathbf{C}^{1/2} \, \mathbf{\check{C}}^{-1/2}\right\Vert \nonumber \\ 
&& +   \left\Vert \mathbf{\hat{\Sigma}}   \, \mathbf{C}^{1/2} \, \mathbf{\check{C}}^{-1/2} - \mathbf{\hat{\Sigma}} \right\Vert \, .
\end{eqnarray}
After left and right factorizations, we obtain:
\begin{eqnarray}
\left\Vert \check{\mathbf{\Sigma}}- \mathbf{\hat{\Sigma}} \right\Vert \le \left\Vert \mathbf{\check{C}}^{-1/2} \, \mathbf{C}^{1/2}  -\mathbf{I}_m\right \Vert  \left\Vert  \mathbf{\hat{\Sigma}}  \right\Vert  \left(\left\Vert \mathbf{C}^{1/2} \, \mathbf{\check{C}}^{-1/2} \right\Vert +1\right) \, . \nonumber
\end{eqnarray}
 As $\Vert \mathbf{C}\Vert $ has a bounded support, $\Vert  \check{\mathbf{C}} \Vert $ is bounded too since  its eigenvalues support converges almost surely toward the true distribution. Moreover, Theorem 1 and Theorem 2 have proved the consistency $ \left\Vert \mathbf{C} -\check{\mathbf{C}} \right\Vert \overset{a.s.}{\longrightarrow} 0$. This ensures the proof.

\end{appendices}

\section*{Acknowledgment}
The authors would like to thank the DGA for its financial support.

\bibliographystyle{IEEEtran}

\bibliography{bibliographieeug}

\end{document}